\newtheorem{lemma}{Lemma}
\newtheorem{proposition}{Proposition}
\theoremstyle{definition}
\newtheorem{definition}{Definition}
\theoremstyle{remark}
\newtheorem{remark}{Remark}
\begin{document}

\begin{center}
{\LARGE Cross-Ownership as a Structural Explanation for Over- and Underestimation of Default Probability}
\end{center}

\bigskip

\newcommand{\footremember}[2]{%
   \footnote{#2}
    \newcounter{#1}
    \setcounter{#1}{\value{footnote}}%
}
\newcommand{\footrecall}[1]{%
    \footnotemark[\value{#1}]%
}

\begin{center}
  { \large Sabine Karl\footremember{bla}{Institute of Mathematics, University of Wuerzburg, Emil-Fischer-Strasse 30, 97074 Wuerzburg, Germany}%
       and Tom Fischer\footrecall{bla} \\
			University of Wuerzburg}
	\end{center}

\bigskip

\begin{abstract}
Based on the work of \citet{Suz02}, we consider a generalization of Merton's asset valuation approach \citep{Mer74} in which two firms are linked by cross-ownership of equity and liabilities. 
 Suzuki's results then provide  no arbitrage prices of  firm values, which are derivatives of exogenous asset values. In contrast to the Merton model, the assumption of lognormally distributed assets does not result in lognormally distributed firm values, which also affects the corresponding probabilities of default. In a simulation study we see that, depending on the type of cross-ownership,  the lognormal model can lead to both, over- and underestimation of the actual probability of default of a firm under cross-ownership. In the limit, i.e. if the levels of cross-ownership tend to their maximum possible value,  these  findings can be shown theoretically as well. Furthermore, we  consider the default  probability of a firm in general, i.e. without a distributional assumption, and show that the lognormal model is often able to yield only a limited range of probabilities of default, while the actual probabilities may take any value between 0 and 1.
\end{abstract}

\textbf{Keywords:} counterparty risk; credit risk; cross-ownership; firm valuation; heavy tails; structural model.

\section{Introduction}

Published in 1974, Merton's model of asset valuation  revolutionized     academic finance    as well as  the practice of both asset valuation and credit risk management. Since then, many  refinements and  extensions have been made (an overview may be found in a paper by \citet{Boh00}), but the crucial insight that the value of a firm's equity can be regarded as a European call option on the firm's asset value with strike price equal to the firm's debt, if the firm's financial structure is sufficiently simple, is still inherent to all these subsequent versions.
 Merton's approach not only provides an intuitive and tractable framework to value a firm's equity and debt, it also laid  the foundation to a wide class of credit risk models. As these models are characterized by the consideration of the firm's financial structure in order to derive the  probabilities of default, they are commonly referred to as ``structural models''. However, in their basic form, such models are applicable to a single firm only, and hence unable to explain the fact that credit events of  firms do not occur independently of each other, as  becomes evident in the work of \citet{Luc95}. Within structural models, one of the first  approaches of taking this  finding into account was to consider correlated asset values, which is for example described by \citet{Zho01}.  According to \citet{Gie04}, however, the consideration of asset correlations  only explains what he calls ``cyclical default correlation'' originating from the fact that firms are  subject to common macroeconomic factors. In contrast to that, correlation caused by what he calls ``credit contagion'' is not captured. Credit contagion arises from ``direct ties between firms'', as an example he describes the situation where one firm has given a trade credit to the other firm.  Although \citet{Gie04} writes that ``[i]t is easy to imagine that [...] economic distress of one firm can have an immediate adverse effect on the financial health of that firm's business partners'', this connection enters his model only indirectly by an incomplete information approach. Also \citet{Luc95} acknowledges the possibility that ``default correlation is caused if one firm is a creditor of another''. 
Although there are numerous articles on counterparty risk and financial contagion, see \citet{Jar95} and \citet{Duf96} and references therein,  current credit risk models  (an overview is provided by \citet{Cro00}) apparently do not take this fact into account explicitly, i.e. on a  structural level and in a multi-lateral way.  
One reason for this might be that only in the last decade models of asset valuation came to existence that directly include such relationships. \\

While Merton's model served as the basis for credit risk models for a single firm, the works of \citet{Eis01}, \citet{Suz02}, \citet{Els07} and \citet{Fis12} can be used to consider individual and joint probabilities of default by  direct incorporation of systemic risk caused by the structure of the firms' relations between each other. Financial claims and obligations   as described above can be subsumed under the general term of cross-ownership, which means that in a system of firms, these firms are linked to each other in that every firm's balance sheet contains financial assets or liabilities, no matter if short-term or long-term, issued by other firms in the system. 
In particular,  \citet{Eis01}, \citet{Suz02}, \citet{Els07} and \citet{Fis12} are all concerned with the problem of how to value such firms and any of their liabilities linked to each other by either  cross-ownership of equity and debt, or both. 
Under cross-ownership, credit contagion may not only occur unidirectional, but 
if a chain reaction forces an initially healthy firm to default, this event might revert to  the triggering firm, causing its financial situation to deteriorate even further, ``potentially a financial vicious circle'' \citep{Fis12}. In  \citet{Fis12}, a rather general setup of cross-ownership between  $n$ firms having $m$ liabilities is considered, which can include debt and derivatives, of differing seniority. The main result consists of an existence and uniqueness theorem of no-arbitrage prices of equity and liabilities, which can be computed by a fixed point iteration. The framework of \citet{Suz02} can be seen as a special case of \citet{Fis12}, since he examines the situation of $n$ firms having a single, homogeneous class of zero-coupon debt only. Thus, the work of \citet{Suz02} directly extends Merton's ideas to the case of two or more firms linked by cross-ownership. However, this approach has not yet been widely accepted or incorporated as a standard in  asset valuation or credit risk management.\\
 From both, an academic and  practical point of view, the question arises to what extent this neglect of cross-ownership between  firms  can affect the resulting firm values and estimated probabilities of default. Recall that \citet{Mer74} starts from  a single class of exogenous assets following a geometric Brownian motion, which means that asset values are lognormally distributed at maturity. Under cross-ownership,  however, the assets of a firm do not only consist of exogenous assets, but also of endogenous assets stemming from cross-ownership, for instance shares or bonds issued by another firm. It can be shown that firm values under cross-ownership, i.e. the total assets of a firm, are non-trivial  derivatives of exogenous asset values (see \citet{Suz02} and \citet{Fis12}, for example). Hence, firm values are generally not lognormally distributed anymore, in contrast to Merton's model. 
Returning  to the problem of determining probabilities of default, the assumption of generally lognormally distributed firm values when exogenous assets follow a lognormal distribution can  furthermore lead to incorrect probabilities of default, whether for the single firms or joint probabilities of default.  \\

Our work can be seen as a direct continuation of the work of \citet{Suz02} for two firms, since we also consider  cross-ownership scenarios with two firms having a single class of zero-coupon debt only. Based on Suzuki's formulas of equity and debt prices, we will first be  concerned with the resulting firm values and probabilities of default. 
More precisely, we  examine the  consequences of applying Merton's model of firm valuation to each firm separately, i.e. without consideration of cross-ownership.  Unfortunately, it seems to be impossible to obtain a closed-form solution of the distribution of firm values (and hence exact probabilities of default) under cross-ownership. Thus, we conduct a simulation study (cf. Section \ref{simstudy}) that compares the  probabilities of default resulting from Suzuki's model and from the lognormal distribution, the distribution of firm values resulting from Merton's approach.  In Section \ref{limit}, we provide a theoretical analysis of these probabilities in the limit, which means that we let the degree of cross-ownership converge to its maximum value. In this case, the distribution of firm values can be derived analytically, and the mathematical results match our empirical findings. In Section \ref{genprob}, we abandon any distributional assumptions with respect to exogenous asset values and analyze the probabilities of default under a rather general setup. Finally, Section \ref{outlook} summarizes our results and mentions some possible extensions.

\section{Firm Valuation with and without Cross-Ownership}
\subsection{Merton's model} \label{Merton}

In Merton's asset valuation model \citep{Mer74}, a single firm is assumed to have one class of exogenously priced assets $a$ and a certain amount of zero-coupon debt $d$ due at some future time $T$.  In this context, ``exogenously'' means that the value is independent of the firm's capital structure. At maturity, debt has to be paid back, but if the asset value has fallen  below the face value of debt at this time, the firm is said to be in default and all assets are handed over to the creditor. Thus, the creditor receives the minimum of $d$ and $a$, which we call the recovery value of debt, $r$. The value of equity, $s$, then is the value of the remaining assets, so, at maturity:  
\begin{eqnarray}
r &= \min\{d,a\} &= \text{recovery value of debt}, \label{rmert}\\
s &= (a-d)^+ \;\;&= \text{value of equity.} \label{smert}
\end{eqnarray}

\begin{table}
\begin{center}
\begin{tabular}{|c|c|}
\hline 
{Assets} &{ Liab.} \\ \hline 
$a$ & $s$ \\   & $r$ \\\hline
\end{tabular}
\caption{Single Firm: Balance sheet at maturity}
\label{bs-single}
\end{center}
\end{table}
The firm's balance sheet at maturity is given in Table \ref{bs-single}. 
Based on this balance sheet, we make the following definition.
\begin{definition} \label{fv-merton}
Based on \eqref{rmert} and \eqref{smert}, we define the \textit{firm value}~$v$ of a firm as the firm's total asset value:
\[ v:= r+s=a. 
\]
\end{definition}
A generalization of this firm value to the case of two firms linked by cross-ownership is derived in the next section.

\subsection{Suzuki's model}
\subsubsection{Cross-Ownership Fractions and Types of Cross-Ownership}
Let us now consider two firms linked by cross-ownership (``XOS''). Then  the assets of each firm do not only consist of an exogenous asset $a$, but also of financial assets issued by the other firm, for example in form of bonds or shares. As in the case of a single firm, we assume each firm to have a certain amount of zero-coupon debt with face value $d_1$ and $d_2$, respectively. Let $s_i$ and $r_i$ denote the no-arbitrage prices of equity and recovery value of debt of firm~$i$, $i=1,2$. 
Then the value of  a firm's assets originating from cross-ownership can be written as
\[
 \underbrace{M^{\rm{s}}_{ij} \cdot s_j}_{\text{cross-owned equity }} + \underbrace{M^{\rm{d}}_{ij} \cdot r_j,}_{\text{cross-owned debt}}
\]
where
$M^{\rm{s}}_{ij}$ and $M^{\rm{d}}_{ij}$ stand for the fraction that firm~$i$ owns of firm~$j$'s equity and debt, respectively. Note that the value of cross-owned debt is a fraction of the other firm's recovery value of debt, and not its face value of debt. \\

 In general, the so-called cross-ownership fractions $M^{\rm{s}}_{ij}$ and $M^{\rm{d}}_{ij}$ ($i=1,2; j=1,2; i\neq j$) can take values in the interval $[0,1]$. Based on their exact value, we define  three types of cross-ownership.
\begin{definition} \label{defxos}
The two firms are said to be linked by 
\begin{enumerate}
	\item  \label{typeI}\textbf{cross-ownership of equity only}, if
\[
M^{\rm{s}}_{1,2}>0,\; M^{\rm{s}}_{2,1}>0, \;M^{\rm{d}}_{1,2}=M^{\rm{d}}_{2,1}=0,
\]
that is  each firm holds a part of the other firm's equity, but none of its debt;
\item  \label{typeII}\textbf{cross-ownership of debt only}, if
\[
M^{\rm{s}}_{1,2}=M^{\rm{s}}_{2,1}=0,\; M^{\rm{d}}_{1,2}>0, \;M^{\rm{d}}_{2,1}>0,
\]
that is  each firm holds a part of the other firm's debt, but none of its equity;
\item  \label{typeIII} \textbf{simultaneous cross-ownership of equity and debt}, if
\[
\min\{M^{\rm{s}}_{1,2}, M^{\rm{s}}_{2,1},M^{\rm{d}}_{1,2}, M^{\rm{d}}_{2,1}\}>0,
\]
that is  each firm holds a part of both the other firm's equity and debt. 
\end{enumerate}
\end{definition}
\begin{remark}\label{extendeddef}
The definition of cross-ownership of both equity and debt (type \ref{typeIII}) could be extended to scenarios where exactly one of the four cross-ownership  fractions equals 0, or where firm~1 holds a part of firm~2's equity and firm~2 holds a part of firm~1's debt, or vice versa, i.e.
\begin{align*}
\min\{M^{\rm{s}}_{1,2},\;M^{\rm{d}}_{2,1}\}>0,&\quad  M^{\rm{s}}_{2,1}=M^{\rm{d}}_{1,2}=0,\\
\intertext{or}
M^{\rm{s}}_{1,2}=M^{\rm{d}}_{2,1}=0,&\quad   \min\{M^{\rm{s}}_{2,1},\;M^{\rm{d}}_{1,2}\}>0.
\end{align*}
In order to avoid case differentiations, we prefer Definition \ref{defxos}\ref{typeIII}.
\end{remark}

Note that  our definition of cross-ownership would not impose any restrictions with respect to the type of debt that is cross-owned. For example, a firm could  hold a derivative on any underlying considered in the model, e.g. exogenous assets. However, following  \citet{Suz02}, we will assume all liabilities to be zero-coupon-bonds with identical maturity. 

Furthermore, we will assume that no firm's equity or debt is completely owned by the other firm, but that some part of the equity and debt is held by a firm or investor outside of the system of the two firms. For the cross-ownership fractions, this implies that 
\[
\max\{M^{\rm{s}}_{1,2},M^{\rm{s}}_{2,1},M^{\rm{d}}_{1,2},M^{\rm{d}}_{2,1}\}<1.
\]
Furthermore, we suppose that no firm holds a part of its own equity or debt. \\
 
 The firms' balance sheets at maturity  under cross-ownership are given in Table~\ref{bs-xos}. 
\begin{table}%
\begin{center}
\begin{minipage}{0.4\textwidth}
\begin{flushright}
\begin{tabular}{|c|c|}
\multicolumn{2}{c}{Firm 1} \\ \hline
 \renewcommand{\arraystretch}{1.0}
{\footnotesize Assets} &{\footnotesize Liab.} \\ \hline  \renewcommand{\arraystretch}{1.5}
$a_1$ &  $s_1$\\ 
$\,M^{\rm{s}}_{1,2} \times s_2$ & $r_1$\\
$\,M^{\rm{d}}_{1,2}\times r_2$ & \\ \hline
\end{tabular}
\end{flushright}
\end{minipage} \hspace{2mm}
\begin{minipage}{0.4\textwidth}
\begin{flushleft}
\begin{tabular}{|c|c|}
\multicolumn{2}{c}{Firm 2} \\ \hline
 \renewcommand{\arraystretch}{1.0}
{\footnotesize Assets} &{\footnotesize Liab.} \\ \hline  \renewcommand{\arraystretch}{1.5}
$a_2$ & $s_2$ \\ 
$\,M^{\rm{s}}_{2,1} \times s_1$ & $r_2$\\
$\,M^{\rm{d}}_{2,1}\times r_1$ & \\ \hline
\end{tabular}
\end{flushleft}
\end{minipage}
\caption{Two firms under XOS: Balance sheets at maturity}
\label{bs-xos}
\end{center}
\end{table}
It is now clear that the value of firm~1 also depends on the financial health of firm~2: if firm~2 defaults, this will affect both, the value of its equity and the recovery value of its debt, which will possibly  be smaller than the actual outstanding amount. Therefore, the total  asset value of firm~1 will decrease and thus  firm~1 might also get into trouble, which again might affect firm~2 in a negative way. 
 If we applied Merton's model of firm valuation  to each firms separately in order to obtain no-arbitrage prices of equity and debt, we would ignore this circular dependence between the two firms. The work of \citet{Suz02} shows how to overcome this problem by applying Merton's idea to both firms simultaneously. 

\subsubsection{Suzuki's equations}

It is clear from Table \ref{bs-xos} that the total assets  of firm~$i$ consist of an exogenous and an endogenous part:
\begin{align*}
a_i^* &:= \underbrace{a_i}_{\parbox{1.5cm}{\centering exogenous assets}} +  \underbrace{M^{\rm{s}}_{ij}\,s_j + M^{\rm{d}}_{ij} r_j}_{\parbox{1.5cm}{\centering endogenous assets}} \geq 0, \quad \quad i=1,2;\; j=1,2;\; i\neq j,
\end{align*}
where ``endogenous'' means that the price is determined within the system of the two firms. \\
If we now apply Merton's approach to both firms simultaneously, we obtain in analogy to \eqref{rmert} and \eqref{smert} the following system of equations:
\begin{alignat}{2}
r_i &=  \min\{d_i,a_i^*\}& &=\min\{d_i,a_i + M^{\rm{s}}_{ij}\,s_j + M^{\rm{d}}_{ij} r_j\}, \label{eqri}\\
s_i &= (a_i^*-d_i)^+&&=(a_i+ M^{\rm{s}}_{ij}\,s_j + M^{\rm{d}}_{ij} r_j - d_i)^+, \label{eqsi}
\end{alignat}
with $i=1,2; \;j=1,2; \; i\neq j.$ \\
As before, the recovery value of debt of a firm still is the minimum of the firm's liability and total asset value, but under cross-ownership this recovery value now also depends on the other firm's equity value and recovery value of debt. Similarly,  the value of equity at maturity is now influenced by the other firm's equity and recovery value of debt. The recovery value of debt (as part of a solution of \eqref{eqri} and \eqref{eqsi}) and the total asset value $a_i^*$ of a firm are always non-negative. A proof can be found in \citet{Fis12}.\\ 
\citet{Suz02} solves this system of four non-linear equations with four unknowns. The resulting explicit formulas for $r_i$ and $s_i$ are given in the following lemma.

\begin{lemma} \label{defrs} The system \eqref{eqri}--\eqref{eqsi} is solved by
\begin{align*}
r_1 &= \begin{cases}
d_1, & (a_1,a_2) \in A_{\rm{ss}},\\
d_1, & (a_1,a_2) \in A_{\rm{sd}},\\
\frac{1}{1- M^{\rm{s}}_{1,2}M^{\rm{d}}_{2,1}} (a_1+ M^{\rm{s}}_{1,2} a_2 +(M^{\rm{d}}_{1,2}-M^{\rm{s}}_{1,2})d_2),\quad \quad\,\quad \quad \quad \quad \quad \quad \quad& (a_1,a_2) \in A_{\rm{ds}},\\
\frac{1}{1-M^{\rm{d}}_{1,2} M^{\rm{d}}_{2,1}} (a_1 + M^{\rm{d}}_{1,2} a_2),& (a_1,a_2) \in A_{\rm{dd}},
 \end{cases}\\
r_2 &= \begin{cases}
d_2, & (a_1,a_2) \in A_{\rm{ss}},\\
\frac{1}{1- M^{\rm{s}}_{2,1}M^{\rm{d}}_{1,2}} (M^{\rm{s}}_{2,1} a_1+  a_2 +(M^{\rm{d}}_{2,1}-M^{\rm{s}}_{2,1})d_1), \quad \quad\,\quad \quad \quad \quad \quad \quad \quad & (a_1,a_2) \in A_{\rm{sd}},\\
d_2,& (a_1,a_2) \in A_{\rm{ds}},\\
\frac{1}{1-M^{\rm{d}}_{1,2} M^{\rm{d}}_{2,1}} ( M^{\rm{d}}_{2,1}a_1 + a_2),& (a_1,a_2) \in A_{\rm{dd}},
 \end{cases}\\
s_1 &= \begin{cases}
\frac{1}{1-M^{\rm{s}}_{1,2} M^{\rm{s}}_{2,1}}(a_1+ M^{\rm{s}}_{1,2}a_2 + (M^{\rm{s}}_{1,2}M^{\rm{d}}_{2,1}-1)d_1+(M^{\rm{d}}_{1,2}-M^{\rm{s}}_{1,2})d_2, & (a_1,a_2) \in A_{\rm{ss}},\\
\frac{1}{1-M^{\rm{s}}_{2,1} M^{\rm{d}}_{1,2}}(a_1+M^{\rm{d}}_{1,2} a_2+(M^{\rm{d}}_{1,2} M^{\rm{d}}_{2,1}-1)d_1), &  (a_1,a_2) \in A_{\rm{sd}},\\
0, & (a_1,a_2) \in A_{\rm{ds}},\\
0,& (a_1,a_2) \in A_{\rm{dd}},
 \end{cases}\\
s_2 &= \begin{cases}
\frac{1}{1-M^{\rm{s}}_{1,2} M^{\rm{s}}_{2,1}}(M^{\rm{s}}_{1,2}a_1+ a_2 +(M^{\rm{d}}_{2,1}-M^{\rm{s}}_{2,1})d_1+ (M^{\rm{s}}_{2,1}M^{\rm{d}}_{1,2}-1)d_2, & (a_1,a_2) \in A_{\rm{ss}},\\
0, &  (a_1,a_2) \in A_{\rm{sd}},\\
\frac{1}{1-M^{\rm{s}}_{2,1} M^{\rm{d}}_{2,1}}(M^{\rm{d}}_{2,1}a_1+ a_2+(M^{\rm{d}}_{1,2} M^{\rm{d}}_{2,1}-1)d_2), & (a_1,a_2) \in A_{\rm{ds}},\\
0,& (a_1,a_2) \in A_{\rm{dd}},
 \end{cases}
\end{align*}
with (what we call ``Suzuki areas'')
\begin{align}
A_{\rm{ss}}= \{(a_1,a_2) \geq 0: a_1+M^{\rm{s}}_{1,2} a_2&\geq(1-M^{\rm{s}}_{1,2} M^{\rm{d}}_{2,1})d_1+(M^{\rm{s}}_{1,2} -M^{\rm{d}}_{1,2})d_2,\notag \\  M^{\rm{s}}_{2,1} a_1+a_2&\geq(M^{\rm{s}}_{2,1}-M^{\rm{d}}_{2,1})d_1+(1-M^{\rm{s}}_{2,1} M^{\rm{d}}_{1,2})d_2\}, \label{Ass}\\
A_{\rm{sd}} = \{(a_1,a_2) \geq 0: a_1+M^{\rm{d}}_{1,2} a_2 &\geq (1-M^{\rm{d}}_{1,2} M^{\rm{d}}_{2,1})d_1, \notag \\ M^{\rm{s}}_{2,1} a_1+a_2 &< (M^{\rm{s}}_{2,1}-M^{\rm{d}}_{2,1})d_1+(1-M^{\rm{s}}_{2,1} M^{\rm{d}}_{1,2} )d_2\}, \label{Asd}\\
A_{\rm{ds}} = \{(a_1,a_2) \geq 0: a_1+M^{\rm{s}}_{1,2} a_2 &<(1-M^{\rm{s}}_{1,2} M^{\rm{d}}_{2,1})d_1+(M^{\rm{s}}_{1,2} -M^{\rm{d}}_{1,2} )d_2, \notag \\ 
M^{\rm{d}}_{2,1} a_1+a_2 &\geq (1-M^{\rm{d}}_{1,2} M^{\rm{d}}_{2,1})d_2 \}, \label{Ads}\\
A_{\rm{dd}} = \{(a_1,a_2) \geq 0: a_1+M^{\rm{d}}_{1,2} a_2 &< (1-M^{\rm{d}}_{1,2} M^{\rm{d}}_{2,1})d_1, \notag \\ M^{\rm{d}}_{2,1} a_1+a_2 &< (1-M^{\rm{d}}_{1,2} M^{\rm{d}}_{2,1})d_2 \} \label{Add}.
\end{align} 
\end{lemma}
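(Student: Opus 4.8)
The plan is to prove the lemma by a direct case analysis over the four possible solvency configurations of the two firms, verifying in each that the stated formulas satisfy \eqref{eqri}--\eqref{eqsi} and that the set on which a configuration is self-consistent is exactly the corresponding Suzuki area. Recall from \citet{Fis12} that for every pair $(a_1,a_2)\geq 0$ of exogenous asset values the system \eqref{eqri}--\eqref{eqsi} admits a unique solution with $r_i,a_i^*\geq 0$. In any solution each firm is either \emph{solvent}, so that $a_i^*\geq d_i$ and hence $r_i=d_i$ and $s_i=a_i^*-d_i\geq 0$, or \emph{in default}, so that $a_i^*<d_i$ and hence $s_i=0$ and $r_i=a_i^*$. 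Labelling firm~1 and firm~2 by ``s'' or ``d'' accordingly produces precisely the four configurations ss, sd, ds and dd.

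For a fixed configuration I would substitute the corresponding determined values into the right-hand sides of \eqref{eqri}--\eqref{eqsi}. In the ss-case, for example, $r_1=d_1$ and $r_2=d_2$ are already fixed, so only the equations for $s_1$ and $s_2$ remain, and they form a linear $2\times2$ system whose coefficient matrix has determinant $1-M^{\rm{s}}_{1,2}M^{\rm{s}}_{2,1}$; the configurations sd, ds and dd reduce analogously to linear systems with determinants $1-M^{\rm{s}}_{2,1}M^{\rm{d}}_{1,2}$, $1-M^{\rm{s}}_{1,2}M^{\rm{d}}_{2,1}$ and $1-M^{\rm{d}}_{1,2}M^{\rm{d}}_{2,1}$, respectively. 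Since all four cross-ownership fractions lie in $[0,1)$, each determinant is strictly positive, so every system is uniquely solvable; carrying out the elimination reproduces the closed-form expressions for $r_1,r_2,s_1,s_2$ stated in the lemma.

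It then remains to describe, for each configuration, the set of $(a_1,a_2)$ on which it is self-consistent. For a firm assumed solvent I would impose $s_i\geq 0$, and for a firm assumed in default I would impose $r_i<d_i$ (equivalently $a_i^*<d_i$). Because the relevant determinant is positive, multiplying such a condition by it clears the denominator without reversing the inequality, and inserting the formulas just derived turns the condition into a linear inequality in $(a_1,a_2)$; a short computation matches these with \eqref{Ass}--\eqref{Add}. Here the borderline case $a_i^*=d_i$, where $s_i=0$, is counted as ``solvent'', which is exactly why the solvency conditions enter as weak inequalities ($\geq$) and the default conditions as strict ones ($<$), so that each boundary point is assigned to precisely one area.

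The main obstacle I anticipate is the bookkeeping showing that the four areas genuinely tile the quadrant $\{(a_1,a_2)\geq 0\}$, i.e. are pairwise disjoint and jointly exhaustive. The adjacent pairs $(A_{\rm{ss}},A_{\rm{ds}})$, $(A_{\rm{ss}},A_{\rm{sd}})$, $(A_{\rm{sd}},A_{\rm{dd}})$ and $(A_{\rm{ds}},A_{\rm{dd}})$ are separated by complementary ($\geq$ versus $<$) conditions on a single common line, so their disjointness is immediate; the non-adjacent pairs $A_{\rm{ss}}\cap A_{\rm{dd}}$ and $A_{\rm{sd}}\cap A_{\rm{ds}}$ are delicate, because the four inequalities involved sit on different lines and a contradiction must be extracted by combining them with the bounds $0\leq M^{\rm{s}}_{ij},M^{\rm{d}}_{ij}<1$. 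If instead one appeals directly to the existence-and-uniqueness theorem of \citet{Fis12}, this tiling becomes automatic: uniqueness forces disjointness and existence forces exhaustiveness, so the case analysis above already yields a complete proof.
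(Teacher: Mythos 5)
Your proposal is correct, but there is nothing in the paper to compare it against: the paper does not prove Lemma \ref{defrs} at all, it simply quotes the formulas and defers to \citet{Suz02} (``The exact derivation with proof may be found in \citet{Suz02}''). Your route --- classify each firm as solvent ($r_i=d_i$, $s_i=a_i^*-d_i\geq 0$) or defaulted ($s_i=0$, $r_i=a_i^*<d_i$), solve the resulting linear $2\times 2$ system in each of the four configurations, and identify the self-consistency conditions with \eqref{Ass}--\eqref{Add} --- is the natural derivation and is sound; the four determinants you name are the right ones and are positive because all fractions lie in $[0,1)$. Your appeal to the existence-and-uniqueness theorem of \citet{Fis12} to obtain the tiling of the quadrant is legitimate (Suzuki's setting is a special case of that framework, and the paper itself invokes \citet{Fis12} for the non-negativity of $r_i$ and $a_i^*$); it buys you exactly the step that would otherwise be tedious, namely the disjointness of the non-adjacent pairs such as $A_{\rm{ss}}\cap A_{\rm{dd}}$. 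One point you leave implicit but which does hold: for uniqueness to force disjointness, two configurations must yield genuinely distinct solutions at a common point, and they do, since the default-side conditions are strict inequalities, so a point in $A_{\rm{ss}}\cap A_{\rm{dd}}$, say, would carry one solution with $r_1=d_1$ and another with $r_1<d_1$. Finally, your elimination in the ds-configuration actually exposes a misprint in the statement: it yields $s_2=\frac{1}{1-M^{\rm{s}}_{1,2}M^{\rm{d}}_{2,1}}\left(M^{\rm{d}}_{2,1}a_1+a_2+(M^{\rm{d}}_{1,2}M^{\rm{d}}_{2,1}-1)d_2\right)$, whereas the lemma prints the denominator $1-M^{\rm{s}}_{2,1}M^{\rm{d}}_{2,1}$; that your version is the correct one can be cross-checked against the formula for $v_2$ on $A_{\rm{ds}}$ in \eqref{defv2}, which has denominator $1-M^{\rm{s}}_{1,2}M^{\rm{d}}_{2,1}$ and equals $d_2+s_2$ there. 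So your verification-style proof is complete and, as a bonus, catches this slip.
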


The exact derivation with proof may be found in \citet{Suz02}. Note that these formulas also hold for  the extended definition of cross-ownership of both equity and debt given in Remark \ref{extendeddef}.\\

Obviously, $r_i \leq d_i$ and $s_i\geq 0$. According to Lemma~\ref{defrs}, the functions $r_i(a_1,a_2)$ and $s_i(a_1,a_2)$ are section-wise defined, where the four sections on $\mathbb{R}_0^+ \times \mathbb{R}_0^+$ indicate if any of the two firms is in default or not: \\
By definition,  firm $i$ (i=1,2) is in default if its assets do not suffice to pay back all of its debt, i.e. if $a_i^* < d_i$. Equations \eqref{eqri} and \eqref{eqsi} imply $a_i^*=s_i+r_i$, and straightforward calculation yields 
\begin{align} \label{defdef}
\begin{aligned}
\text{firm 1 in default } &\Leftrightarrow a_1^* < d_1 \;\Leftrightarrow \;r_1 < d_1\; \Leftrightarrow\; (a_1,a_2) \in A_{\rm{ds}} \cup A_{\rm{dd}},\\
\text{firm 2 in default } &\Leftrightarrow a_2^* < d_2 \;\Leftrightarrow \;r_2 < d_2\; \Leftrightarrow\; (a_1,a_2) \in A_{\rm{sd}} \cup A_{\rm{dd}}.
\end{aligned}
\end{align}
This clarifies how to understand the notation $A_{c_1,c_2}$. 
If the exogenous asset value $(a_1,a_2)$ has fallen into a certain area $A_{c_1,c_2}$, firm~$i$'s condition is indicated by $c_i \in \{\rm{s}, \rm{d}\}$, where ``s'' stands for ``solvent'' and ``d'' for ``default''. 
 \\

\begin{remark}
Let $\mathbf{r}:=(r_1,r_2)^{\rm{T}}$, $\mathbf{s}:=(s_1,s_2)^{\rm{T}}$ and $\mathbf{d}=(d_1,d_2)^{\rm{T}}$. Then \eqref{eqri} and \eqref{eqsi} can be written as
\begin{align} 
\mathbf{r}&= \min \{\mathbf{d}, \mathbf{a}+ \mathbf{M}^{\rm{d}} \mathbf{r} + \mathbf{M}^{\rm{s}} \mathbf{s} \}, \label{mat.r}\\
\mathbf{s}&=(\mathbf{a}+ \mathbf{M}^{\rm{d}} \mathbf{r} +\mathbf{M}^{\rm{s}} \mathbf{s} -\mathbf{d})^+, \label{mat.s}
\end{align}
where $\mathbf{a}=(a_1,a_2)^{\rm{T}}$ and 
\[
\mathbf{M}^{\rm{s}}=\left( \begin{matrix}0 & M^{\rm{s}}_{1,2} \\ M^{\rm{s}}_{2,1} & 0  \end{matrix} \right), \quad \mathbf{M}^{\rm{d}}=\left( \begin{matrix}0 & M^{\rm{d}}_{1,2} \\ M^{\rm{d}}_{2,1} & 0  \end{matrix} \right),
\] and it follows that
\begin{equation} 
\mathbf{a} + \mathbf{M}^{\rm{d}} \mathbf{r} + \mathbf{M}^{\rm{s}} \mathbf{s} = \mathbf{r} + \mathbf{s}. \label{mat.balance}
\end{equation}
A generalization of \eqref{mat.r}, \eqref{mat.s} and \eqref{mat.balance} to the case of $n$ firms with $m$ outstanding liabilities each, can be found in \citet{Els07} and \citet{Fis12}. 
\end{remark}

\subsubsection{Firm value under Cross-Ownership}  \label{sim12}

It can be shown that $r_i$ and $s_i$ given in Lemma~\ref{defrs} are continuous functions of $a_1$ and $a_2$. Thus, the recovery value of debt and the value of equity under cross-ownership are derivatives of exogenous asset values:
\begin{equation} \label{derivatives}
r_i=r_i(a_1,a_2),\quad \quad s_i= s_i(a_1,a_2),\end{equation}
 just as in the Merton model.\\

In the Merton model for a single firm, we defined the firm value $v$ as the sum of equity value and recovery value of debt (cf. Definition~\ref{fv-merton}). 
Definition~\ref{fv-suz} transfers this definition to the case of cross-ownership.
\begin{definition}\label{fv-suz} 
Using the notation of Lemma~\ref{defrs}, the \textit{firm value} $v_i$ of  firm~$i$ under cross-ownership equals the firm's total asset value, i.e. 
\begin{align*}
v_i :&=  a_i + M^{\rm{s}}_{ij}\,s_j + M^{\rm{d}}_{ij} r_j. 
\end{align*} 
\end{definition}

From \eqref{eqri}, \eqref{eqsi} and  \eqref{derivatives} we obtain
\begin{align}
v_i=v_i(a_i,a_j) &=  a_i + M^{\rm{s}}_{ij}\,s_j(a_i,a_j) + M^{\rm{d}}_{ij} r_j(a_i,a_j) \notag\\
&= s_i(a_i,a_j)+r_i(a_i,a_j), \label{vsr}
\end{align}
which means that the firm value is also a derivative of exogenous asset values, and 
 Lemma~\ref{defrs}  yields
\begin{align} \label{defv1}
v_1 &= \begin{cases}
\frac{1}{1-M^{\rm{s}}_{1,2} M^{\rm{s}}_{2,1}}(a_1+ M^{\rm{s}}_{1,2} a_2 + M^{\rm{s}}_{1,2}(M^{\rm{d}}_{2,1}-M^{\rm{s}}_{2,1})d_1+(M^{\rm{d}}_{1,2}-M^{\rm{s}}_{1,2})d_2), & (a_1,a_2) \in A_{\rm{ss}},\\
\frac{1}{1-M^{\rm{s}}_{2,1} M^{\rm{d}}_{1,2}}(a_1+M^{\rm{d}}_{1,2} a_2+M^{\rm{d}}_{1,2}( M^{\rm{d}}_{2,1}-M^{\rm{s}}_{2,1})d_1), & (a_1,a_2) \in A_{\rm{sd}},\\
\frac{1}{1- M^{\rm{s}}_{1,2}M^{\rm{d}}_{2,1}} (a_1+ M^{\rm{s}}_{1,2} a_2 +(M^{\rm{d}}_{1,2}-M^{\rm{s}}_{1,2})d_2), & (a_1,a_2) \in A_{\rm{ds}},\\
\frac{1}{1-M^{\rm{d}}_{1,2} M^{\rm{d}}_{2,1}} (a_1 + M^{\rm{d}}_{1,2} a_2),& (a_1,a_2) \in A_{\rm{dd}},
 \end{cases}\\
v_2 &= \begin{cases} \label{defv2}
\frac{1}{1-M^{\rm{s}}_{1,2} M^{\rm{s}}_{2,1}}(M^{\rm{s}}_{2,1}a_1+  a_2+(M^{\rm{d}}_{2,1}-M^{\rm{s}}_{2,1})d_1 + M^{\rm{s}}_{2,1}(M^{\rm{d}}_{1,2}-M^{\rm{s}}_{1,2})d_2), & (a_1,a_2) \in A_{\rm{ss}},\\
\frac{1}{1-M^{\rm{s}}_{2,1} M^{\rm{d}}_{1,2}}(M^{\rm{s}}_{2,1}a_1+ a_2+( M^{\rm{d}}_{2,1}-M^{\rm{s}}_{2,1})d_1), & (a_1,a_2) \in A_{\rm{sd}},\\
\frac{1}{1- M^{\rm{s}}_{1,2}M^{\rm{d}}_{2,1}} (M^{\rm{d}}_{2,1}a_1+  a_2 +M^{\rm{d}}_{2,1}(M^{\rm{d}}_{1,2}-M^{\rm{s}}_{1,2})d_2), & (a_1,a_2) \in A_{\rm{ds}},\\
\frac{1}{1-M^{\rm{d}}_{1,2} M^{\rm{d}}_{2,1}} ( M^{\rm{d}}_{2,1} a_1 +a_2),& (a_1,a_2) \in A_{\rm{dd}}.
 \end{cases}
\end{align}

Furthermore, \eqref{defdef} and \eqref{vsr} imply
\begin{align}
\label{v1leqd1} 
\begin{aligned}
v_1 &< d_1 \Leftrightarrow (a_1,a_2) \in A_{\rm{ds}} \cup A_{\rm{dd}},\\
v_2 &< d_2 \Leftrightarrow (a_1,a_2) \in A_{\rm{sd}} \cup A_{\rm{dd}},
\end{aligned}
\end{align}
which yields a more intuitive representation of the Suzuki areas than the one given in \eqref{Ass} -- \eqref{Add}:
\begin{align} \label{gebiete}
\begin{aligned}
A_{\rm{ss}} = \{(a_1,a_2) \geq 0: v_1&\geq d_1, v_2 \geq d_2\}, \\
A_{\rm{sd}} = \{(a_1,a_2) \geq 0: v_1& \geq d_1, v_2< d_2\}, \\
A_{\rm{ds}} = \{(a_1,a_2) \geq 0: v_1&<d_1, v_2 \geq d_2\}, \\
A_{\rm{dd}} = \{(a_1,a_2) \geq 0: v_1&< d_1, v_2 < d_2\}. 
\end{aligned}
\end{align}

An example of the Suzuki areas is given in Figure \ref{areas}. Note that if $d_1\leq M^{\rm{d}}_{1,2}d_2$ and/or $d_2\leq M^{\rm{d}}_{2,1}d_1$, the areas $A_{\rm{ds}}$ and/or $A_{\rm{sd}}$ vanish.

\setlength{\unitlength}{1cm}
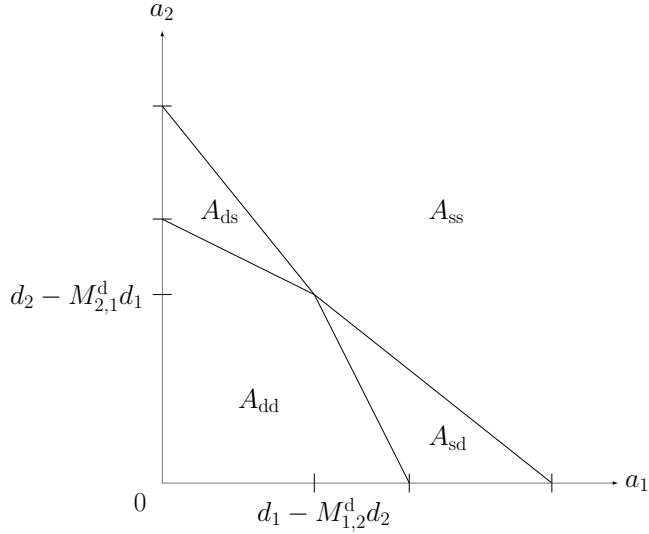
\begin{figure}%

\begin{center}
\scalebox{0.5}{
\begin{picture}(14,14)
\put(1,1){\vector(1,0){12}}
\put(1,1){\vector(0,1){12}}
\put(13.5,1){\makebox(0,0){\LARGE$a_1$}}
\put(1,13.5){\makebox(0,0){\LARGE$a_2$}}

\put(5,0.75){\line(0,1){0.5}}
\put(3.5,0){\text{\LARGE$d_1-M^{\rm{d}}_{1,2} d_2$}}
\put(-3,5.75){\text{\LARGE$d_2- M^{\rm{d}}_{2,1}d_1$}}
\put(0.75,6){\line(1,0){0.5}}
\put(0.25,0.25){\LARGE$0$}

\put(5,6){\line(2.5,-5){2.50}} 
\put(5,6){\line(1,-0.8){6.23}} 
\put(5,6){\line(-1,1.25){4}} 
\put(5,6){\line(-1,0.5){4}} 

\put(0.75,8){\line(1,0){0.5}}
\put(0.75,11){\line(1,0){0.5}}
\put(7.5,0.75){\line(0,1){0.5}}
\put(11.25,0.75){\line(0,1){0.5}}

\put(8,8){\LARGE$A_{\rm{ss}}$}
\put(3,3){\LARGE$A_{\rm{dd}}$}
\put(8,2){\LARGE$A_{\rm{sd}}$}
\put(2,8){\LARGE$A_{\rm{ds}}$}

\end{picture} }
\end{center}
\caption{Suzuki areas if $d_1> M^{\rm{d}}_{1,2} d_2$ and $d_2>M^{\rm{d}}_{2,1} d_1$}%
\label{areas}

\end{figure}

\bigskip
If the two firms have established cross-ownership of either equity or debt, the formula of $v_1$ given in \eqref{defv1} can be simplified. 
Under cross-ownership of equity only,  \eqref{defv1} reduces to 
\begin{align} \label{v1e}
v_1^{\rm{s}}:=\begin{cases}
\frac{1}{1-M^{\rm{s}}_{1,2} M^{\rm{s}}_{2,1} }(a_1+M^{\rm{s}}_{1,2} a_2-M^{\rm{s}}_{1,2} M^{\rm{s}}_{2,1} d_1-M^{\rm{s}}_{1,2} d_2), & (a_1,a_2) \in A_{\rm{ss}}, \\
a_1, & (a_1,a_2) \in A_{\rm{sd}} \cup A_{\rm{dd}},\\
a_1+M^{\rm{s}}_{1,2} a_2-M^{\rm{s}}_{1,2} d_2, & (a_1,a_2) \in A_{\rm{ds}}.
 \end{cases}
\end{align}

As we can see, $v_1^{\rm{s}}$ is a kind of section-wise defined linear combination of exogenous asset values $a_1$ and $a_2$ and face values of liabilities $d_1$ and $d_2$. Note that the coefficients of the liabilities are always non-positive. \\

Under cross-ownership of debt only, the value of firm~1 is given by
\begin{align} \label{v1d}
v_1^{\rm{d}}:=\begin{cases}
a_1+M^{\rm{d}}_{1,2} d_2, & (a_1,a_2) \in A_{\rm{ss}} \cup A_{\rm{ds}}, \\
a_1+M^{\rm{d}}_{1,2} a_2+M^{\rm{d}}_{1,2} M^{\rm{d}}_{2,1} d_1,& (a_1,a_2) \in A_{\rm{sd}},\\
\frac{1}{1-M^{\rm{d}}_{1,2} M^{\rm{d}}_{2,1} }(a_1+M^{\rm{d}}_{1,2} a_2), & (a_1,a_2) \in A_{\rm{dd}}.
 \end{cases}
\end{align}

Similar to $v_1^{\rm{s}}$, $v_1^{\rm{d}}$ is also a weighted sum of exogenous asset values, but in contrast to the case of cross-ownership of equity only in \eqref{v1e}, the face values of debt of both firms now contribute with  a non-negative sign.

By setting $M^{\rm{s}}_{1,2}=M^{\rm{s}}_{2,1}=0$ in \eqref{defv2}, one could also obtain formulas for $v_2^{\rm{s}}$ and $v_2^{\rm{d}}$, the value of firm~2 under cross-ownership of equity only and cross-ownership of debt only, respectively.

\subsection{Calculation of probabilities of default}

In the previous section, we saw that  the firm value is a function of the  exogenous asset values. In the following, we will assume these  exogenous asset values to be stochastic, which also turns the firm value $v$ into a random variable, because it is a continuous function of asset values. This is why we will denote asset values and firm values with  capital $A$s and $V$s, respectively, in the remainder.\\
We will assume exogenous assets to follow a bivariate geometric Brownian motion, similar to often extensions of the Merton model to the multivariate case. Thus, we have lognormally distributed exogenous asset values $A_1$, $A_2$ at maturity. We do not make any restrictions with respect to the correlation between $A_1$ and $A_2$. \\

Since the firm value equals the sum of exogenous and endogenous assets (cf. Definition \ref{fv-suz}), a firm is in default if and only if its firm value is smaller than the face value of its liabilities. Hence,
\begin{align} \label{pdefsuz}
P(\text{firm $i$ in default})=P(V_i < d_i).
\end{align}

Without cross-ownership, the assumption of lognormally distributed asset values would imply that firm values are also lognormally distributed because of $V_i=A_i$ in this situation (cf. Definition~\ref{fv-merton}), i.e. the evaluation of \eqref{pdefsuz} would be straightforward. 
But as we have seen in \eqref{defv1} and \eqref{defv2}, firm values are non-trivial derivatives of exogenous asset values under cross-ownership. Consequently,  the distribution of  firm values is a transformation of the lognormal distribution, which is generally not lognormal anymore.  However, we are not able to derive a closed-form solution of the resulting distribution, because, alongside other problems, there is no convolution theorem for lognormal distributions. \\

In this situation, one could ask to what extent the  probability of default of firm $i$ given in \eqref{pdefsuz} depends on whether the actual distribution of $V_i$ under cross-ownership  or the lognormal distribution is used. 
 Or expressed differently: what mistake (with respect to the resulting probabilities of default) do we make if we ignore that a part of the assets is  priced endogenously, and treat all assets as a single, homogeneous class of exogenous assets following a lognormal distribution which has the same first two moments as the actual firm value under cross-ownership? Since this approach would result in lognormally distributed firm values, this question essentially aims at the effects of applying Merton's model of firm valuation to both firms separately, despite the presence of cross-ownership.  

In the remainder, we will be concerned with  the comparison of probabilities of default obtained under both models.

\section{Simulation Study of Default Probabilities under Cross-Ownership} \label{simstudy}
\subsection{Setup and Parameter Values}  \label{parameter}

In order to get a first impression, we did a short simulation study for cross-ownership of equity only and cross-ownership of debt only (cf. Definition~\ref{defxos}) with the following parameters.

Exogenous assets of the two firms are independent and lognormally distributed at maturity $T=1$:
\begin{equation}
(A_1,A_2) \thicksim \mathcal{LN}(\boldsymbol{\mu}, \boldsymbol{\Sigma}) \label{logNVsim}
\end{equation}
with $\boldsymbol{\mu}=(\mu,\mu)^{\rm{T}}=(-0.5\sigma^2 + \ln(a),-0.5\sigma^2+\ln(a))$, $a>0$, and $\boldsymbol{\Sigma}= \bigl( \begin{smallmatrix}
\sigma^2 & 0 \\ 0 & \sigma^2
\end{smallmatrix} \bigr).$
This implies
\begin{align*}
 E(A_i)&= \exp(-0.5\sigma^2+\ln(a)+0.5\sigma^2)=a ,\\
\rm{Var}(A_i)&=\exp(-\sigma^2+2\ln(a)+\sigma^2) (\exp(\sigma^2)-1)=a^2 (\exp(\sigma^2)-1), \quad i=1,2.
\end{align*}
The coefficient of variation\footnote{For a random variable $X$ with mean $\mu$ and standard deviation $\sigma$, the coefficient of variation is defined as $\frac{\sigma}{\mu}$.} of $A_i$ ($i=1,2$) is given through $\sqrt{\exp(\sigma^2)-1}$.

Furthermore, the  liabilities of the two firms  have identical face values $d_1=d_2=:d$. Because of this kind of symmetry between the two firms, the main part of our study only analyzes probabilities of default of firm~1. 
Note that  any two setups for which the ratio $d/a$ is identical can be interpreted as the same setup under a different currency at a constant exchange rate. Thus, only the relative size of $d$ to $a$ is important, but not their absolute sizes. This is why we set $a=1$ in all our simulations and let only $d$ take different values. In particular, we have $E(A_i)=1$ and $\rm{Var}(A_i)=\exp(\sigma^2)-1$, $i=1,2$.

	In our  simulation study of default probabilities, we considered all possible combinations of $(M^{\rm{s}}_{1,2},M^{\rm{s}}_{2,1})$ with $M^{\rm{s}}_{1,2}, M^{\rm{s}}_{2,1}\in \{0.1, 0.2,\ldots,0.9\}$. Likewise for $V_1^{\rm{d}}$ and $(M^{\rm{d}}_{1,2},M^{\rm{d}}_{2,1})$. 
	
	The value of the liabilities, $d$, ran through \{0.1, 0.2, \ldots, 2.9, 3\}, which means that  
	\[
	\frac{\text{debt}}{\text{expected ex. assets}} = \frac{d}{a} \in \{0.1, 0.2, \ldots, 2.9, 3\}.
	\]	
The variance of logarithmized exogenous assets	$\sigma^2$ (cf. \eqref{logNVsim}) took values in $\{$0.00995, 0.22314, 0.44629, 0.69315, 1, 1.17865, 1.60944, 1.98100, 2.30259, 3.25810, 4.04743, 4.61512$\}$, which approximately resulted in  coefficients of variation of $A_i$ of $\{$0.1, 0.5, 0.75, 1, 1.31, 1.5, 2, 2.5, 3, 5, 7.5, 10$\}$.

	For every combination of parameters and both types of cross-ownership, 10,000 values of $(A_1,A_2)$ were simulated. 
	Based on \eqref{v1leqd1}, the probability of default under Suzuki's model was estimated by
\begin{align} \label{defps}
\hat{p}_{\text{S}} := \frac{\#\{(A_1,A_2) \in A_{\rm{ds}} \cup A_{\rm{dd}}\}}{10,000}.
\end{align}
The same simulated values of $(A_1,A_2)$  were used to calculate values for 	$V_1$, and from that an empirical distribution function $\hat{F}_{\rm{XOS}}$ of $V_1$. In order to determine the corresponding probability of default under the lognormal model,  we  approximated $\hat{F}_{\rm{XOS}}$  with a lognormal distribution. 
The parameters of this lognormal distribution were determined in analogy to the Fenton--Wilkinson method \citep{Fen60} of moment matching, which means that the first and second moments of $W$ were chosen such that they corresponded to the estimated first and second moments of $V_1$.  
	
	By \eqref{defps}, $\hat{p}_{\text{S}}$ was estimated with four decimal places only. 	
For a better comparison, we rounded the probabilities of default obtained from the lognormal model to four decimal places as well. These values will be denoted by $\hat{p}_{\text{L}}$.  As a measure for the discrepancy between the two models we used the relative risk RR of the two models, estimated by
\begin{align*}
\widehat{\text{RR}}:=\begin{cases}
\frac{\hat{p}_{\text{L}}}{\hat{p}_{\text{S}}}, & {\hat{p}_{\text{S}}}>0,\\
1, & {\hat{p}_{\text{S}}}=0 \text{ and } \hat{p}_{\text{L}}=0,\\
\infty,& {\hat{p}_{\text{S}}}=0 \text{ and } \hat{p}_{\text{L}}>0.
\end{cases}
\end{align*}
	The results of our simulation study are presented in the subsequent section.

\subsection{Results} \label{results}

First, we saw for both, cross-ownership of equity only and cross-ownership of debt only, that if the level of liabilities is chosen very small compared to $\sigma^2$, both models yield (rounded) estimated default probabilities  of 0, and the estimated relative risk ratios $\widehat{\text{RR}}$ equal 1. If $d/a$ is chosen very large compared to $\sigma^2$, we observe a similar effect, with the difference that now both models yield (rounded) estimated probabilities of default of 1.

However, note that the theoretical probabilities of default under either model  can never take a value of exactly 0 or exactly 1, since we assume exogenous assets to follow a (continuous) lognormal distribution. Hence, also if $d/a$ is chosen very large or small, the theoretical risk ratio is probably different from 1, but our short simulations cannot reveal whether we have to expect the lognormal model to over- or underestimate the actual risk in such scenarios. For very high levels of cross-ownership, the results of Section \ref{limit} will offer more insight.\\

When $d/a$ was chosen such that both (rounded) estimated probabilities of default were likely to lie in the open interval $(0,1)$, it seemed that with  increasing $\sigma^2$, the range of such values of $d/a$ became wider. In surface plots  we observed the following effects with respect to the cross-ownership fractions.

Under cross-ownership of equity only, we saw that, roughly speaking, the higher the cross-ownership fractions, the smaller the obtained values of $\widehat{\text{RR}}$. These values tended to be bigger than 1 or about 1 if $M^{\rm{s}}_{1,2}$ was small.  For $M^{\rm{s}}_{1,2}$ and $M^{\rm{s}}_{2,1}$  close to 1, we observed relative risks close to 0, i.e. the lognormal model then  underestimates the actual probability of default. An example is given in Figure \ref{prob1}(a), where the smallest value of $\widehat{RR}$ was 0.1779. For smaller values of $d/a$ we even obtained estimated relative risks of 0.

Under cross-ownership of debt, we observed opposite effects. Here, the  values of $\widehat{\text{RR}}$ were non-decreasing in the considered levels of cross-ownership, see Figure \ref{prob1}(b) for an example. For scenarios with high levels of cross-ownership (and $d/a$ chosen appropriately in the  sense explained earlier) we always obtained risk ratios greater than 1, i.e. the lognormal now overestimates the actual probability of default in these scenarios.

\begin{figure}%
\begin{center}
\subfigure[]{
\resizebox*{8cm}{!}{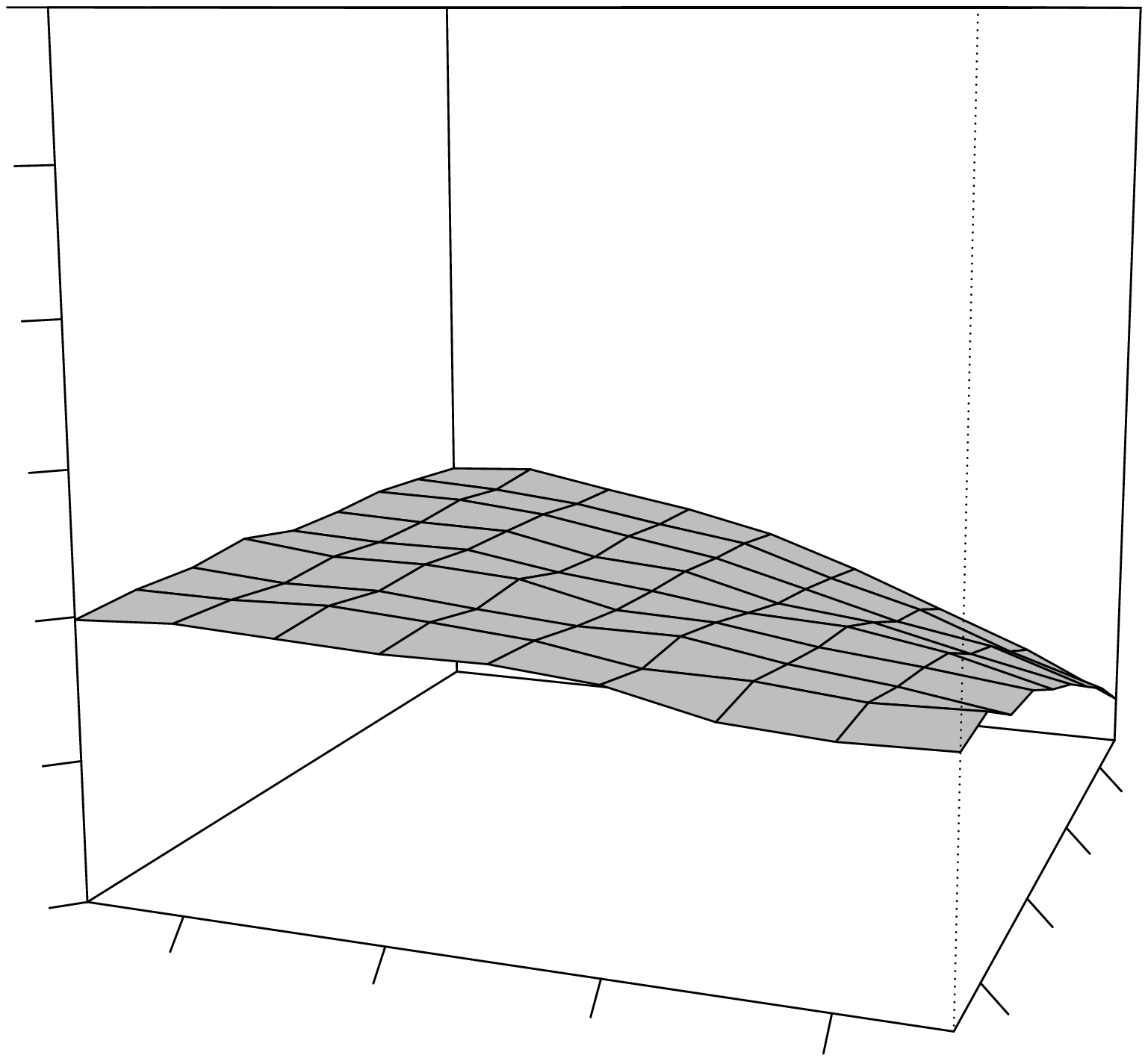}}%
\subfigure[]{
\resizebox*{8cm}{!}{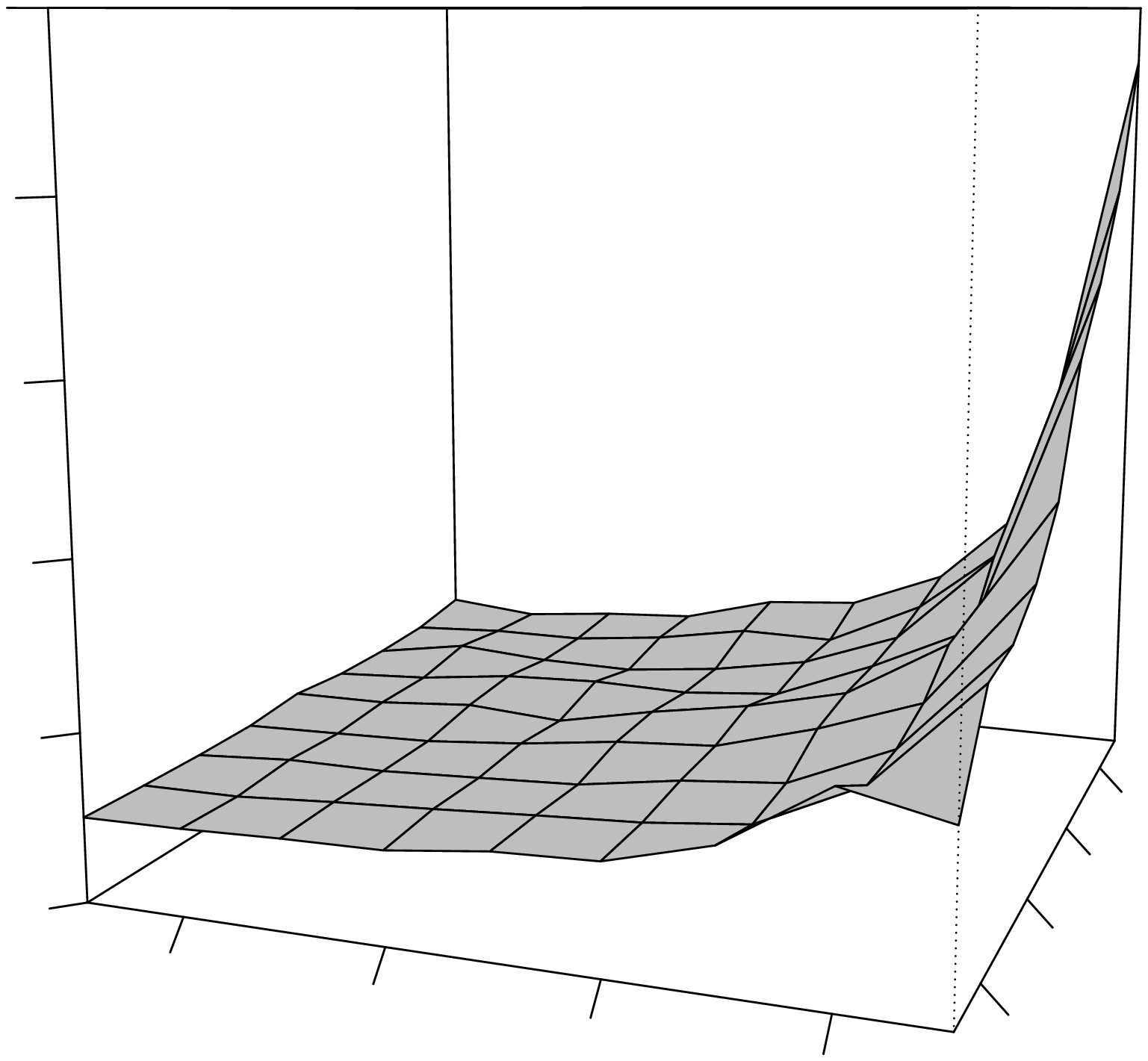}}%
\caption{Estimated rounded relative risk $\widehat{\text{RR}}$ in dependency of XOS-fractions  ($n=10,000$); (a) XOS of equity only, $\sigma^2=0.22314$, $d/a=0.7$; (b) XOS of debt only, $\sigma^2=1.60944$, $d/a$=0.4}
\label{prob1}
\end{center}
\end{figure}

\bigskip

As should have been expected, the difference between the two types of cross-ownership (in terms of the obtained risk ratios) especially becomes clear for scenarios with a high level of cross-ownership. Hence, we  fixed the cross-ownership fractions to 0.95 and had a closer look at the corresponding probabilities of default in a further short simulation study. Exogenous asset values were lognormally distributed with $a=1$ and $\sigma^2=1$ (cf. \eqref{logNVsim}), liabilities $d$ took values between 0.1 and 10 with steps of 0.1.  Every combination of parameters was repeated 100,000 times to obtain estimated probabilities of default (rounded to five decimal places now). The results for cross-ownership of equity only were such that the estimated relative risk was strictly smaller than 1 for any considered level of liabilities, whereas we always had estimated relative risks strictly greater than 1 under cross-ownership of debt only. 

Two examples can be found in Figure \ref{v1distr}. Note that the values of $d$ (0.9 and 1.6, resp.) were chosen such that the absolute difference between the estimated (rounded) probabilities of default was maximized. The estimated (rounded) probabilities under Suzuki's model and the lognormal model  are 0.51857 and 0.17464 in Figure \ref{v1distr}(a), and 0.02185 and 0.25530 in  Figure \ref{v1distr}(b). The corresponding  relative risks $\widehat{\text{RR}}$ amount to 0.33677 and 11.684.

\begin{figure}%
\begin{center}
\subfigure[]{
\resizebox*{8cm}{!}{\includegraphics[width=0.48\columnwidth]{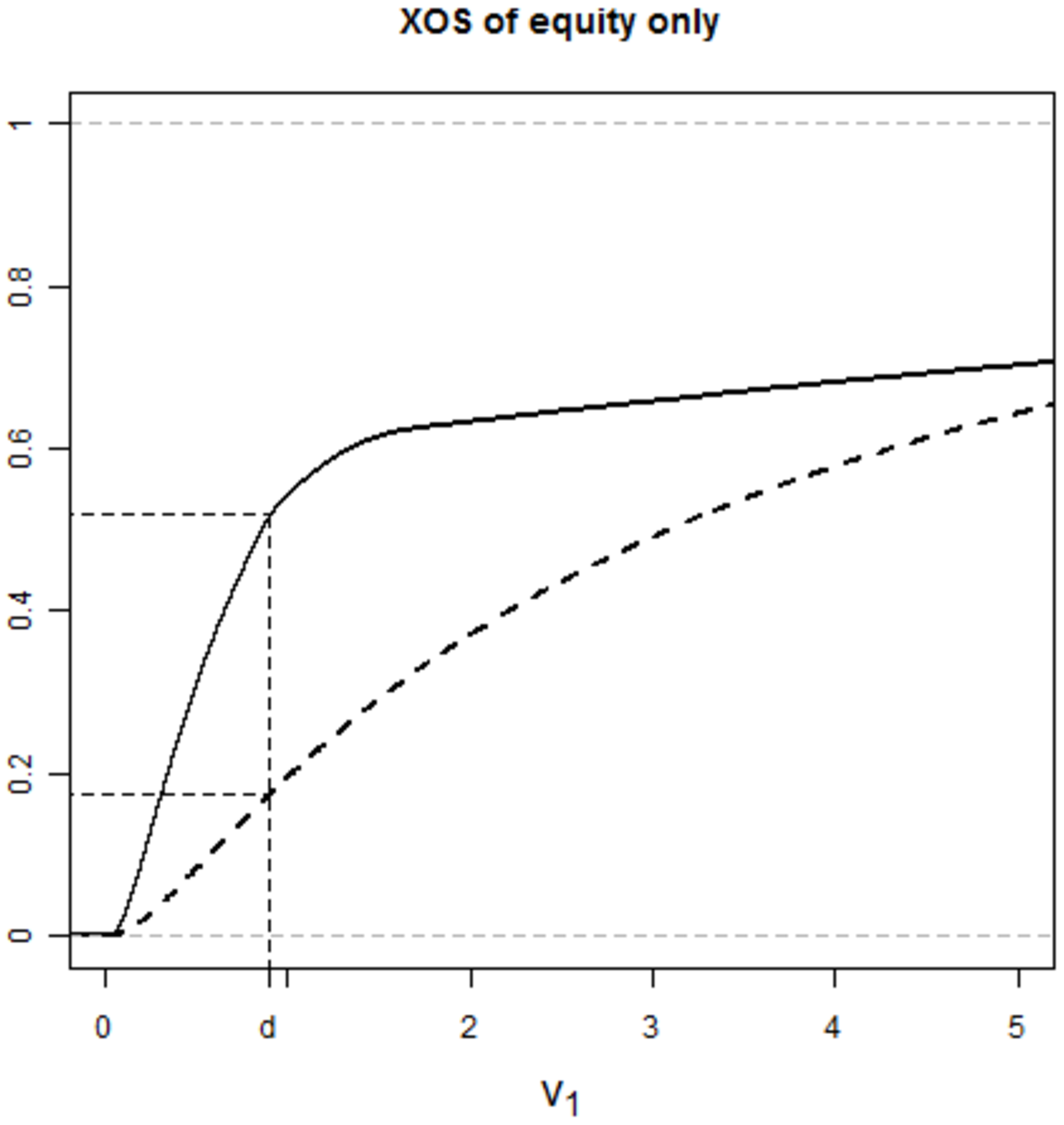}}}%
\subfigure[]{
\resizebox*{8cm}{!}{\includegraphics[width=0.48\columnwidth]{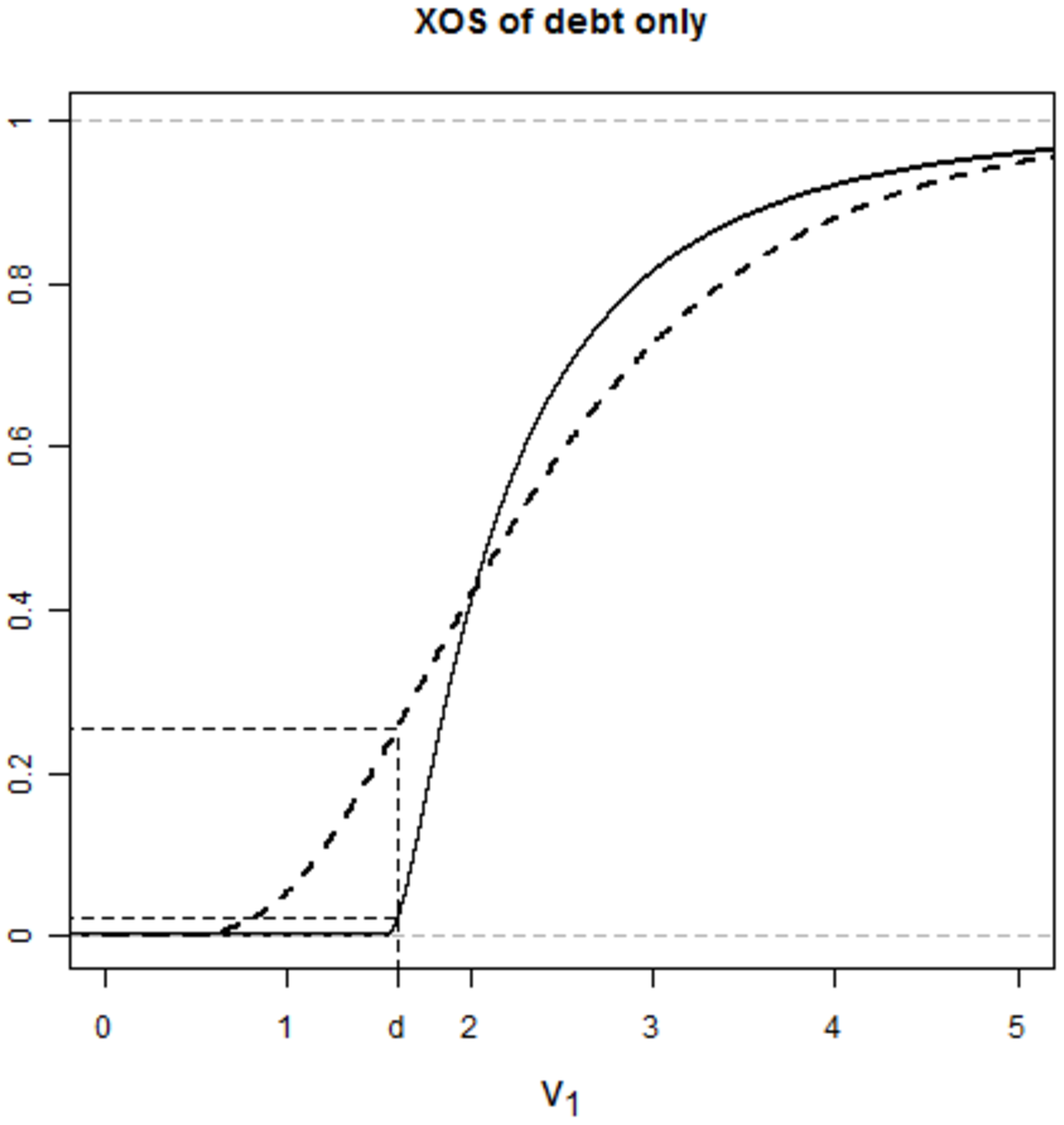}}}%
\caption{Probabilities of default, solid line: empirical distribution function of firm values $V_1$ resulting from Suzuki's model ($n=100,000$ iterations), dotted line: matched lognormal distribution; (a) XOS of equity only, $\sigma^2=1$, $d/a=0.9$; (b) XOS of debt only, $\sigma^2=1$, $d/a=1.6$.}
\label{v1distr}
\end{center}
\end{figure}
\bigskip

The insights gained in our simulations laid the foundation for the theoretical analyses in the subsequent sections.  In particular, we will be concerned with  probabilities of default of a firm under various assumptions.

\section{Limiting Probability of Default} \label{limit}
In our simulations we saw that if the two firms have established a high level of cross-ownership, the two types of cross-ownership seem to have opposite effects on the probabilities of default obtained under the lognormal model, compared to Suzuki's model. Unfortunately, we cannot compute the exact values under either model, because we can determine neither the distribution of $V_1$, nor its first and second moments in closed form. Hence, we cannot justify our findings theoretically. \\
However, the situation becomes tractable, if we let the cross-ownership fractions converge to 1. In this case, both the definition of the Suzuki areas given in \eqref{Ass}--\eqref{Add} and the formula of $V_1$ simplify, which makes an analytical approach possible. 
In this section, we will consider the ``limiting''  probability of default of firm~1 resulting from both, the Suzuki model and the corresponding matching lognormal model. This will be done separately for cross-ownership of equity only and of debt only. \\

As in our simulations, we assume that exogenous assets are lognormally distributed, i.e.
\begin{equation} \label{logNv}
(A_1,A_2) \thicksim \mathcal{LN}(\boldsymbol{\mu}, \boldsymbol{\Sigma})
\end{equation}
with $\boldsymbol{\mu}=(\mu_1,\mu_2)^{\rm{T}}$ and $\boldsymbol{\Sigma}= \bigl( \begin{smallmatrix}
\sigma_1^2 & \sigma_{12}  \\ \sigma_{12} & \sigma_2^2
\end{smallmatrix} \bigr).$
In particular, we  have $A_1, A_2>0$. Note that we do not impose any restrictions on $\boldsymbol{\mu}$ and $\boldsymbol{\Sigma}$.
Since we are only concerned with the default risk of firm~1, we set
\[
\mu:=\mu_1,\quad \sigma:=\sigma_1.
\]
In contrast to our simulations, we do not confine ourselves to the case of $d_1=d_2$. We only assume $d_1, d_2>0$ in order to exclude degenerate cases.

\subsection{XOS of equity only} \label{limitequity}

Let the firm value of firm~1 under cross-ownership of equity only, $V_1^{\rm{s}}$, be given by \eqref{v1e}.
If we consider the limit of $M^{\rm{s}}_{1,2}$ and $M^{\rm{s}}_{2,1}$ to 1, we are faced with the problem that for any given $(a_1,a_2) \in A_{\rm{ss}}$
\[
V_1^{\rm{s}} \big{|}_{A_{\rm{ss}}}= \frac{1}{1-M^{\rm{s}}_{1,2} M^{\rm{s}}_{2,1}}(a_1+ M^{\rm{s}}_{1,2} a_2 - M^{\rm{s}}_{1,2}M^{\rm{s}}_{2,1}d_1-M^{\rm{s}}_{1,2}d_2) \rightarrow \infty\quad\text{for } M^{\rm{s}}_{1,2},\,M^{\rm{s}}_{2,1} \rightarrow 1,
\]
since the limit of the term in brackets is always strictly positive, because on  $A_{\rm{ss}}$ (where, by definition, $M^{\rm{s}}_{1,2}$ and $M^{\rm{s}}_{2,1}$ are strictly smaller than 1) it holds that $a_1+a_2>d_1+d_2$ (cf. \eqref{Ass}).\\
Thus, if we want to evaluate the limiting probability of default under both models, this cannot be done by considering the pointwise limit of $V_1$ and the resulting limiting distribution.  
Instead, we will first calculate the probabilities of default under both models for $M^{\rm{s}}_{1,2}, M^{\rm{s}}_{2,1}<1$ and then consider the limits of these probabilities if cross-ownership fractions converge to 1.\\
Since firm~1 is in default if and only if its firm value is smaller than the face value of its debt at maturity,  we have under Suzuki's model by \eqref{gebiete}
\begin{align*}
P(V_1^{\rm{s}}<d_1)&=P(A_{\rm{ds}} \cup A_{\rm{dd}}) \\
&= P\big(\underbrace{\{(a_1,a_2)\geq 0: a_1 < d_1, a_2 <d_2+ \tfrac{1}{M^{\rm{s}}_{1,2}}(d_1-a_1)\}}_{=:A_{\rm{d.}}(M^{\rm{s}}_{1,2})}\big), 
\end{align*}
where the second equality follows from \eqref{Ads} and \eqref{Add} and $M^{\rm{d}}_{1,2}=M^{\rm{d}}_{2,1}=0$.     With $M^{\rm{s}}_{1,2}$ increasing, the set $A_{\rm{d.}}(M^{\rm{s}}_{1,2})$ becomes smaller, and it follows from the continuity of a probability measure that if $M^{\rm{s}}_{1,2}\rightarrow 1$,
\begin{equation} \label{probtrue}
 P(V_1^{\rm{s}} < d_1)\rightarrow P(\{(a_1,a_2)\geq 0:a_1 < d_1, a_1+a_2 \leq d_1+d_2\}) >0,
\end{equation}
where the strict positivity follows from the assumption that $d_1, d_2>0$.
 \\

 Let now
\begin{equation}
\tilde{V}_1^{\rm{s}} \thicksim \mathcal{LN}(\tilde{\mu}, \tilde{\sigma}^2),
\end{equation}
where $\tilde{\mu}$ and $\tilde{\sigma}^2$ are determined such that $E(\tilde{V}_1^{\rm{s}})=E(V_1^{\rm{s}})$ and $\rm{Var}(\tilde{V}_1^{\rm{s}})=\rm{Var}(V_1^{\rm{s}})$. Note that the square-integrability of $V_1^{\rm{s}}$ follows from the square-integrability of $A_1$ and $A_2$. This definition of $\tilde{V}_1^{\rm{s}}$ corresponds to the moment matching procedure applied in our simulations. \\
Under the lognormal model, the probability of default of firm~1 equals
\[
P(\tilde{V}_1^{\rm{s}} < d_1)=P(\tilde{V}_1^{\rm{s}} \leq d_1)=\Phi \left(\frac{\ln(d_1)-\tilde{\mu}}{\tilde{\sigma}} \right),
\]
where $\Phi$ stands for the standard normal distribution function.
If the cross-ownership fractions of equity converge to 1, this affects  both, expectation and variance of $V_1^{\rm{s}}$ and thus  also the parameters of $\tilde{V}_1^{\rm{s}}$, because
\begin{align}
\tilde{\mu} &=  \frac{1}{2} \ln \left( \frac{E(V_1^{\rm{s}})^4}{\rm{Var}(V_1^{\rm{s}})+E(V_1^{\rm{s}})^2} \right),\label{defmus}\\
\tilde{\sigma} &=  \ln \left(\frac{\rm{Var}(V_1^{\rm{s}})}{E(V_1^{\rm{s}})^2}+1 \right)^{0.5}. \label{defsigs}
\end{align}

More specifically, we have $\tilde{\mu} \rightarrow \infty$ for ${M^{\rm{s}}_{1,2},M^{\rm{s}}_{2,1} \rightarrow 1}$, and $\lim_{M^{\rm{s}}_{1,2},M^{\rm{s}}_{2,1} \rightarrow 1} \tilde{\sigma}<\infty$ by Lemma~\ref{limmusig} in the Appendix, i.e.
\[
\frac{\ln(d_1)-\tilde{\mu}}{\tilde{\sigma}} \rightarrow - \infty
\]
and thus 
\begin{equation} \label{problog}
P(\tilde{V}_1^{\rm{s}} < d_1)=\Phi \left(\frac{\ln(d_1)-\tilde{\mu}}{\tilde{\sigma}} \right) \rightarrow 0, \quad {M^{\rm{s}}_{1,2},\,M^{\rm{s}}_{2,1} \rightarrow 1}.
\end{equation}

Comparing \eqref{probtrue} and \eqref{problog}, we obtain the following proposition.
\begin{proposition} \label{finalequity}
Under cross-ownership of equity only with lognormally distributed exogenous asset values, the lognormal model underestimates the actual limiting default probability of a firm, i.e.
 \[
\lim_{M^{\rm{s}}_{1,2},M^{\rm{s}}_{2,1} \rightarrow 1} P(V_1^{\rm{s}} < d_1) > \lim_{M^{\rm{s}}_{1,2},M^{\rm{s}}_{2,1} \rightarrow 1} P(\tilde{V}_1^{\rm{s}} < d_1)=0.
\]
\end{proposition}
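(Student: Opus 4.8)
The plan is to establish the two one-sided limits separately and then simply compare them, since the assertion reduces to the observation that a strictly positive number exceeds zero. The groundwork for both limits has already been prepared in \eqref{probtrue} and \eqref{problog}; what remains is to confirm each of the two underlying claims and read off the conclusion. First I would fix the notation $V_1^{\rm{s}}$ from \eqref{v1e} and treat the Suzuki probability and the matched-lognormal probability as two functions of the pair $(M^{\rm{s}}_{1,2},M^{\rm{s}}_{2,1})$, taking the limit as both tend to $1$.

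For the Suzuki side I would start from the representation $P(V_1^{\rm{s}}<d_1)=P(A_{\rm{d.}}(M^{\rm{s}}_{1,2}))$, which follows from \eqref{gebiete} together with $M^{\rm{d}}_{1,2}=M^{\rm{d}}_{2,1}=0$. The family $A_{\rm{d.}}(M^{\rm{s}}_{1,2})$ is decreasing in $M^{\rm{s}}_{1,2}$, so continuity of the probability measure from above yields convergence to the measure of the limiting triangle $\{(a_1,a_2)\geq 0:a_1<d_1,\;a_1+a_2\le d_1+d_2\}$; its boundary contributes nothing because the lognormal law is absolutely continuous. Strict positivity then follows because this triangle has nonempty interior and the lognormal distribution has full support on the open positive quadrant, where the hypothesis $d_1,d_2>0$ is exactly what guarantees a nondegenerate interior.

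The lognormal side carries the real content, and the main obstacle is controlling the matched parameters $\tilde{\mu}$ and $\tilde{\sigma}$ as the cross-ownership fractions tend to $1$. Through the moment-matching formulas \eqref{defmus}--\eqref{defsigs}, everything reduces to the asymptotics of $E(V_1^{\rm{s}})$ and $\rm{Var}(V_1^{\rm{s}})$. The intuition I would exploit is that on $A_{\rm{ss}}$ the firm value in \eqref{v1e} carries the factor $(1-M^{\rm{s}}_{1,2}M^{\rm{s}}_{2,1})^{-1}\to\infty$, while the bracketed term stays strictly positive, so a positive-probability portion of the mass escapes to infinity and forces $E(V_1^{\rm{s}})\to\infty$; crucially, the variance grows only comparably, so that the ratio $\rm{Var}(V_1^{\rm{s}})/E(V_1^{\rm{s}})^2$ remains bounded. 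This is precisely the statement I would isolate as Lemma~\ref{limmusig}: it delivers $\tilde{\mu}\to\infty$ together with a finite $\limsup$ of $\tilde{\sigma}$. Consequently the standardized threshold $(\ln(d_1)-\tilde{\mu})/\tilde{\sigma}\to-\infty$ and $P(\tilde{V}_1^{\rm{s}}<d_1)=\Phi(\cdot)\to 0$, which combined with the strictly positive Suzuki limit from \eqref{probtrue} gives the stated strict inequality. The delicate point, and the only place the argument could fail, is the bound on $\tilde{\sigma}$: were $\rm{Var}(V_1^{\rm{s}})$ to grow faster than $E(V_1^{\rm{s}})^2$, then $\tilde{\sigma}$ would diverge and the argument of $\Phi$ need not tend to $-\infty$, so the heart of the work lies in showing that $\rm{Var}/E^2$ stays bounded.
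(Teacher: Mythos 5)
Your proposal is correct and takes essentially the same route as the paper: the Suzuki-side limit via the monotone shrinking of $A_{\rm{d.}}(M^{\rm{s}}_{1,2})$ and continuity of the probability measure (yielding \eqref{probtrue}), and the lognormal-side limit via the moment-matching asymptotics $\tilde{\mu}\rightarrow\infty$, $\lim\tilde{\sigma}<\infty$ of Lemma~\ref{limmusig} (yielding \eqref{problog}). You also correctly isolate the boundedness of $\rm{Var}(V_1^{\rm{s}})/E(V_1^{\rm{s}})^2$ as the crux, which is exactly what the paper's appendix lemma establishes by dominated convergence on $A_{\rm{ss}}$ together with the boundedness of $V_1^{\rm{s}}$ on $A_{\rm{ss}}^c$.
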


In our simulation study, this was already evident for cross-ownership fractions of 0.95 (cf. Section \ref{results}). \\

\begin{remark}
Note that all the results obtained in this section also hold without the assumption of lognormally distributed exogenous assets made in \eqref{logNv}, if we still approximate the distribution of $V_1^{\rm{s}}$ with a lognormal distribution. We only have to require the  distribution of exogenous assets to be continuous, non-negative, square-integrable  and to yield a strictly positive limiting probability of default $P(\{(a_1,a_2) \geq 0: a_1< d_1, a_1+a_2 \leq d_1+d_2\})$, 
   and both, a  strictly positive   expectation and variance of $A_1+A_2-d_1-d_2$ on $A_{\rm{ss}}^*$ (cf. Lemma \ref{existence.Ass} in the Appendix), i.e.  $E([A_1+A_2-d_1-d_2]\cdot 1_{\{A_1+A_2  \geq d_1+d_2\}})>0$ and $\rm{Var}([A_1+A_2-d_1-d_2]\cdot 1_{\{A_1+A_2 \geq d_1+d_2\}}) >0$
(cf. the proof of Lemma \ref{limmusig} in the Appendix).
\end{remark}

\subsection{XOS of debt only} \label{limitdebt}

Under cross-ownership of debt only,  firm values remain finite with probability 1 even if cross-ownership fractions converge to 1. Thus, we can  determine the limit  of $V_1^{\rm{d}}$ and compare the resulting probabilities of default under both models.  
Recall that we assume exogenous assets to follow a lognormal distribution given by \eqref{logNv}.\\

Based on \eqref{v1d} we can write
\begin{align}
&\begin{aligned}V_1^{\rm{d}} =  &1_{A_{\rm{ss}}}(A_1,A_2) \cdot \left(A_1+M^{\rm{d}}_{1,2} d_2 \right) \\
& \quad + 1_{A_{\rm{sd}}}(A_1,A_2) \cdot  \left(A_1+M^{\rm{d}}_{1,2} A_2+M^{\rm{d}}_{1,2} M^{\rm{d}}_{2,1} d_1 \right)\\\
& \quad +1_{A_{\rm{ds}}}(A_1,A_2) \cdot  \left(A_1+M^{\rm{d}}_{1,2} d_2  \right)\\
& \quad+1_{A_{\rm{dd}}}(A_1,A_2) \cdot  \left(\frac{1}{1-M^{\rm{d}}_{1,2} M^{\rm{d}}_{2,1} }(A_1+M^{\rm{d}}_{1,2} A_2) \right),
\end{aligned} \label{v1dind}
\end{align}
where $1_{A}$ stands for the indicator function of a set $A$. 
For the determination of the pointwise limit of $V_1^{\rm{d}}$ if $M^{\rm{d}}_{1,2},M^{\rm{d}}_{2,1}$ converge to 1, we first consider the limits of the indicator functions in \eqref{v1dind}. By Lemma \ref{existence} in the Appendix, their pointwise limits exist and we set
\begin{align}
\lim_{M^{\rm{d}}_{1,2}, M^{\rm{d}}_{2,1} \rightarrow  1} 1_{A_{ij}}&=:1_{A_{ij}^*},\quad ij \in \{\rm{ss},\rm{sd},\rm{ds},\rm{dd}\} \label{Aij*}
\end{align}
with $A_{\rm{dd}}^*= \{(0,0)\}$ by Lemma \ref{gebiete-empty} in the Appendix. Hence, $P(A_{\rm{dd}}^*)=0$ and 
\begin{align}
V_1^{\rm{d}^*}:=\lim_{M^{\rm{d}}_{1,2}, M^{\rm{d}}_{2,1} \rightarrow  1} V_1^{\rm{d}} &= 1_{A_{\rm{ss}}^*}(A_1,A_2) \cdot \left(A_1+ d_2 \right)\notag\\
& \quad + 1_{A_{\rm{sd}}^*}(A_1,A_2) \cdot  \left(A_1+ A_2+ d_1 \right) \label{v1d*}\\
& \quad +1_{A_{\rm{ds}}^*}(A_1,A_2) \cdot \left(A_1+ d_2  \right)  \hspace{2cm}P-a.s.\notag
\end{align}

Since almost sure convergence implies convergence in distribution, we have 
\begin{equation*}
\lim_{M^{\rm{d}}_{1,2},M^{\rm{d}}_{2,1} \rightarrow 1} P(V_1^{\rm{d}} < d_1)=P(V_1^{\rm{d}^*} < d_1).
\end{equation*}
 In order to determine the latter probability of default, we have to distinguish between the following three cases.

\subsubsection{$d_1 = d_2$} \label{d1=d2}
For $d_1 = d_2$ it follows from \eqref{v1d*} and Lemma \ref{gebiete-empty} in the Appendix that 
 $V_1^{\rm{d}^*}=A_1 + d_2$ $P-$a.s. 
Hence,  $V_1^{\rm{d}^*}$ follows a shifted lognormal distribution $\Lambda_{\mu, \sigma^2, \lambda}$ with shift parameter $\lambda=d_2$, which means that $\ln(V_1^{\rm{d}^*}-d_2) \thicksim \mathcal{N}(\mu, \sigma^2)$, and 
we obtain for the actual limiting probability of default that 
\[
P(V_1^{\rm{d}^*}< d_1)= P(V_1^{\rm{d}^*}< d_2)= 0.
\]
If we now match an unshifted, i.e. classical, lognormal distribution $\Lambda_{\tilde{\mu}, \tilde{\sigma}^2}$ to  $\Lambda_{\mu, \sigma^2, \lambda}$, it becomes clear that this distribution yields firm values lower or equal to $d_1$ with a strictly positive probability, because $d_1>0$.\\ Thus, if $d_1=d_2$ and if exogenous assets are lognormally distributed,  the lognormal model overestimates the actual risk of a firm under cross-ownership of debt if the cross-ownership fractions converge to 1.  \\
Recall that we had $d_1=d_2$ also in our simulations. In Section \ref{results} we saw that under cross-ownership of debt, the actual risk was overestimated already for cross-ownership fractions equal to 0.95.

\subsubsection{$d_1 < d_2$} \label{d1<d2}
If $d_1<d_2$,  it follows from \eqref{v1d*} and Lemma \ref{gebiete-empty} in the Appendix that
\begin{align*}
V_1^{\rm{d}^*}&=
(A_1 + d_2)\cdot  1_{A_{\rm{ss}}^*}+(A_1+A_2+d_1) \cdot 1_{A_{\rm{sd}}^*} \quad P-\text{a.s.},
\end{align*}
i.e. $V_1^{\rm{d}^*}>d_1$ with probability 1 (since $P(A_{\rm{dd}}^* \cup A_{\rm{ds}}^*)=0$) and thus $P(V_1^{\rm{d}^*}< d_1)=0$. As in the case of $d_1=d_2$, the lognormal model would yield a probability of default bigger than 0, so, under cross-ownership of debt only, the lognormal model again overestimates the actual risk of firm~1, if the corresponding cross-ownership fractions converge to 1.

\subsubsection{$d_1>d_2$} \label{d1>d2}
For this constellation, the situation is somewhat trickier.  
Equation \eqref{v1d*} and Lemma \ref{gebiete-empty} in the Appendix now yield
\begin{align*}
V_1^{\rm{d}^*}&= A_1+d_2\quad P-\text{a.s.},
\end{align*}
i.e. the firm value of firm~1 again follows a shifted lognormal distribution $\Lambda_{\mu,\sigma^2,\lambda}$ with shift parameter $\lambda=d_2$.
Interestingly, $V_1$ is independent of $d_1$, as long as this face value of debt is larger than $d_2$. 

Because of $d_1>d_2$, we now have 
\[
P(V_1^{\rm{d}^*} < d_1) > P(V_1^{\rm{d}^*} < d_2)=0,
\] i.e.
the limiting probability of default of firm~1  is strictly positive. So the argumentation used in the previous sections cannot be applied to this case. \\

 Let 
\begin{equation} \label{defv1d}
{\tilde{V}_1}^{\rm{d}^*} \thicksim \mathcal{LN}(\tilde{\mu}, \tilde{\sigma}^2),
\end{equation}
where $\tilde{\mu}$ and $\tilde{\sigma}^2$ are determined such that 
$E(\tilde{V}_1^{\rm{d}^*})=E(V_1^{\rm{d}^*})= E(A_1)+d_2$ and $\rm{Var}(\tilde{V}_1^{\rm{d}^*})=\rm{Var}(V_1^{\rm{d}^*})=\rm{Var}(A_1)$. \\

Straightforward calculations yield
\begin{align*}
\tilde{\mu}=&\frac{1}{2} \ln \left(\frac{(E(A_1)+d_2)^4}{\rm{Var}(A_1)+(E(A_1)+d_2)^2} \right) > \frac{1}{2} \ln \left( \frac{E(A_1)^4}{\rm{Var}(A_1)+E(A_1)^2} \right) =\mu, \\
\tilde{\sigma}^2=&\ln \left(\frac{\rm{Var}(A_1)}{(E(A_1)+d_2)^2}+1\right) < \ln\left(\frac{\rm{Var}(A_1)}{E(A_1)^2}+1 \right) =\sigma^2,
\end{align*}
where the last inequality follows from $d_2>0$.
 \\

Then we have the following limiting probabilities of default:
\begin{align*}
P(V_1^{\rm{d}^*}< d_1) &= P(A_1 < d_1-d_2)= \Phi \left(\frac{\ln(d_1-d_2)-\mu}{\sigma}\right),\\
P(\tilde{V}_1^{\rm{d}^*} < d_1) &= \Phi \left(\frac{\ln(d_1)- \tilde{\mu} }{\tilde{\sigma}} \right).
\end{align*}
Thus, in the limit, the lognormal model overestimates the actual risk if and only if
\begin{alignat}{2}
&&\frac{\ln(d_1-d_2)-\mu}{\sigma} &<  \frac{\ln(d_1)- \tilde{\mu} }{\tilde{\sigma}} \notag \\
&\Leftrightarrow\;\; &\tilde{\sigma} \ln(d_1-d_2)-\sigma \ln(d_1)&< \tilde{\sigma}\mu-\sigma \tilde{\mu} \notag\\
&\Leftrightarrow &\frac{(d_1-d_2)^{\tilde{\sigma}}}{(d_1)^{\sigma}} &< \exp(\tilde{\sigma}\mu-\sigma \tilde{\mu}). \label{LHS}
\end{alignat}

Straightforward calculations show that the LHS of  \eqref{LHS} as a function of $d_1$ has a maximum value of
\begin{equation}
 \frac{\left(\frac{\tilde{\sigma}}{\sigma-\tilde{\sigma}} d_2 \right)^{\tilde{\sigma}}}{\left( \frac{\sigma}{\sigma- \tilde{\sigma}} d_2\right)^{\sigma}}=:\text{LHS}_{\max} \label{lhsmax}\end{equation}
taken in $d_1=\frac{\sigma}{\sigma- \tilde{\sigma}}d_2=:d_{1,\max}> d_2$ because of $\sigma>\tilde{\sigma}$.
Furthermore, 
\[
\lim_{d_1 \searrow d_2} \frac{(d_1-d_2)^{\tilde{\sigma}}}{(d_1)^{\sigma}}=0, \quad \lim_{d_1 \rightarrow \infty} \frac{(d_1-d_2)^{\tilde{\sigma}}}{(d_1)^{\sigma}} = 0,
\]
which implies that the LHS of \eqref{LHS} is a bell-shaped, continuous function of $d_1$ with domain $(d_2,\infty)$ and maximum value LHS$_{\max}$ taken in $d_1=d_{1,\max}$.

 Note that the RHS of \eqref{LHS} is independent of $d_1$. It can be shown (cf. Lemma~\ref{inequality} in the Appendix) that 
\[
\text{LHS}_{\max} = \frac{\left(\frac{\tilde{\sigma}}{\sigma-\tilde{\sigma}} d_2 \right)^{\tilde{\sigma}}}{\left( \frac{\sigma}{\sigma- \tilde{\sigma}} d_2\right)^{\sigma}} > \exp(\tilde{\sigma}\mu-\sigma \tilde{\mu}),
\]
independently of the exact values of $\mu$, $\sigma$ and $d_2$, which means that the maximum value of the LHS of \eqref{LHS} as a function of $d_1$ is always greater than the constant  $\exp(\tilde{\sigma}\mu-\sigma \tilde{\mu})$. 

 Thus (cf. Figure \ref{skizze}),  there are two values $d_1^*$ and $d_1^{**}$, $d_1^{*}< d_{1,\max}< d_1^{**}$, such that
\begin{equation} \label{schnittpunkte}
\frac{(d_1^*-d_2)^{\tilde{\sigma}}}{(d_1^*)^{\sigma}} = \frac{(d_1^{**}-d_2)^{\tilde{\sigma}}}{(d_1^{**})^{\sigma}} =\exp(\tilde{\sigma}\mu-\sigma \tilde{\mu}),
\end{equation}
i.e.  \eqref{LHS} holds if and only if $d_1< d_1^*$ or $d_1 > d_1^{**}$. In these cases, 
the lognormal model overestimates the actual risk, if the cross-ownership fractions of debt converge to 1.

 \savedata{\mydata}[
 {
{1., 0.}, {1.1, 0.0787986}, {1.2, 0.126788}, {1.3, 0.155691}, {1.4, 
0.17248}, {1.5, 0.181444}, {1.6, 0.18529}, {1.7, 0.18577}, {1.8, 
0.184039}, {1.9, 0.180867}, {2., 0.176777}, {2.1, 0.172125}, {2.2, 
0.167157}, {2.3, 0.162041}, {2.4, 0.156892}, {2.5, 0.151789}, {2.6, 
0.146787}, {2.7, 0.141919}, {2.8, 0.137207}, {2.9, 0.132666}, {3., 
0.1283}, {3.1, 0.124112}, {3.2, 0.120101}, {3.3, 0.116263}, {3.4, 
0.112594}, {3.5, 0.109086}, {3.6, 0.105735}, {3.7, 0.102532}, {3.8, 
0.0994716}, {3.9, 0.0965465}, {4., 0.09375}, {4.1, 0.0910756}, {4.2, 
0.088517}, {4.3, 0.0860682}, {4.4, 0.0837235}, {4.5, 0.0814773}, 
{4.6, 0.0793246}, {4.7, 0.0772604}, {4.8, 0.0752802}, {4.9, 
0.0733794}, {5., 0.0715542}, {5.1, 0.0698005}, {5.2, 0.0681147}, 
{5.3, 0.0664934}, {5.4, 0.0649334}, {5.5, 0.0634316}, {5.6, 
0.0619852}, {5.7, 0.0605914}, {5.8, 0.0592477}, {5.9, 0.0579517}, 
{6., 0.0567012}}]

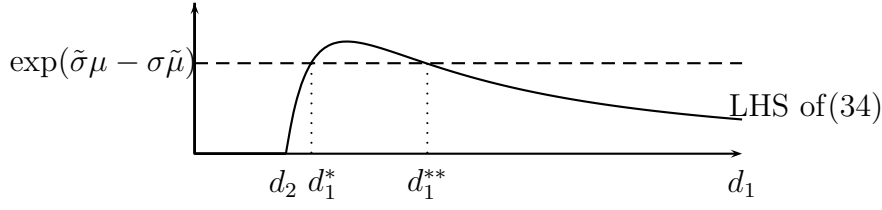
\begin{figure}
\begin{center}
 \psset{xunit=1.2cm,yunit=8cm,mathLabel=TRUE}
 \begin{pspicture}(-2,-0.1)(8,0.25)
 \psaxes[ticks=none,labels=none](1,0.25)
 \dataplot[plotstyle=curve,showpoints=FALSE]{\mydata}
 \psline{<->}(0,0.25)(0,0)(6,0)
\psline[linestyle=dashed](0,0.15)(6,0.15)
\rput(-1,0.15){$\exp(\tilde{\sigma}\mu-\sigma \tilde{\mu})$}
\rput(6,-0.05){$d_1$}
\rput(0.97,-0.05){$d_2$}
\rput(1.4,-0.05){$d_1^*$}
\rput(2.55,-0.05){$d_1^{**}$}
\psline[linestyle=dotted](1.28,0)(1.28,0.15)
\psline[linestyle=dotted](2.55,0)(2.55,0.15)
\rput(6.7,0.075){LHS of\eqref{LHS}}
 \end{pspicture}
\caption{Sketch of the LHS of \eqref{LHS} as a function of $d_1$.}
\label{skizze}
\end{center}
\end{figure}

\subsubsection{Conclusion for limiting risk under XOS of debt}
Our case differentiation with respect to the relative sizes of $d_1$ and $d_2$ 
can be summarized as follows.
\begin{proposition} \label{finaldebt}
Under cross-ownership of debt with lognormally distributed exogenous asset values, the lognormal model underestimates the actual limiting probability of default of a firm, i.e. 
 \[
\lim_{M^{\rm{d}}_{1,2},M^{\rm{d}}_{2,1} \rightarrow 1} P(V_1^{\rm{d}} < d_1) > P({\tilde{V}_1^{\rm{d}^*}} < d_1),
\]
if and only if
\[
d_1^*<d_1 <d_1^{**},
\]
with $d_1^*$ and $d_1^{**}$ given by \eqref{schnittpunkte}. In particular, the actual limiting default probability is overestimated if $d_1\leq d_2$.
\end{proposition}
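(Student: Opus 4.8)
The plan is to assemble the proposition from the three regimes already treated in Sections~\ref{d1=d2}--\ref{d1>d2}; the only genuinely new work is to verify that these regimes fit together consistently across the boundary $d_1=d_2$ and that the threshold values $d_1^*,d_1^{**}$ both exceed $d_2$, so that the single criterion $d_1^*<d_1<d_1^{**}$ really covers all of $d_1>0$. First I would dispose of the case $d_1\leq d_2$. Sections~\ref{d1=d2} and \ref{d1<d2} show, via \eqref{v1d*} and Lemma~\ref{gebiete-empty}, that $V_1^{\rm{d}^*}>d_1$ holds $P$-a.s.\ (whether $V_1^{\rm{d}^*}=A_1+d_2$, as when $d_1=d_2$, or the two-branch form of Section~\ref{d1<d2}), so that $P(V_1^{\rm{d}^*}<d_1)=0$, whereas the matched lognormal $\tilde V_1^{\rm{d}^*}$ charges $(0,d_1)$ with strictly positive mass because $d_1>0$. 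Hence in this regime the lognormal model strictly overestimates; in particular underestimation never occurs when $d_1\leq d_2$, which already yields the second assertion and shows that any underestimation forces $d_1>d_2$.

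Second, on the regime $d_1>d_2$ I would invoke Section~\ref{d1>d2}: there $V_1^{\rm{d}^*}=A_1+d_2$ is a shifted lognormal, the two default probabilities are the two $\Phi$-values displayed before \eqref{LHS}, and underestimation is precisely the failure of \eqref{LHS}. The core is the qualitative behaviour of the left-hand side $g(d_1):=(d_1-d_2)^{\tilde\sigma}/d_1^{\sigma}$ on $(d_2,\infty)$: it is continuous, vanishes at both ends (the two displayed limits), and by the critical-point computation leading to \eqref{lhsmax} is unimodal with a unique maximum $\text{LHS}_{\max}$ attained at $d_{1,\max}>d_2$. Lemma~\ref{inequality} supplies the strict inequality $\text{LHS}_{\max}>\exp(\tilde\sigma\mu-\sigma\tilde\mu)$, valid for all admissible $\mu,\sigma,d_2$. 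Unimodality together with this strict inequality and the two vanishing limits forces the horizontal level $\exp(\tilde\sigma\mu-\sigma\tilde\mu)$ to be crossed exactly twice, at $d_1^*<d_{1,\max}<d_1^{**}$ as in \eqref{schnittpunkte}; consequently \eqref{LHS} holds iff $d_1<d_1^*$ or $d_1>d_1^{**}$, i.e.\ underestimation holds iff $d_1^*<d_1<d_1^{**}$.

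Finally I would glue the two regimes together. Since $d_1^*,d_1^{**}\in(d_2,\infty)$, the interval $(d_1^*,d_1^{**})$ is disjoint from $\{d_1\leq d_2\}$, so there is no conflict: combining the first step (no underestimation for $d_1\leq d_2$) with the second (underestimation iff $d_1^*<d_1<d_1^{**}$ when $d_1>d_2$) shows that, over all $d_1>0$, underestimation holds exactly on $(d_1^*,d_1^{**})$. The main obstacle is not this synthesis but the two facts it rests on, namely the unimodality of $g$ and the uniform strict inequality $\text{LHS}_{\max}>\exp(\tilde\sigma\mu-\sigma\tilde\mu)$; the former is a single-critical-point calculus argument, while the latter (Lemma~\ref{inequality}) is the delicate estimate, since it must hold independently of the parameters and essentially encodes why the variance-reducing, mean-increasing effect of the shift $d_2$ under moment matching tilts the comparison the way it does.
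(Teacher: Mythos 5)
Your proposal is correct and follows essentially the same route as the paper, whose proof of Proposition~\ref{finaldebt} is precisely the case differentiation of Sections~\ref{d1=d2}--\ref{d1>d2}: zero limiting default probability versus positive lognormal mass when $d_1\leq d_2$, and for $d_1>d_2$ the bell-shaped behaviour of the LHS of \eqref{LHS} combined with Lemma~\ref{inequality} to produce the two crossing points $d_1^*<d_{1,\max}<d_1^{**}$. Your additional observation that $d_1^*,d_1^{**}>d_2$ makes the gluing of the regimes explicit, a point the paper leaves implicit but which requires no new ideas.
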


Recall that under cross-ownership of equity, the lognormal model underestimated the actual limiting risk for every level of $d_1$ (cf. Proposition~\ref{finalequity}). So for $d_1 \leq d_2$, Proposition~\ref{finalequity} and Proposition~\ref{finaldebt} are both  confirmation  and extension to our empirical finding that the two types of cross-ownership have opposite effects on the probabilities of default.

\section{General Probabilities of Default} \label{genprob}

Having analyzed the probability of default of firm~1 if the respective cross-ownership fractions converge to 1 in the previous section, we will now examine the probability of default of a firm if cross-ownership fractions are strictly smaller than 1. 
For the case of cross-ownership of debt, the assumption of lognormally distributed exogenous assets proved to be crucial, whereas the results for the case of cross-ownership of equity also hold under far less restricting conditions. In the following, we will drop any distributional assumption with respect to exogenous asset values, we only require their distribution to be square-integrable and non-degenerate in a certain sense. This will be clarified later. In particular, we allow asset values to be zero. 
 Furthermore, our results will be valid for all three types of cross-ownership, we do not need a case differentiation as in Section \ref{limit}.\\

We set $A_{\rm{dd}} \cup A_{\rm{ds}} =: A_{\rm{d.}}$ and $A_{\rm{ss}} \cup A_{\rm{sd}} =: A_{\rm{s.}}$, i.e. $A_{\rm{d.}}$ and $A_{\rm{s.}}$ denote the regions where firm~1 is in default, or not.

Again, let $V_1$ be the (random) firm value of firm~1. 
According to the above partition of $\mathbb{R}^+_0 \times \mathbb{R}^+_0$, we also consider the distribution of $V_1$ as the weighted average of two conditional distributions on these areas, namely
\begin{equation}
P(V_1 \leq q) = P(V_1 \leq q \,|\, A_{\rm{d.}}) \times p +  P(V_1 \leq q \,|\, A_{\rm{s.}}) \times (1-p), \quad q \geq 0,\label{decomp}
\end{equation}
where 
\begin{equation} \label{defp}
p:= P((A_1,A_2) \in A_{\rm{d.}}).
\end{equation} 

In the following, we assume the conditional distributions of $V_1$ on $A_{\rm{d.}}$ and $A_{\rm{s.}}$ to be  fixed. Only their mixing parameter $p$ will vary. Of course, the ``total'' moments of $V_1$ on  $\mathbb{R}^+_0 \times \mathbb{R}^+_0$ then depend on $p$, which is indicated by the corresponding index:
\begin{align}
E_p(V_1) &= E(V_1 \,|\, A_{\rm{d.}})\times p + E(V_1 \,|\, A_{\rm{s.}})\times (1-p), \label{Ep}\\
	E_p(V_1^2)&= E(V_1^2 \,|\, A_{\rm{d.}})\times p + E(V_1^2 \,|\, A_{\rm{s.}})\times (1-p). \label{E2p}
\end{align}

Note that we do not make any specific assumptions with respect to the distribution of $(A_1,A_2): \Omega \rightarrow \mathbb{R}^+_0 \times \mathbb{R}^+_0$, we only require this distribution to imply 
\begin{equation} 0< \text{Var}_p(V_1) < \infty\quad \forall\; p \in [0,1]. \label{var.pos}
\end{equation}
A sufficient condition for \eqref{var.pos} to be met is that both conditional variances $\rm{Var}(V_1\,|\,A_{\rm{s.}})$ and $\rm{Var}(V_1\,|\,A_{\rm{d.}})$ are strictly positive\footnote{
This follows from
 \begin{align*}\rm{Var}_p(V_1)&=p \rm{Var}(V_1\,|\,A_{\rm{s.}})+(1-p)\rm{Var}(V_1\,|\,A_{\rm{d.}}) +p(1-p)[E(V_1\,|\,A_{\rm{s.}})-E(V_1\,|\,A_{\rm{d.}})]^2 \\ &\geq p \rm{Var}(V_1\,|\,A_{\rm{s.}})+(1-p)\rm{Var}(V_1\,|\,A_{\rm{d.}}).
\end{align*}}. \\
Because of $V_1 \geq 0$, \eqref{var.pos} also implies that $E_p(V_1)$ and $E_p(V_1^2)$ are strictly positive for all $p \in [0,1]$.
\\

Let us now consider the probabilities of default obtained from Suzuki's model and the lognormal model.

\subsection{Suzuki's model}
Under Suzuki's model we simply have
\begin{align*}
P(V_1 < d_1) = P((A_1,A_2) \in A_{\rm{d.}})= p
\end{align*}
by  \eqref{v1leqd1} and \eqref{defp}.

\subsection{Lognormal model}
For any $p \in [0,1]$, let $W_p$ be lognormally distributed with $E(W_p)=E_p(V_1)$ and $\text{Var}(W_p)=\text{Var}_p(V_1) = E_p(V_1^2)-E_p(V_1)^2$, i.e. 
\begin{align*}
W_p \thicksim \mathcal{LN}(\tilde{\mu}_p,\tilde{\sigma}_p^2) \notag
\end{align*}
with
\begin{align}
\tilde{\mu}_p &:= \ln\left(E(W_p)^2 \sqrt{\frac{1}{\text{Var}(W_p)+E(W_p)^2}} \,\right) = \ln\left(E_p(V_1)^2 \sqrt{\frac{1}{E_p(V_1^2)}}\, \right) \notag \\
&\,= \frac{1}{2} \ln \left(\frac{E_p(V_1)^4}{E_p(V_1^2)} \right),\\
\tilde{\sigma}_p^2&:= \ln \left(\frac{\text{Var}(W_p)}{E(W_p)^2}+1 \right) = \ln \left(\frac{E_p(V_1^2)}{E_p(V_1)^2} \right) >0. \label{sig2wp}
\end{align} 
Note that $\tilde{\sigma}_p^2$ is strictly positive for all $p \in [0,1]$ because of $E(W_p)>0$ and \eqref{var.pos}. \\

Under the lognormal model we then have 
\begin{align}
P(\text{firm 1 in default}) &= P(W_p < d_1) 
\notag \\
&= \Phi\left( \frac{\ln(d_1)-\tilde{\mu}_p }{\tilde{\sigma}_p}\right) \notag\\
&= \Phi\left( \frac{\ln(d_1)- \frac{1}{2} \ln \left(\frac{E_p(V_1)^4}{E_p(V_1^2)} \right)}{\ln \left(\frac{E_p(V_1^2)}{E_p(V_1)^2} \right)^{0.5}}\right). \label{phi}
\end{align} 
Setting
\begin{align}
E(V_1 \,|\, A_{\rm{d.}})-E(V_1 \,|\, A_{\rm{s.}})&=:x_1<0 \label{defx1}\\
E(V_1 \,|\, A_{\rm{s.}})&=:x_2 \geq d_1 \label{defx2}\\
E(V_1^2 \,|\, A_{\rm{d.}})-E(V_1^2 \,|\, A_{\rm{s.}})&=:y_1<0 \label{defy1}\\
E(V_1^2 \,|\, A_{\rm{s.}})&=:y_2 \geq d_1^2,\label{defy2}
\end{align}
it follows from \eqref{Ep}, \eqref{E2p} and \eqref{phi} that
\begin{align*}
P(W_p < d_1) 			&= \Phi\left( \frac{\ln(d_1)- \frac{1}{2} \ln \left(\frac{(p\times x_1+x_2)^4}{p \times y_1+y_2} \right)}{\ln \left(\frac{p \times y_1 +y_2}{(p \times x_1 +x_2)^2} \right)^{0.5}}\right).
			\end{align*}
			
Thus, the lognormal model underestimates the probability of default if and only if
\begin{align}
h(p):=\Phi\left( \frac{\ln(d_1)- \frac{1}{2} \ln \left(\frac{(p\times x_1+x_2)^4}{p \times y_1+y_2} \right)}{\ln \left(\frac{p \times y_1 +y_2}{(p \times x_1 +x_2)^2} \right)^{0.5}}\right) < p. \label{underest}
\end{align}
Since the denominator in \eqref{underest} equals $\tilde{\sigma}_p>0$  (cf. \eqref{sig2wp}), $h$ is always defined. Recall that $x_1$, $x_2$, $y_1$ and $y_2$ do not vary with $p$.

\subsection{Comparison}
\subsubsection{Values of $p$ close to or identical to 0 and 1} \label{general}
Let us consider \eqref{underest}. If $p=0$, which means that firm~1 is in default with probability 0 in Suzuki's model, we obtain for the lognormal model that $h(0)>0$, because the standard normal distribution function takes values in $(0,1)$ only. Since $h: [0,1] \rightarrow (0,1)$ is continuous in $p$, we know that there is a whole region $[0, \epsilon)$, $\epsilon>0$, with 
\begin{equation} \label{hp.groesser}
h(p) > p,\quad \quad p \in [0,\epsilon),
\end{equation}
where $\epsilon$ depends on $x_1,x_2,y_1$ and $y_2$. 
This can be interpreted as follows: If, for given $x_1,x_2,y_1$ and $y_2$, the actual probability of default for firm~1 is very small (i.e. smaller than $\epsilon(x_1,x_2,y_1,y_2)$), the lognormal model will overestimate this probability of default in this setup. 

Recall that in Sections \ref{d1=d2} and \ref{d1<d2}, we  observed a somewhat similar effect. There, cross-ownership fractions converged to 1, which resulted in an actual limiting default probability of 0, whereas the lognormal model  yielded a strictly positive limiting probability of default. For continuity reasons, there is a whole range of cross-ownership fractions such that the actual risk is overestimated.  
Hence, under cross-ownership of debt only (with $d_1 \leq d_2$), there are (at least) two ways of constructing scenarios leading to an overestimation of the actual default probability: first,  as done in  Sections \ref{d1=d2} and \ref{d1<d2}, we can alter the cross-ownership structure between the two firms such that the actual probability of default converges to 0, and second, we can transform the distribution of exogenous assets such that the actual probability of default converges to 0, as done in this section. In both approaches, the lognormal model yields a probability of default strictly greater than 0. 
 Note that the results of Sections  \ref{d1=d2} and \ref{d1<d2} hold without the assumption of exogenous assets following a lognormal distribution (cf. \eqref{logNv}).\\

Returning to \eqref{underest}, we obtain for $p=1$ that $h(1)<1$, i.e. there is an $\epsilon'(x_1,x_2,y_1,y_2)=:\epsilon' \in (0,1)$ such that
\begin{align} \label{hp.kleiner}
h(p) < p, \quad \quad p \in (\epsilon',1].
\end{align}
In this case, the lognormal model underestimates the probability of default. \\
By Proposition \ref{finalequity}, we see that under cross-ownership of equity only, underestimation of default probabilities can also be constructed by either a structural approach (i.e. letting cross-ownership fractions converge to 1) or a distributional approach (weighting the distribution of exogenous assets such that the actual probability of default converges to 1).

\begin{remark} \label{eps.mu}
The only assumption  we made about the distribution  of exogenous assets was that it implies $\text{Var}_p(V_1)>0$ for all  $p \in [0,1]$. Apart from this weak requirement, the above result is independent of the exact distribution of $(A_1,A_2)$ on $\mathbb{R}^+_0 \times \mathbb{R}^+_0$, in the sense that for any distribution $\mu$  on $\mathbb{R}^+_0 \times \mathbb{R}^+_0$ fulfilling \eqref{var.pos}, we can define a measure $P_{p,\mu}$ via
\begin{equation}
P_{p,\mu}(A):=p\, \frac{\mu(A \cap A_{\rm{d.}})}{\mu(A_{\rm{d.}})}+(1-p)\, \frac{\mu(A \cap A_{\rm{s.}})}{\mu(A_{\rm{s.}})}, \quad p \in [0,1],\; A \in \mathbb{R}^+_0 \times \mathbb{R}^+_0, \label{defppmu}
\end{equation}
and assume exogenous assets to be distributed according to $P_{p,\mu}$. 
If $p$ is chosen such that $p < \epsilon= \epsilon(\mu)$ or $p> \epsilon'=\epsilon'(\mu)$, \eqref{hp.groesser} or \eqref{hp.kleiner}, respectively, follow.\\
\end{remark}

Let us now consider $p \in (0,1)$.

\subsubsection{$d_1 \geq E_p(V_1)^2/E_p(V_1^2)^{0.5}$} \label{sec.d1.bigger}

For a given $p \in (0,1)$, let 
\begin{equation}
d_1 \geq \frac{E_p(V_1)^2}{E_p(V_1^2)^{0.5}}. \label{d1.bigger}
\end{equation}
 Because of Jensen's inequality we have 
\begin{equation} \label{condition}
\frac{ E_p(V_1)^2 }{E_p(V_1^2)^{0.5} }= E_p(V_1) \underbrace{\frac{ E_p(V_1) }{E_p(V_1^2)^{0.5} } }_{\leq 1} \leq E_p(V_1) \quad \forall \; p \in (0,1),
\end{equation}
so  \eqref{d1.bigger} is met if for example \begin{equation} E_p(V_1)\leq d_1. \label{suff}\end{equation} 
In Section \ref{feas}, we will see  how such a $V_1$ can be constructed. 
 \\

Under assumption \eqref{d1.bigger} we  have
\begin{equation*}
d_1^2 E_p(V_1^2) \geq E_p(V_1)^4,
\end{equation*}
which means that the numerator of \eqref{phi} is non-negative.  Thus,
\begin{equation*}
\Phi\left(\underbrace{ \frac{\ln(d_1)- \frac{1}{2} \ln \left(\frac{E_p(V_1)^4}{E_p(V_1^2)} \right)}{\ln \left(\frac{E_p(V_1^2)}{E_p(V_1)^2} \right)^{0.5}}}_{ \geq 0}\right) \geq 0.5,
\end{equation*}
i.e.  the lognormal model yields a probability of default of at least $0.5$ independently of the value of $p$,  as long as \eqref{d1.bigger} is met.\\

However, the initial assumption of  $d_1\geq E(V_1)^2/E(V_1^2)^{0.5}$ does not impose any restrictions on $p$, i.e. under the actual model, every probability of default can be obtained by choosing  suitable conditional distributions of $V_1$ on $A_{\rm{d.}}$ and $A_{\rm{s.}}$, respectively. This can be seen as follows.\\

Recall that
\[
E_p(V_1) = p \, \underbrace{E(V_1\,|\,A_{\rm{d.}})}_{< d_1}+(1-p)\, \underbrace{E(V_1\,|\,A_{\rm{s.}})}_{\geq d_1},
\]
which means that for any conditional distribution of $V_1$ on $A_{\rm{d.}}$, we only have to choose $P(V_1 \leq \cdot \,|\, A_{\rm{s.}})$ such that $E(V_1\,|\,A_{\rm{s.}})$ becomes small enough (i.e. close enough to $d_1$) to fulfill \eqref{suff}. This can always be achieved by putting enough mass on values close to $d_1$, which will be shown in Section \ref{feas}. Thus, the initial condition $d_1 \geq E_p(V_1)^2/E_p(V_1^2)^{0.5}$ can be met for any $p \in (0,1)$, if the distribution of $(A_1,A_2)$ on $\mathbb{R}^+_0 \times \mathbb{R}^+_0$ is chosen suitably. \\

Hence,  the probability of default $p$ in the Suzuki model  can be arbitrarily small, whereas the probability of default in the lognormal model is at least 0.5, assuming that $d_1\geq E_p(V_1)^2/E_p(V_1^2)^{0.5}$. In this case,  the actual risk is grossly overestimated.

However, if $p>0.5$, the actual risk might be underestimated.  \\

 \subsubsection{$d_1 \leq E_p(V_1)^2/E_p(V_1^2)^{0.5}$} \label{sec.d1.smaller}
Let now \begin{equation}
d_1 \leq \frac{E_p(V_1)^2}{E_p(V_1^2)^{0.5}} \label{dsmaller}
\end{equation}
for a given $p \in (0,1)$. Then we have 
\begin{equation*}
d_1^2 E_p(V_1^2) \leq E_p(V_1)^4,
\end{equation*}
which means that the numerator of \eqref{phi} is non-positive. Thus,
\begin{equation*}
\Phi\left(\underbrace{ \frac{\ln(d_1)- \frac{1}{2} \ln \left(\frac{E_p(V_1)^4}{E_p(V_1^2)} \right)}{\ln \left(\frac{E_p(V_1^2)}{E_p(V_1)^2} \right)^{0.5}}}_{ \leq 0}\right) \leq 0.5.
\end{equation*}

In contrast to that, the probability of default $p$ obtained from Suzuki's model can also take values larger than $0.5$. We show that for any $p \in (0,1)$ it is possible that \eqref{dsmaller} is fulfilled. By \eqref{condition}, a necessary condition for \eqref{dsmaller} is $E_p(V_1)>d_1$.

For some $E>d_1$, let the distribution of $(A_1,A_2)$ on $\mathbb{R}^+_0 \times \mathbb{R}^+_0$ be such that 
\begin{equation}
V_1= \begin{cases} 
	0.5 \,d_1, & \text{with probability } p,\\
	\frac{E-0.5\,p\,d_1}{1-p}, & \text{with probability } 1-p,
 \end{cases} \label{defv1-spec}
\end{equation}
that is there are only two firm values possible. In Section \ref{feas} we will see that there really is a distribution of $(A_1,A_2)$ on the positive quadrant that yields a distribution of $V_1$ as in \eqref{defv1-spec}.
\\Obviously,  $V_1\geq 0$ because of $E>d_1$, and $E_p(V_1)=E$, so the necessary condition for \eqref{dsmaller} is met.  Due to $\frac{E-0.5\,p\,d_1}{1-p}>d_1$, we indeed have $P(V_1 <d_1)=p$. Furthermore,
\begin{align*}
E_p(V_1^2)&= 0.25\,d_1^2\,p+\left(\frac{E-0.5\,p\,d_1}{1-p} \right)^2\,(1-p)\\
&=0.25\,d_1^2\,p + \frac{0.25\,d_1^2\,p^2-d_1\,p\,E+E^2}{1-p} = \frac{0.25\,d_1^2\,p-d_1\,p\,E+E^2}{1-p}
\intertext{and thus}
\frac{E_p(V_1)^2}{E_p(V_1^2)^{0.5}} \geq d_1 & \Leftrightarrow   \frac{E_p(V_1)^4}{E_p(V_1^2)} \geq d_1^2 \\
& \Leftrightarrow E^4 (1-p) \geq 0.25\,d_1^4 p-d_1^3 \,p\,E +E^2 d_1^2 \\
& \Leftrightarrow E^4 (1-p)  +d_1^3 \,p\,E -E^2 d_1^2 - 0.25\,d_1^4 p \geq 0.
\end{align*}
For given $p$, the inequality is always met if $E$ is chosen large enough.\\
For $p>0.5$ this means that the probability of default is underestimated if we use the lognormal model instead of Suzuki's model.

\subsubsection{Feasibility of required distributions of $V_1$} \label{feas}

In the previous sections, we saw that the lognormal model yields only a limited range of probabilities of default if the (conditional) distributions of $V_1$ are chosen suitably. Since the distribution of $V_1$ is a transformation of the distribution of exogenous assets, we have to make sure that it is in fact possible to choose the distribution of $(A_1,A_2)$  on $\mathbb{R}^+_0 \times \mathbb{R}^+_0$ such that
\begin{enumerate}
	\item $E(V_1\,|\,A_{\rm{s.}})$ is near to $d_1$ (Section \ref{sec.d1.bigger}). \label{nr1}
	\item for a given $p \in (0,1)$, the distribution of $V_1$ is of the form  \label{nr2}
	\[
	V_1= \begin{cases} 
	0.5\,d_1, & \text{with probability } p,\\
	\frac{E-0.5\,p\,d_1}{1-p}, & \text{with probability } 1-p,
 \end{cases} 
\]
	 (Section \ref{sec.d1.smaller}).
\end{enumerate}

Ad \ref{nr1}: 
If the conditional distribution of $(A_1,A_2)$ on $A_{\rm{s.}}$ is such that it has much mass near the ``border'' to $A_{\rm{d.}}$ (because we have $V_1=d_1$ on this border), the conditional expectation $E(V_1\,|\,A_{\rm{s.}})$  has the desired property in Section \ref{sec.d1.bigger}. \\

Ad \ref{nr2}:
Let $D:=\{(a_1,a_2):V_1(a_1,a_2)=0.5\, d_1\}=:D_{A_{\rm{dd}}} \cup D_{A_{\rm{ds}}}$ with 
\begin{align*}
D_{A_{\rm{dd}}}:=&\{(a_1,a_2) \in A_{\rm{dd}}: V_1(a_1,a_2)=0.5\, d_1\}\\
	=&\{(a_1,a_2) \in A_{\rm{dd}}: (a_1+ M^{\rm{d}}_{1,2}a_2)=0.5 (1-M^{\rm{d}}_{1,2} M^{\rm{d}}_{2,1}) d_1\},\\
D_{A_{\rm{ds}}}:=&\{(a_1,a_2) \in A_{\rm{ds}}: V_1(a_1,a_2)=0.5\, d_1\}\\
	=&\{(a_1,a_2) \in A_{\rm{ds}}: a_1+ M^{\rm{s}}_{1,2}a_2+(M^{\rm{d}}_{1,2}-M^{\rm{s}}_{1,2} )d_2 =0.5 (1-M^{\rm{s}}_{1,2} M^{\rm{d}}_{2,1}) d_1\}, 
\end{align*}
where we made use of \eqref{defv1}. \\
Since $E$ has to be chosen sufficiently large so that \eqref{dsmaller} is met, we can assume without loss of generality that $S:=\{(a_1,a_2): V_1(a_1,a_2) = \tfrac{E-0.5\,p\,d_1}{1-p}\} \subset A_{\rm{ss}}$. Then it follows  from \eqref{defv1} that
\begin{align*}
S=\{&(a_1,a_2) \in A_{\rm{ss}}:\\
\,&a_1+M^{\rm{s}}_{1,2}a_2+M^{\rm{s}}_{1,2}(M^{\rm{s}}_{2,1}-M^{\rm{d}}_{2,1}) d_1+(M^{\rm{s}}_{1,2}-M^{\rm{d}}_{1,2})d_2=(1-M^{\rm{s}}_{1,2} M^{\rm{s}}_{2,1} )\tfrac{E-0.5\,p\,d_1}{1-p}\},
\end{align*}
i.e. $S$ is a straight line in $A_{\rm{ss}}$. \\
In order to obtain the desired distribution of $V_1$, we thus only have to ensure that 
\begin{align*}
&P((A_1,A_2) \in  D) = p\\
&P((A_1,A_2) \in S) = 1-p\\
& P((A_1,A_2) \in \mathbb{R}^+_0 \times \mathbb{R}^+_0 \backslash (D \cup S)) =0,
\end{align*}
which can be constructed easily.

\subsection{Conclusion for the general case}
The above analysis shows that we cannot arrive at definite conclusions as to 
whether the lognormal model over- or underestimates the actual probability of default of a firm in the general case. Although \eqref{underest} provides an exact formula, we cannot solve this inequality for $p$ or the conditional moments of $V_1$ to gain further insight. \\ 
However, if $p=0$ or $p=1$, risk is systematically over- and underestimated, respectively. Further,  for given conditional distributions of $V_1$ on $A_{\rm{s.}}$ and $A_{\rm{d.}}$,  there is a whole interval $I_1:=[0,\epsilon)$ and hence a whole family  of distributions $P_p$ ($p \in I_1$) of $V_1$ (cf. \eqref{decomp} and Remark \ref{eps.mu}) such that the approximating lognormal model leads to an overestimation of the actual probability of default $p$. Similarly, there is an interval $I_2:=(\epsilon',1]$ with corresponding distributions of $V_1$  such that the approximating lognormal model leads to an underestimation of the actual probability of default $p \in I_2$.

If the expected firm value is smaller than the face value of debt, the lognormal model yields a probability of default of at least $0.5$, independently of the variance of the firm value. If the variance is small, the actual probability of default can be much smaller. On the other hand, there are also situations where the lognormal model grossly underestimates the actual risk of default.

\section{Summary and Outlook} \label{outlook}
For the case of two firms possessing a fraction of each other's of equity and/or debt, our analysis shows that Suzuki's method to evaluate these firms is much more appropriate than the application of Merton's model to each firm separately. Under cross-ownership, firm values are in general not lognormally distributed anymore, which is assumed under Merton's model. 
Our simulation study revealed  that the two models can yield relatively different probabilities of default (measured in terms of the relative risk ratio), if the two firms have established a high level of cross-ownership. The direction of the effect (i.e. over- or underestimation) depends on the considered type of cross-ownership. A theoretical analysis of our empirical findings confirmed that in the limit the lognormal model can lead to both over- and underestimation of the actual probability of default of a firm. 
Furthermore, we provide a formula that allows us to check for an arbitrary scenario of cross-ownership and any distribution of exogenous assets (on the positive quadrant) whether the approximating lognormal model will  over- or underestimate the related probability of default of a firm. In particular, any given distribution of exogenous asset values on the positive quadrant (non-degenerate in a certain sense)  can be transformed into  a new, ``extreme'' distribution of exogenous assets yielding such a high or low actual probability of default  that the approximating lognormal model will   under- and overestimate this risk, respectively.\\

Future research could aim at extending this analysis to the joint probability of default of the two firms.
Furthermore, one could consider the discrepancy between the univariate distribution functions of $V_1$ under Suzuki's model and the lognormal model in dependency of the model parameters, for example the realized type and level of cross-ownership and the face values of liabilities. For that, we are planning a further simulation study.
In a next step, it would be interesting to examine the bivariate distribution of $V_1$ and $V_2$ and the resulting dependency structure. In a short, at present unpublished analysis, we already gained a first impression which leads us to the conjecture that this dependency structure cannot be captured by the lognormal model, see Figure \ref{bivariate} for an example.

\begin{figure}%
\begin{center}
\subfigure[]{
\resizebox*{8cm}{!}{\includegraphics[width=0.48\columnwidth]{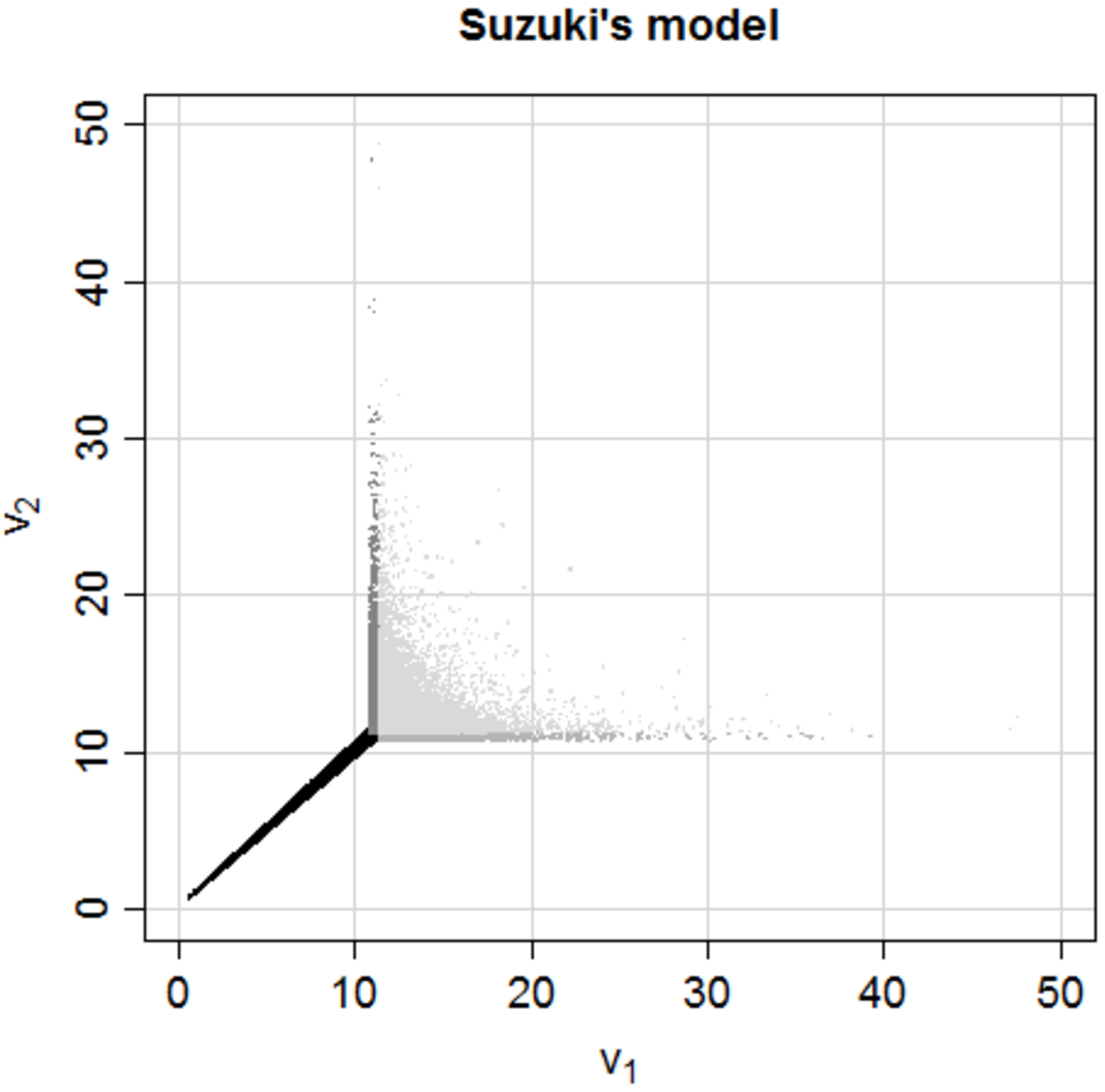}}}%
\subfigure[]{
\resizebox*{8cm}{!}{\includegraphics[width=0.48\columnwidth]{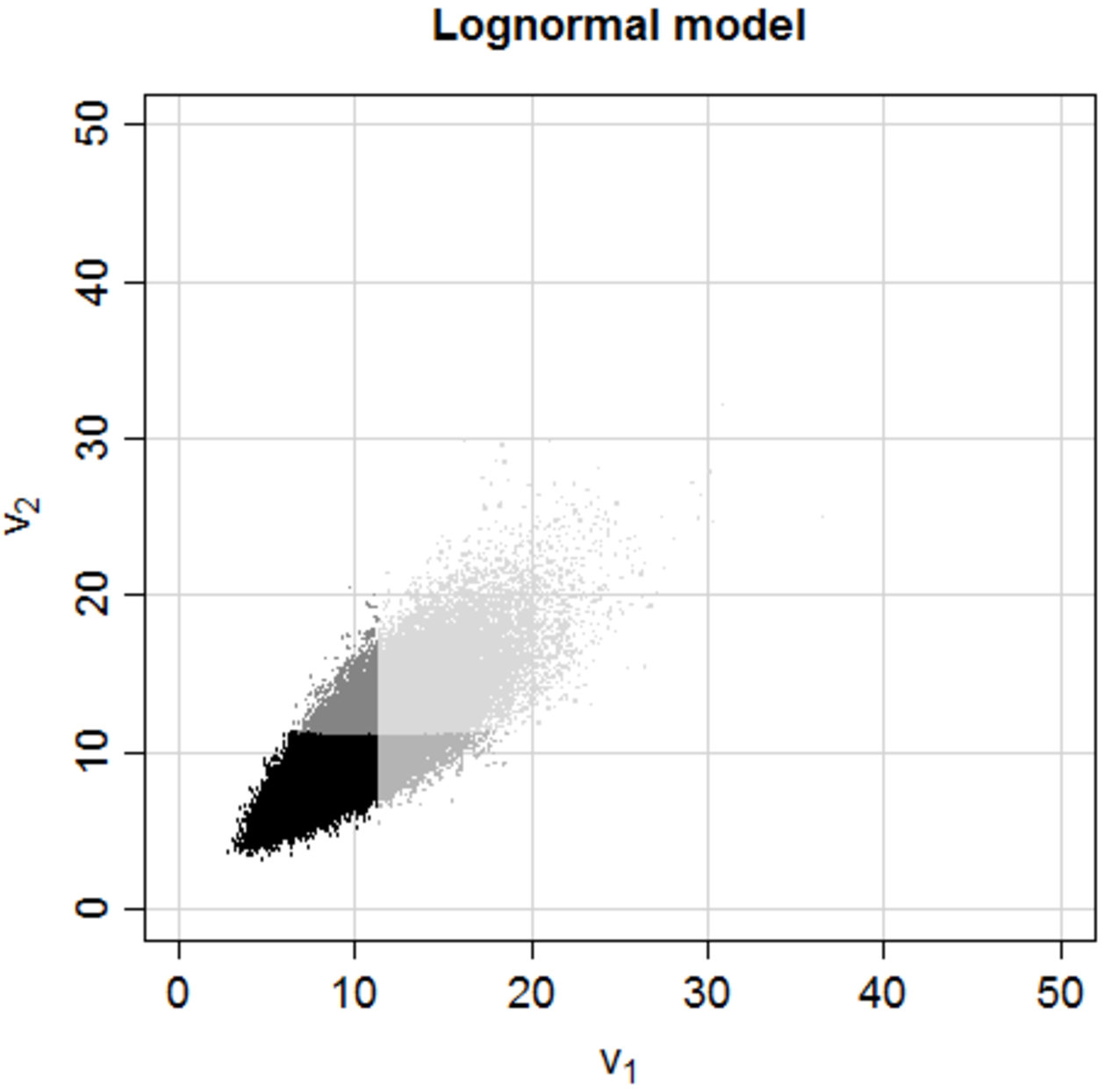}}}%
\caption{Scatterplot of bivariate firm values $(V_1,V_2)$ (XOS of debt only), stratified by Suzuki areas $A_{\rm{dd}}$ (black), $A_{\rm{ds}}$ (darkgrey),  $A_{\rm{sd}}$ (grey) and  $A_{\rm{ss}}$ (lightgrey);  $\sigma^2=1$, $d_1=11.3$, $M^{\rm{d}}_{1,2}=M^{\rm{d}}_{2,1}=0.95$, $n=100,000$; (a) firm values resulting from Suzuki's model  (b) firm values resulting from matched lognormal distribution.}
\label{bivariate}
\end{center}
\end{figure}

\pagebreak
\addcontentsline{toc}{section}{Bibliography}
\selectlanguage{english}
\bibliography{Bibliography_Paper_Theorie}

\begin{thebibliography}{13}
\providecommand{\natexlab}[1]{#1}
\providecommand{\url}[1]{\texttt{#1}}
\expandafter\ifx\csname urlstyle\endcsname\relax
  \providecommand{\doi}[1]{doi: #1}\else
  \providecommand{\doi}{doi: \begingroup \urlstyle{rm}\Url}\fi

\bibitem[Bohn(2000)]{Boh00}
J.~Bohn.
\newblock A survey of contingent-claims approaches to risky debt valuation.
\newblock \emph{The Journal of Risk Finance}, 1\penalty0 (3):\penalty0 53--70,
  2000.

\bibitem[Crouhy et~al.(2000)Crouhy, Galai, and Mark]{Cro00}
M.~Crouhy, D.~Galai, and R.~Mark.
\newblock A comparative analysis of current credit risk models.
\newblock \emph{Journal of Banking \& Finance}, 24\penalty0 (1--2):\penalty0
  59--117, 2000.

\bibitem[Duffie and Huang(1996)]{Duf96}
D.~Duffie and M.~Huang.
\newblock Swap rates and credit quality.
\newblock \emph{Journal of Finance}, 51\penalty0 (3):\penalty0 921--949, 1996.

\bibitem[Eisenberg and Noe(2001)]{Eis01}
L.~Eisenberg and T.~Noe.
\newblock Systemic risk in financial systems.
\newblock \emph{Management Science}, 47\penalty0 (2):\penalty0 236--249, 2001.

\bibitem[Elsinger(2007)]{Els07}
H.~Elsinger.
\newblock Financial networks, cross holdings, and limited liability.
\newblock working paper, Austrian National Bank, Economic Studies Division,
  2007.
\newblock URL \url{http://ssrn.com/abstract=916763}.

\bibitem[Fenton(1960)]{Fen60}
L.~Fenton.
\newblock The sum of log-normal probability distributions in scatter
  transmission systems.
\newblock \emph{IRE Transactions on Communication Systems}, 8\penalty0
  (1):\penalty0 57--67, 1960.

\bibitem[Fischer(2012)]{Fis12}
T.~Fischer.
\newblock No-arbitrage pricing under systemic risk: accounting for
  cross-ownership.
\newblock \emph{Mathematical Finance}, 2012.
\newblock DOI: 10.1111/j.1467-9965.2012.00526.x.

\bibitem[Giesecke(2004)]{Gie04}
K.~Giesecke.
\newblock Correlated default with incomplete information.
\newblock \emph{Journal of Banking \& Finance}, 28\penalty0 (7):\penalty0
  1521--1545, 2004.

\bibitem[Jarrow and Turnbull(1995)]{Jar95}
R.~Jarrow and S.~Turnbull.
\newblock Pricing derivatives on financial securities subject to credit risk.
\newblock \emph{Journal of Finance}, 50\penalty0 (1):\penalty0 53--85, 1995.

\bibitem[Lucas(1995)]{Luc95}
D.~Lucas.
\newblock Default correlation and credit analysis.
\newblock \emph{The Journal of Fixed Income}, 4\penalty0 (4):\penalty0 76--87,
  1995.

\bibitem[Merton(1974)]{Mer74}
R.~C. Merton.
\newblock On the pricing of corporate debt: the risk structure of interest
  rates.
\newblock \emph{Journal of Finance}, 29\penalty0 (2):\penalty0 449--470, 1974.

\bibitem[Suzuki(2002)]{Suz02}
T.~Suzuki.
\newblock Valuing corporate debt: the effect of cross-holdings of stock and
  debt.
\newblock \emph{Journal of the Operations Research Society of Japan},
  45\penalty0 (2):\penalty0 123--144, 2002.

\bibitem[Zhou(2001)]{Zho01}
C.~Zhou.
\newblock An analysis of default correlations and multiple defaults.
\newblock \emph{The Review of Financial Studies}, 14\penalty0 (2):\penalty0
  555--576, 2001.

\end{thebibliography}
\bibliographystyle{plainnat}

\appendix
\setcounter{secnumdepth}{0}
\section{Appendix: Some technical results}
 \setcounter{lemma}{0}
\renewcommand{\thelemma}{A\arabic{lemma}}

\begin{lemma} \label{limmusig}
Let $V_1^{\rm{s}}$ be given by  \eqref{v1e} in Section \ref{sim12}, let $(A_1,A_2)$ follow a lognormal distribution as given in \eqref{logNv} and let $\tilde{\mu}$ and $\tilde{\sigma}$ be defined as in \eqref{defmus} and \eqref{defsigs} in Section \ref{limitequity}, respectively. Then
\begin{align*}
\lim_{M^{\rm{s}}_{1,2},M^{\rm{s}}_{2,1} \rightarrow 1} \tilde{\sigma}  < \infty \quad\text{and} \quad
\tilde{\mu}   \rightarrow \infty \; \text{ for } {M^{\rm{s}}_{1,2},M^{\rm{s}}_{2,1} \rightarrow 1}.
\end{align*}
\end{lemma}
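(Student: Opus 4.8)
The plan is to exploit the fact that, as $m_1:=M^{\rm{s}}_{1,2}$ and $m_2:=M^{\rm{s}}_{2,1}$ tend to $1$, the growth of $V_1^{\rm{s}}$ is driven \emph{entirely} by the prefactor $c:=(1-m_1m_2)^{-1}\to\infty$ acting on the piece of $V_1^{\rm{s}}$ supported on $A_{\rm{ss}}$, whereas the pieces on the other three areas stay bounded. Reading off \eqref{v1e}, I would write
\[
V_1^{\rm{s}}=c\,X_m+Y_m,
\]
where $X_m:=\bigl(A_1+m_1A_2-m_1m_2d_1-m_1d_2\bigr)\,1_{A_{\rm{ss}}}$ and $Y_m:=V_1^{\rm{s}}\,1_{A_{\rm{ss}}^{c}}$. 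Both are non-negative and have disjoint supports, so $X_mY_m=0$ pointwise, giving the clean identities $E(V_1^{\rm{s}})=c\,E(X_m)+E(Y_m)$ and $E((V_1^{\rm{s}})^2)=c^2E(X_m^2)+E(Y_m^2)$ (no cross term). The natural limiting object is $X^*:=(A_1+A_2-d_1-d_2)\,1_{\{A_1+A_2\ge d_1+d_2\}}$, which is the pointwise limit of $X_m$, using that on $A_{\rm{ss}}$ (with $M^{\rm{d}}=0$ in \eqref{Ass}) both defining inequalities collapse to $a_1+a_2\ge d_1+d_2$.

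Next I would establish the moment asymptotics. Since $0\le X_m\le A_1+A_2$ and $0\le Y_m\le A_1+A_2$ on their supports, and $A_1+A_2$ is square-integrable under the lognormal law \eqref{logNv}, dominated convergence (with dominating function $(A_1+A_2)^2$) together with the almost-sure convergence $1_{A_{\rm{ss}}}\to 1_{\{A_1+A_2\ge d_1+d_2\}}$ — the boundary being a null set for the continuous law — yields $E(X_m)\to E(X^*)$, $E(X_m^2)\to E((X^*)^2)$, while $E(Y_m^2)$ stays bounded. The lognormal full-support assumption guarantees $E(X^*)>0$ and $\mathrm{Var}(X^*)>0$. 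Consequently $E(V_1^{\rm{s}})\sim c\,E(X^*)\to\infty$ and $E((V_1^{\rm{s}})^2)\sim c^2E((X^*)^2)$.

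Finally I would feed these into \eqref{defmus} and \eqref{defsigs}, both rewritten via $\mathrm{Var}(V_1^{\rm{s}})+E(V_1^{\rm{s}})^2=E((V_1^{\rm{s}})^2)$. For the standard deviation,
\[
\frac{\mathrm{Var}(V_1^{\rm{s}})}{E(V_1^{\rm{s}})^2}=\frac{E((V_1^{\rm{s}})^2)}{E(V_1^{\rm{s}})^2}-1\longrightarrow\frac{E((X^*)^2)}{E(X^*)^2}-1=\frac{\mathrm{Var}(X^*)}{E(X^*)^2}<\infty,
\]
so $\tilde{\sigma}\to\bigl(\ln(1+\mathrm{Var}(X^*)/E(X^*)^2)\bigr)^{1/2}<\infty$. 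For the location parameter,
\[
\tilde{\mu}=\tfrac12\ln\!\frac{E(V_1^{\rm{s}})^4}{E((V_1^{\rm{s}})^2)}\sim\tfrac12\ln\!\Bigl(c^2\,\frac{E(X^*)^4}{E((X^*)^2)}\Bigr)=\ln c+\tfrac12\ln\!\frac{E(X^*)^4}{E((X^*)^2)}\longrightarrow\infty,
\]
since $c\to\infty$ while the additive constant is finite.

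The main obstacle is the passage to the limit of the conditional moments: justifying $E(X_m)\to E(X^*)$ and $E(X_m^2)\to E((X^*)^2)$ rigorously requires the almost-sure convergence of the indicator $1_{A_{\rm{ss}}}$ (controlling the varying domains) combined with a uniform integrable majorant, and it requires confirming that the limiting excess $X^*$ has strictly positive mean, so that the leading-order terms $c\,E(X^*)$ and $c^2E((X^*)^2)$ genuinely dominate the bounded remainders rather than cancelling. Everything else is elementary once this asymptotic separation is in place.
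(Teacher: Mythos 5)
Your proof is correct and follows essentially the same route as the paper's: isolate the $A_{\rm{ss}}$ piece, whose prefactor $(1-M^{\rm{s}}_{1,2}M^{\rm{s}}_{2,1})^{-1}$ blows up, pass to the limit object $(A_1+A_2-d_1-d_2)\,1_{\{A_1+A_2\ge d_1+d_2\}}$ by dominated convergence with the square-integrable majorant $(A_1+A_2)^2$, note the strict positivity of its first two moments under the lognormal law, and feed the resulting asymptotics into \eqref{defmus} and \eqref{defsigs}. Your only departures are in bookkeeping, and they are improvements rather than gaps: using raw second moments of the disjointly supported pieces ($E((V_1^{\rm{s}})^2)=c^2E(X_m^2)+E(Y_m^2)$, no cross term) avoids the paper's variance decomposition \eqref{varv1} and the asymptotic-equivalence step \eqref{asympt1}, and your almost-sure convergence of $1_{A_{\rm{ss}}}$ (with the boundary line $\{a_1+a_2=d_1+d_2\}$ dismissed as a null set of the continuous law) replaces the pointwise set-limit machinery of Lemma \ref{existence.Ass}.
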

\begin{proof}
From 
\begin{equation} \label{v1ass}
V_1^{\rm{s}} \big{|}_{A_{\rm{ss}}}= \frac{1}{1- M^{\rm{s}}_{1,2} M^{\rm{s}}_{2,1}} (\underbrace{A_1+M^{\rm{s}}_{1,2} A_2- M^{\rm{s}}_{1,2}(M^{\rm{s}}_{2,1}d_1+d_2)}_{\geq 0 \text{ by } \eqref{Ass}}), \end{equation} 
we obtain
\begin{align}
E(V_1^{\rm{s}} \cdot 1_{A_{\rm{ss}}}) &=  \frac{1}{1- M^{\rm{s}}_{1,2}M^{\rm{s}}_{2,1} }E([A_1+M^{\rm{s}}_{1,2} A_2- M^{\rm{s}}_{1,2}(M^{\rm{s}}_{2,1}d_1+d_2)]\cdot 1_{A_{\rm{ss}}}), \label{ev1ass}\\
\rm{Var}(V_1^{\rm{s}}\cdot 1_{A_{\rm{ss}}}) &= \left(\frac{1}{1- M^{\rm{s}}_{1,2}M^{\rm{s}}_{2,1}}\right)^2\;\rm{Var}([A_1+M^{\rm{s}}_{1,2} A_2]\cdot 1_{{A_{\rm{ss}}}})\notag\\
&=\left(\frac{1}{1- M^{\rm{s}}_{1,2}M^{\rm{s}}_{2,1}}\right)^2 \left( E([A_1+M^{\rm{s}}_{1,2} A_2]^2\cdot  1_{A_{\rm{ss}}})-E([A_1+M^{\rm{s}}_{1,2} A_2]  \cdot 1_{A_{\rm{ss}}})^2\right). \label{varv1ass}
\end{align}
Let $1_{A_{\rm{ss}}^*}$ denote the limit of $1_{A_{\rm{ss}}}$ if $M^{\rm{s}}_{1,2},M^{\rm{s}}_{2,1} \rightarrow 1$. For its existence, see Lemma \ref{existence.Ass}. In particular, $1_{A_{\rm{ss}}^*} \geq 1_{A_{\rm{ss}}}$ for all $M^{\rm{s}}_{1,2},M^{\rm{s}}_{2,1} \in (0,1)$. 
Because of
\begin{align*}
[A_1+ M^{\rm{s}}_{1,2} A_2-M^{\rm{s}}_{1,2}(M^{\rm{s}}_{2,1}d_1+d_2)] \cdot 1_{A_{\rm{ss}}}& \leq [A_1+ M^{\rm{s}}_{1,2}A_2]\cdot 1_{A_{\rm{ss}}}\leq [A_1+A_2] \cdot 1_{A_{\rm{ss}}^*}, \\
[A_1+ M^{\rm{s}}_{1,2}A_2]^2\cdot 1_{A_{\rm{ss}}} &\leq  (A_1+A_2)^2\cdot 1_{A_{\rm{ss}}^*}
\quad \quad \text{for all } M_{1,2}^{\rm{s}}, M_{2,1}^{\rm{s}} \in (0,1),
\end{align*}
 the Dominated Convergence Theorem  implies that if $M^{\rm{s}}_{1,2},M^{\rm{s}}_{2,1} \rightarrow 1$,
\begin{align}
E([A_1+M^{\rm{s}}_{1,2} A_2- M^{\rm{s}}_{1,2}(M^{\rm{s}}_{2,1}d_1+d_2)] \cdot 1_{A_{\rm{ss}}})& \rightarrow E([A_1+A_2-d_1-d_2]\cdot 1_{A_{\rm{ss}}^*})<\infty, \label{limev1ass}\\
E([A_1+M^{\rm{s}}_{1,2} A_2] \cdot 1_{A_{\rm{ss}}}) &\rightarrow E([A_1+A_2] \cdot 1_{A_{\rm{ss}}^*})< \infty, \label{limvarv1ass1}\\
E([A_1+M^{\rm{s}}_{1,2} A_2]^2 \cdot 1_{A_{\rm{ss}}}) &\rightarrow E([A_1+A_2]^2 \cdot 1_{A_{\rm{ss}}^*})<\infty, \label{limvarv1ass}
\end{align}
i.e.  $\rm{Var}([A_1+M^{\rm{s}}_{1,2} A_2] \cdot 1_{A_{\rm{ss}}})\rightarrow \rm{Var}([A_1+ A_2]\cdot 1_{A_{\rm{ss}}^*})$ for  $M^{\rm{s}}_{1,2},M^{\rm{s}}_{2,1} \rightarrow 1$. Note that both, $E([A_1+A_2-d_1-d_2]\cdot 1_{A_{\rm{ss}}^*})$ and $\rm{Var}([A_1+ A_2]\cdot 1_{A_{\rm{ss}}^*})$, are strictly positive due to  the lognormal distribution of $(A_1,A_2)$ and   the fact that  $A_{\rm{ss}}^*=\{(a_1,a_2)\geq 0:a_1+a_2 \geq d_1+d_2\} \neq \emptyset$ (cf.  Lemma \ref{existence.Ass}). We obtain from  \eqref{ev1ass}--\eqref{limvarv1ass} that
\begin{align}
 E(V_1^{\rm{s}} \cdot 1_{A_{\rm{ss}}}), \rm{Var}(V_1^{\rm{s}} \cdot 1_{A_{\rm{ss}}}) &\rightarrow \infty,\quad  M^{\rm{s}}_{1,2},M^{\rm{s}}_{2,1} \rightarrow 1, \notag
\end{align}
and
\begin{align}
\frac{\rm{Var}(V_1^{\rm{s}} \cdot 1_{A_{\rm{ss}}})}{ E(V_1^{\rm{s}} \cdot 1_{A_{\rm{ss}}})} &= \frac{1}{1- M^{\rm{s}}_{1,2}M^{\rm{s}}_{2,1}} \frac{\rm{Var}([A_1+ A_2]\cdot 1_{A_{\rm{ss}}^*})}{E([A_1+A_2-d_1-d_2]\cdot 1_{A_{\rm{ss}}^*})} \rightarrow \infty, \label{vareass}\\
\frac{\rm{Var}(V_1^{\rm{s}} \cdot 1_{A_{\rm{ss}}})}{ E(V_1^{\rm{s}} \cdot 1_{A_{\rm{ss}}})^2} &\rightarrow \frac{\rm{Var}([A_1+ A_2]\cdot 1_{A_{\rm{ss}}^*})}{E([A_1+A_2-d_1-d_2]\cdot 1_{A_{\rm{ss}}^*})^2}  < \infty, \quad M^{\rm{s}}_{1,2},M^{\rm{s}}_{2,1} \rightarrow 1. \label{vareeass}
\end{align}

Then we have for the expectation and variance of $V_1^{\rm{s}}$ on $\mathbb{R}^+_0 \times \mathbb{R}^+_0$ that
\begin{align}
E(V_1^{\rm{s}})&= E(V_1^{\rm{s}}\cdot 1_{A_{\rm{ss}}})+ E(V_1^{\rm{s}} \cdot 1_{A_{\rm{ss}}^c}) \rightarrow \infty \quad \text{for } M^{\rm{s}}_{1,2},M^{\rm{s}}_{2,1} \rightarrow 1, \label{ev1} 
\end{align}
and 
\begin{align}
\rm{Var}(V_1^{\rm{s}})=& \rm{Var}(V_1^{\rm{s}} \cdot 1_{A_{\rm{ss}}})+  \rm{Var}(V_1^{\rm{s}}\cdot 1_{A_{\rm{ss}}^c})
-2E(V_1^{\rm{s}} \cdot 1_{A_{\rm{ss}}})E(V_1^{\rm{s}}\cdot 1_{A_{\rm{ss}}^c}),  \label{varv1}
\end{align}
where $ \lim_{ M^{\rm{s}}_{1,2},M^{\rm{s}}_{2,1} \rightarrow 1} E(V_1^{\rm{s}} \cdot 1_{A_{\rm{ss}}^c}) < \infty$ and $ \lim_{ M^{\rm{s}}_{1,2},M^{\rm{s}}_{2,1} \rightarrow 1} \rm{Var}(V_1^{\rm{s}} \cdot 1_{A_{\rm{ss}}^c}) < \infty$, since straightforward calculations show that $V_1^{\rm{s}} \cdot 1_{A_{\rm{ss}}^c}<d_1+ \frac{1}{M^{\rm{s}}_{2,1}}d_2$. Thus, \eqref{vareass} and \eqref{varv1} imply
\begin{align}
\rm{Var}(V_1^{\rm{s}})&\,\rightarrow\, \infty,\quad  M^{\rm{s}}_{1,2},M^{\rm{s}}_{2,1} \rightarrow 1, \label{varinfty}
\intertext{and}
\frac{\rm{Var}(V_1^{\rm{s}})}{E(V_1^{\rm{s}})^2} &\thicksim\footnotemark \frac{\rm{Var}(V_1^{\rm{s}}\cdot 1_{A_{\rm{ss}}})}{ E(V_1^{\rm{s}} \cdot 1_{A_{\rm{ss}}})^2}, \quad  M^{\rm{s}}_{1,2},M^{\rm{s}}_{2,1} \rightarrow 1, \label{asympt1}
\end{align}
because all the other terms in \eqref{ev1} and \eqref{varv1} are dominated by the expressions in \eqref{asympt1}, which go to infinity.  
 Hence,   by \eqref{vareeass},
\begin{align}
\frac{\rm{Var}(V_1^{\rm{s}})}{E(V_1^{\rm{s}})^2} & \rightarrow \frac{\rm{Var}([A_1+A_2]\cdot 1_{{A_{\rm{ss}}^*}})}{E([A_1+A_2- d_1-d_2]\cdot 1_{A_{\rm{ss}}^*})^2} < \infty, \quad\quad M^{\rm{s}}_{1,2},M^{\rm{s}}_{2,1} \rightarrow 1. \label{limVarE}
\end{align}
\footnotetext{For functions $f$ and $g$, $f(x) \thicksim g(x), x \rightarrow x_0,\; \Leftrightarrow\; \lim_{x \rightarrow x_0} \frac{f(x)}{g(x)}=1. $} 
 Altogether, by \eqref{ev1} and \eqref{limVarE},
\begin{align*}
&\lim_{M^{\rm{s}}_{1,2},M^{\rm{s}}_{2,1} \rightarrow 1} \tilde{\sigma} =  \ln \left(\lim_{M^{\rm{s}}_{1,2},M^{\rm{s}}_{2,1} \rightarrow 1}\frac{\rm{Var}(V_1^{\rm{s}})}{E(V_1^{\rm{s}})^2}+1 \right)^{0.5} < \infty
\intertext{and}
&\tilde{\mu} =  -\frac{1}{2} \ln \left(\underbrace{\frac{\rm{Var}(V_1^{\rm{s}})}{E(V_1^{\rm{s}})^4}+ E(V_1^{\rm{s}})^{-2}}_{\rightarrow 0}\right)  \rightarrow \infty\;\; \text{for }{M^{\rm{s}}_{1,2},M^{\rm{s}}_{2,1}  \rightarrow 1}.
\end{align*}

\end{proof}

\begin{lemma} \label{existence.Ass}
Let $A_{\rm{ss}}$ be defined as in \eqref{Ass}. Under cross-ownership of equity only, the pointwise limit  of $1_{A_{\rm{ss}}}$ for $M^{\rm{s}}_{1,2},M^{\rm{s}}_{2,1} \rightarrow 1$ exists and is given by $1_{A_{\rm{ss}}^*}$ with
\begin{equation}
A_{\rm{ss}}^*:=\{(a_1,a_2)\geq 0: a_1+a_2 \geq d_1+d_2\}. \label{limass.eq}
\end{equation}
\end{lemma}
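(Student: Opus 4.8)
The plan is to reduce the defining inequalities of $A_{\rm{ss}}$ to the equity-only setting, rewrite them in shifted coordinates, and then study the two resulting half-plane conditions as $M^{\rm{s}}_{1,2},M^{\rm{s}}_{2,1}\to 1$. First I would substitute $M^{\rm{d}}_{1,2}=M^{\rm{d}}_{2,1}=0$ into \eqref{Ass}, which collapses the two inequalities to
\[
(a_1-d_1)+M^{\rm{s}}_{1,2}(a_2-d_2)\geq 0\quad\text{and}\quad M^{\rm{s}}_{2,1}(a_1-d_1)+(a_2-d_2)\geq 0.
\]
Writing $u:=a_1-d_1$ and $v:=a_2-d_2$, membership in $A_{\rm{ss}}$ becomes $u+M^{\rm{s}}_{1,2}v\geq 0$ together with $M^{\rm{s}}_{2,1}u+v\geq 0$, whereas $A_{\rm{ss}}^*$ in \eqref{limass.eq} is exactly $\{u+v\geq 0\}$.

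Next I would establish the inclusion $A_{\rm{ss}}\subseteq A_{\rm{ss}}^*$ for every pair $M^{\rm{s}}_{1,2},M^{\rm{s}}_{2,1}\in(0,1)$, which at the same time delivers the monotonicity $1_{A_{\rm{ss}}}\leq 1_{A_{\rm{ss}}^*}$ invoked in the proof of Lemma~\ref{limmusig}. This follows from a short sign analysis: the case $u<0,\,v<0$ contradicts $u+M^{\rm{s}}_{1,2}v\geq 0$ and is impossible; if $u\geq 0$ the second inequality gives $u+v\geq u(1-M^{\rm{s}}_{2,1})\geq 0$; and if $v\geq 0$ the first gives $u+v\geq v(1-M^{\rm{s}}_{1,2})\geq 0$. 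Hence $u+v\geq 0$ holds throughout $A_{\rm{ss}}$. For the pointwise limit I would then fix $(a_1,a_2)$ and split on the sign of $u+v$. If $u+v>0$, both left-hand sides converge to $u+v>0$, so by continuity the point lies in $A_{\rm{ss}}$ for all cross-ownership fractions close enough to $1$, giving $1_{A_{\rm{ss}}}\to 1=1_{A_{\rm{ss}}^*}$; if $u+v<0$, the inclusion just proved shows the point never belongs to $A_{\rm{ss}}$, so $1_{A_{\rm{ss}}}\equiv 0=1_{A_{\rm{ss}}^*}$.

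The delicate step, and the main obstacle, is the boundary $\{u+v=0\}$. There $v=-u$ forces the two conditions into $u(1-M^{\rm{s}}_{1,2})\geq 0$ and $u(M^{\rm{s}}_{2,1}-1)\geq 0$, which, since $M^{\rm{s}}_{1,2},M^{\rm{s}}_{2,1}<1$, hold simultaneously only at $u=0$. Thus the genuine pointwise limit of $1_{A_{\rm{ss}}}$ on the boundary line equals $1$ only at $(d_1,d_2)$ and $0$ elsewhere, whereas $1_{A_{\rm{ss}}^*}=1$ on the entire line. The resolution is that $\{(a_1,a_2):a_1+a_2=d_1+d_2\}$ is a Lebesgue-null, hence $P$-null, set under the absolutely continuous lognormal law, so the pointwise limit exists everywhere and coincides with $1_{A_{\rm{ss}}^*}$ $P$-almost surely. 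This is exactly the sense in which Lemma~\ref{limmusig} needs it, namely for the Dominated Convergence Theorem and the ensuing moment computations, so I would state the identification of the limit with $1_{A_{\rm{ss}}^*}$ up to this null set and remark that nothing downstream is affected.
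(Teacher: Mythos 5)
Your proof is correct, and your boundary analysis is in fact sharper than the paper's own argument. The reduction is the same (set $M^{\rm{d}}_{1,2}=M^{\rm{d}}_{2,1}=0$ in \eqref{Ass}), but from there the routes diverge: the paper takes arbitrary strictly increasing sequences $M^{\rm{s}}_{1,2,n_1},M^{\rm{s}}_{2,1,n_2}\rightarrow 1$, observes that the sets $A_{\rm{ss},n_1,n_2}$ increase in $n_1,n_2$ (so the pointwise limit exists and equals $1_{A}$ for some set $A$), proves $A\subseteq A_{\rm{ss}}^*$ by passing to the limit in the two inequalities, and then, for the reverse inclusion, asserts that $1_{A}(a_1^*,a_2^*)=0$ means \emph{both} defining inequalities fail for all $n_1,n_2$. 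That step is a logical slip: not belonging to an intersection only means that \emph{at least one} of the two inequalities fails, and your analysis on the line $\{a_1+a_2=d_1+d_2\}$ shows the slip is not harmless. Indeed, for $a_1=d_1-t$, $a_2=d_2+t$ with $0<t\leq d_1$, the first inequality reads $-t\,(1-M^{\rm{s}}_{1,2})\geq 0$ and fails for every $M^{\rm{s}}_{1,2}<1$, so the true pointwise limit is $0$ there although $1_{A_{\rm{ss}}^*}=1$; the lemma as literally stated is therefore false on this portion of the boundary line, and your formulation --- the pointwise limit exists everywhere and agrees with $1_{A_{\rm{ss}}^*}$ off a Lebesgue-null set, hence $P$-a.s.\ under the absolutely continuous law of $(A_1,A_2)$ --- is the correct statement. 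You are also right about what is needed downstream: Lemma~\ref{limmusig} uses exactly the domination $1_{A_{\rm{ss}}}\leq 1_{A_{\rm{ss}}^*}$ (which your sign analysis delivers directly, and which the paper obtains from set monotonicity) together with $P$-a.s.\ convergence in the Dominated Convergence Theorem, and none of the limiting moments such as $E([A_1+A_2-d_1-d_2]\cdot 1_{A_{\rm{ss}}^*})$ are affected by modifying the limit on a null set. One small remark: since your argument treats the genuine joint limit $(M^{\rm{s}}_{1,2},M^{\rm{s}}_{2,1})\rightarrow(1,1)$ pointwise by continuity on $\{a_1+a_2>d_1+d_2\}$ and by the inclusion $A_{\rm{ss}}\subseteq A_{\rm{ss}}^*$ on $\{a_1+a_2<d_1+d_2\}$, the sequence-independence that the paper's monotone construction is designed to provide comes for free in your version.
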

\begin{proof}
Under cross-ownership of equity only, the formula of $A_{\rm{ss}}$ reduces to
\begin{equation}
\{(a_1,a_2)\geq 0: a_1+M^{\rm{s}}_{1,2}a_2\geq d_1+M^{\rm{s}}_{1,2}d_2,\; M^{\rm{s}}_{2,1}a_1+a_2 \geq M^{\rm{s}}_{2,1}d_1+d_2\}. \label{ass.eq}
\end{equation}
Let $(M^{\rm{s}}_{1,2,n_1})_{n_1 \in \mathbb{N}}$ and $(M^{\rm{s}}_{2,1,n_2})_{n_2 \in \mathbb{N}}$ be arbitrary, but strictly increasing sequences in $(0,1)$ with limit 1, and let $A_{\rm{ss,n_1,n_2}}$ stand for  $A_{\rm{ss}}$ associated with the $n_1$th and $n_2$th element the above sequences. Then it is easy to see from \eqref{ass.eq} that $A_{\rm{ss},n_1,n_2}\subset A_{\rm{ss},n_1+1,n_2}$, i.e. $A_{\rm{ss},n_1,n_2}$ is strictly increasing in $n_1$.  Similarly, $A_{\rm{ss},n_1,n_2}$ is strictly increasing in $n_2$. Hence, the indicator function of  $A_{\rm{ss},n_1,n_2}$ is pointwise strictly increasing in $n_1$ and $n_2$, and its pointwise limit exists and is a function with values in $\{0,1\}$ only. As such, this limit is of the form $1_A$ for some set $A \subseteq \mathbb{R}^+_0 \times \mathbb{R}^+_0$. \\
 In order to show $A=A_{\rm{ss}}^*$, we first assume $1_A(a_1^*,a_2^*)=1$, i.e. there is an $N \in \mathbb{N}$ such that $(a_1^*,a_2^*) \in A_{\rm{ss},n_1,n_2}$ for all $n_1,n_2 \geq N$, i.e.
\begin{align*}
a_1^*+M^{\rm{s}}_{1,2,n_1}a_2^*&\geq d_1+M^{\rm{s}}_{1,2,n_1}d_2\quad \text{for all } n_1 \geq N,\\
M^{\rm{s}}_{2,1,n_2}a_1^*+a_2^*&\geq M^{\rm{s}}_{2,1,n_2}d_1+d_2 \quad \text{for all } n_2 \geq N.
\end{align*}
In the limit of $n_1,n_2 \rightarrow \infty$, this means that $a_1^*+a_2^*\geq d_1+d_2$, i.e. $A\subseteq A_{\rm{ss}}^*$. \\
Let now  $1_A(a_1^*,a_2^*)=0$, i.e. for all $n_1,n_2 \in \mathbb{N}$,
\begin{align}
a_1^*+M^{\rm{s}}_{1,2,n_1}a_2^*&< d_1+M^{\rm{s}}_{1,2,n_1}d_2, \label{limass.eq1}\\
M^{\rm{s}}_{2,1,n_2}a_1^*+a_2^*&< M^{\rm{s}}_{2,1,n_2}d_1+d_2.\label{limass.eq2}
\end{align}
In the limit of $n_1,n_2 \rightarrow \infty$, we obtain from \eqref{limass.eq1} and \eqref{limass.eq2} that $a_1^*+a_2^*\leq d_1+d_2$. If we had $a_1^*+a_2^*= d_1+d_2$, \eqref{limass.eq1} and \eqref{limass.eq2} would imply
$a_2^* > d_2$ and $a_1^* > d_1$, in contradiction to $a_1^*+a_2^*= d_1+d_2$. Hence, $a_1^*+a_2^*< d_1+d_2$, i.e. $(a_1^*,a_2^*)\not\in A_{\rm{ss}}^*$, and the assertion follows.
\end{proof}

\begin{lemma} \label{existence}
Let $A_{\rm{ss}}$, $A_{\rm{sd}}$, $A_{\rm{ds}}$ and $A_{\rm{dd}}$ be given by \eqref{Ass}--\eqref{Add}. Under cross-ownership of debt only, the pointwise limits of their indicator functions $1_{A_{\rm{ss}}}$, $1_{A_{\rm{sd}}}$, $1_{A_{\rm{ds}}}$ and $1_{A_{\rm{dd}}}$  exist for $M^{\rm{d}}_{1,2},M^{\rm{d}}_{2,1} \rightarrow 1$. 
 \end{lemma}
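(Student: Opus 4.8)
The plan is to follow the monotonicity argument of Lemma~\ref{existence.Ass}, but to apply it not to the sets $A_{ij}$ themselves (which here need not be monotone) but to the two half-plane conditions that cut each of them out. First I would specialise \eqref{Ass}--\eqref{Add} to cross-ownership of debt only by setting $M^{\rm{s}}_{1,2}=M^{\rm{s}}_{2,1}=0$. Each area then becomes an intersection of two half-planes: for instance $A_{\rm{ss}}=\{(a_1,a_2)\geq 0:\, a_1\geq d_1-M^{\rm{d}}_{1,2}d_2,\; a_2\geq d_2-M^{\rm{d}}_{2,1}d_1\}$, while $A_{\rm{sd}}$ is defined by $a_1+M^{\rm{d}}_{1,2}a_2\geq(1-M^{\rm{d}}_{1,2}M^{\rm{d}}_{2,1})d_1$ together with $a_2<d_2-M^{\rm{d}}_{2,1}d_1$, and $A_{\rm{ds}}$, $A_{\rm{dd}}$ analogously. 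Accordingly I would write each indicator as a product $1_{A_{ij}}=1_{C^{(1)}_{ij}}\cdot 1_{C^{(2)}_{ij}}$ of the indicators of its two defining half-planes.

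The key observation is that every one of these half-plane conditions is monotone in $(M^{\rm{d}}_{1,2},M^{\rm{d}}_{2,1})$. In each inequality the left-hand side is affine and nondecreasing in both parameters, and the right-hand side is affine and nonincreasing; hence, for fixed $(a_1,a_2)$, the truth value of a ``$\geq$''-condition is nondecreasing, and that of a ``$<$''-condition nonincreasing, in each of $M^{\rm{d}}_{1,2}$ and $M^{\rm{d}}_{2,1}$ separately. For example, raising $M^{\rm{d}}_{2,1}$ only tightens $a_2<d_2-M^{\rm{d}}_{2,1}d_1$ (its indicator is nonincreasing), whereas it relaxes $a_1+M^{\rm{d}}_{1,2}a_2\geq(1-M^{\rm{d}}_{1,2}M^{\rm{d}}_{2,1})d_1$ (its indicator is nondecreasing). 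A bounded $\{0,1\}$-valued function of $(M^{\rm{d}}_{1,2},M^{\rm{d}}_{2,1})$ that is monotone in the same direction in each argument converges as $(M^{\rm{d}}_{1,2},M^{\rm{d}}_{2,1})\to(1,1)$, the limit being the supremum (respectively infimum) of its values; this is simply the two-parameter version of the monotone-sequence argument already used in Lemma~\ref{existence.Ass}. Thus each factor $1_{C^{(r)}_{ij}}$ has a pointwise limit, and since the product of two convergent sequences converges, so does $1_{A_{ij}}$.

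The only point requiring care — and the reason the proof of Lemma~\ref{existence.Ass} cannot simply be quoted — is that $A_{\rm{sd}}$ and $A_{\rm{ds}}$ are genuinely non-monotone families of sets: one of their two defining half-planes grows while the other shrinks as the cross-ownership fractions increase, so there is no monotone nesting of the sets to exploit. The factorisation into separately monotone conditions is exactly the device that circumvents this obstacle. A pleasant by-product is that no delicate analysis of boundary points (where a limiting inequality is attained with equality) is needed: each factor converges on its own, so the product converges everywhere, including on such boundaries. The explicit identification of the resulting limit sets $A_{ij}^*$ can then be deferred to Lemma~\ref{gebiete-empty}, which is all that the subsequent computation of $V_1^{\rm{d}^\ast}$ in \eqref{v1d*} relies on.
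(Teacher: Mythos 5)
Your proposal is correct and follows essentially the same route as the paper: specialising the areas to XOS of debt only, splitting each area into its two defining half-plane conditions, noting that each condition's indicator is monotone in the cross-ownership fractions, and concluding via products of convergent indicator functions. The only cosmetic difference is that the paper exploits direct set monotonicity for $A_{\rm{ss}}$ and $A_{\rm{dd}}$ and reserves the factorisation device for the non-monotone families $A_{\rm{sd}}$ and $A_{\rm{ds}}$, whereas you apply the factorisation uniformly to all four.
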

\begin{proof}
Under cross-ownership of debt only, we have
\begin{align}
A_{\rm{ss}}=\{(a_1,a_2) \geq 0:&\,  a_1  \geq d_1- M^{\rm{d}}_{1,2}d_2,\;  a_2  \geq d_2-M^{\rm{d}}_{2,1}d_1\},\label{defass}\\
A_{\rm{sd}}=\{(a_1,a_2) \geq 0: &\, a_1+M^{\rm{d}}_{1,2}a_2  \geq (1- M^{\rm{d}}_{1,2}M^{\rm{d}}_{2,1})d_1,\;  a_2  < d_2-M^{\rm{d}}_{2,1}d_1\},\label{defasd}\\
A_{\rm{ds}}=\{(a_1,a_2) \geq 0:  &\,a_1  < d_1-M^{\rm{d}}_{1,2}d_2,\;  M^{\rm{d}}_{2,1}a_1+a_2  \geq (1- M^{\rm{d}}_{1,2}M^{\rm{d}}_{2,1})d_2\},\label{defads}\\
A_{\rm{dd}}=\{(a_1,a_2) \geq 0:  &\,a_1+M^{\rm{d}}_{1,2}a_2  < (1- M^{\rm{d}}_{1,2}M^{\rm{d}}_{2,1})d_1,\;  M^{\rm{d}}_{2,1}a_1+a_2  < (1- M^{\rm{d}}_{1,2}M^{\rm{d}}_{2,1})d_2\}.\label{defadd}
\end{align}

Let $(M^{\rm{d}}_{1,2,n_1})_{n_1 \in \mathbb{N}}$ and $(M^{\rm{d}}_{2,1,n_2})_{n_2 \in \mathbb{N}}$ be arbitrary, but strictly increasing sequences in $(0,1)$ with limit 1, and let $A_{\rm{ss,n_1,n_2}}$, $A_{\rm{sd},n_1,n_2}$,  $A_{\rm{ds},n_1,n_2}$ and  $A_{\rm{dd},n_1,n_2}$ stand for the Suzuki areas associated with the $n_1$th and $n_2$th element the above sequences. \\
First, it is easy to see from \eqref{defass} that $A_{\rm{ss},n_1,n_2}\subset A_{\rm{ss},n_1+1,n_2}$, i.e. $A_{\rm{ss},n_1,n_2}$ is strictly increasing in $n_1$.  Similarly, $A_{\rm{ss},n_1,n_2}$ is strictly increasing in $n_2$. Hence, also the sequence of indicator functions $1_{A_{\rm{ss},n_1,n_2}}$ is pointwise strictly increasing in $n_1$ and $n_2$, i.e. $\lim_{n_1,n_2  \rightarrow \infty} 1_{A_{\rm{ss},n_1,n_2}}$ exists.

Next,
\begin{equation}
A_{\rm{sd},n_1,n_2}=\underbrace{\{a_1+M^{\rm{d}}_{1,2,n_1}a_2  \geq (1- M^{\rm{d}}_{1,2,n_1}M^{\rm{d}}_{2,1,n_2})d_1,\}}_{:=A_{\rm{sd},n_1,n_2,1}} \cap \underbrace{\{ a_2  < d_2-M^{\rm{d}}_{2,1,n_2}d_1\}}_{:=A_{\rm{sd},n_2,2}},\label{splitAsd}
\end{equation}
where $A_{\rm{sd},{n_1},{n_2},1}$ increases in both $n_1$ and $n_2$ and $A_{\rm{sd},n_2,2}$ decreases in $n_2$. Hence, the limits of the associated (separate) indicator functions exist, and because of $1_{A_{\rm{sd},n_1,n_2}}=1_{A_{\rm{sd},n_1,n_2,1}} \times 1_{A_{\rm{sd},n_2,2}}$ for all $n_1,n_2 \in \mathbb{N}$ by \eqref{splitAsd}, the limit of $1_{A_{\rm{sd},n_1,n_2}}$ exists as well. \\
Analogously we can write
\begin{equation}
A_{\rm{ds},n_1,n_2}=\underbrace{\{a_1  < d_1-M^{\rm{d}}_{1,2,n_1}d_2\}}_{:=A_{\rm{ds},n_1,1}} \cap \underbrace{\{ M^{\rm{d}}_{2,1,n_2}a_1+a_2  \geq (1- M^{\rm{d}}_{1,2,n_1}M^{\rm{d}}_{2,1,n_1,n_2})d_2\}}_{:=A_{\rm{ds},n_1,n_2,2}},\label{splitAds}
\end{equation}
with $A_{\rm{ds},{n_1},1}$ decreasing in $n_1$ and $A_{\rm{ds},n_1,n_2,2}$ increasing in both $n_1$ and $n_2$. Hence, the limits of the related indicator functions exist, and thus also the limit of $1_{A_{\rm{ds},n_1,n_2}}$, if $n_1,n_2$ converge to infinity.

With regard to $A_{\rm{dd},n_1,n_2}$, we can argue similarly to  $A_{\rm{ss},n_1,n_2}$. $A_{\rm{dd},n_1,n_2}$ is strictly decreasing in $n_1$ and $n_2$, i.e. the limit of the associated indicator function for $n_1, n_2 \rightarrow \infty$ exists.
\end{proof}

\begin{lemma} \label{gebiete-empty}
Let $d_1,d_2>0$. With $A_{\rm{ss}}^*$, $A_{\rm{sd}}^*$, $A_{\rm{ds}}^*$ and $A_{\rm{dd}}^*$ as defined in \eqref{Aij*}, we have
\begin{align}
&A_{\rm{dd}}^*=\{(0,0)\} \quad \text{for all $d_1$, $d_2>0$}, \label{addempty}\\
&A_{\rm{sd}}^*= \emptyset \quad \Leftrightarrow \quad d_2 < d_1, \label{asdempty}\\
&A_{\rm{ds}}^*= \emptyset \quad \Leftrightarrow \quad d_1 <  d_2. \label{adsempty}
\end{align}
If $d_1=d_2$, $A_{\rm{sd}}^*$ and $A_{\rm{ds}}^*$ equal the strictly positive $a_1$-axis and $a_2$-axis, respectively.
\end{lemma}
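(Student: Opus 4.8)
The plan is to work directly with the explicit descriptions \eqref{defass}--\eqref{defadd} of the Suzuki areas under cross-ownership of debt only, together with the monotonicity already recorded in the proof of Lemma~\ref{existence}. Since each of the relevant families of sets is monotone in $(M^{\rm{d}}_{1,2},M^{\rm{d}}_{2,1})$, the limiting set $A_{ij}^*$ is just the union of the increasing pieces and the intersection of the decreasing pieces; this reduces every claim to checking, for a fixed point $(a_1,a_2) \geq 0$, whether the defining inequalities eventually hold or eventually fail as the cross-ownership fractions approach $1$.

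First I would treat $A_{\rm{dd}}^*$. Because $A_{\rm{dd}}$ is strictly decreasing in both indices (cf.\ Lemma~\ref{existence}), $A_{\rm{dd}}^*$ is the intersection of the $A_{\rm{dd}}$. For a point with $a_1+a_2>0$, letting $M^{\rm{d}}_{1,2},M^{\rm{d}}_{2,1}\to 1$ sends the left-hand side $a_1+M^{\rm{d}}_{1,2}a_2$ of the first inequality in \eqref{defadd} to $a_1+a_2>0$ while the right-hand side $(1-M^{\rm{d}}_{1,2}M^{\rm{d}}_{2,1})d_1$ tends to $0$; hence the inequality fails once the fractions are close enough to $1$, and the point is excluded. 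At the origin, on the other hand, both inequalities read $0<(1-M^{\rm{d}}_{1,2}M^{\rm{d}}_{2,1})d_i$, which holds for every admissible pair since $d_1,d_2>0$ and $M^{\rm{d}}_{1,2}M^{\rm{d}}_{2,1}<1$. Thus $A_{\rm{dd}}^*=\{(0,0)\}$, giving \eqref{addempty}.

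Next I would handle $A_{\rm{sd}}^*$ via the factorization $1_{A_{\rm{sd}}}=1_{A_{\rm{sd},1}}\times 1_{A_{\rm{sd},2}}$ from \eqref{splitAsd}, so that $A_{\rm{sd}}^*$ is the intersection of the two separate limit sets. The increasing part $A_{\rm{sd},1}=\{a_1+M^{\rm{d}}_{1,2}a_2\geq(1-M^{\rm{d}}_{1,2}M^{\rm{d}}_{2,1})d_1\}$ has as its limit the union $\{(a_1,a_2)\geq 0: a_1+a_2>0\}$, by the same comparison of the two sides as above. The decreasing part $A_{\rm{sd},2}=\{a_2<d_2-M^{\rm{d}}_{2,1}d_1\}$ has as its limit the intersection, which I claim equals $\{a_2\leq d_2-d_1\}$. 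Intersecting yields $A_{\rm{sd}}^*=\{(a_1,a_2)\geq 0: a_1+a_2>0,\ a_2\leq d_2-d_1\}$, and reading off the sign of $d_2-d_1$ gives \eqref{asdempty}: the set is empty exactly when $d_2-d_1<0$, and when $d_1=d_2$ it collapses to $\{(a_1,0): a_1>0\}$, the strictly positive $a_1$-axis. The statement \eqref{adsempty} and the companion description of $A_{\rm{ds}}^*$ then follow by the symmetry interchanging the two firms (indices $1\leftrightarrow 2$ and $d_1\leftrightarrow d_2$) applied to \eqref{splitAds}.

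The one step needing genuine care --- and which I expect to be the main obstacle --- is the boundary analysis for the decreasing factor $A_{\rm{sd},2}$, i.e.\ deciding whether the limiting inequality is $a_2<d_2-d_1$ or $a_2\leq d_2-d_1$. The threshold $d_2-M^{\rm{d}}_{2,1}d_1$ decreases to $d_2-d_1$ but stays \emph{strictly} above it for every $M^{\rm{d}}_{2,1}<1$; hence a point with $a_2=d_2-d_1$ still satisfies the strict inequality $a_2<d_2-M^{\rm{d}}_{2,1}d_1$ for all admissible fractions and therefore belongs to the intersection, whereas any $a_2>d_2-d_1$ is excluded once $M^{\rm{d}}_{2,1}$ is close enough to $1$. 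Getting this endpoint right is precisely what forces $A_{\rm{sd}}^*$ to be the strictly positive axis (rather than empty or the closed axis) when $d_1=d_2$, so it must be argued explicitly rather than by naively passing to the limit inside the inequality.
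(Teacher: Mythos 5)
Your proof is correct and takes essentially the same route as the paper's: both work from the debt-only descriptions \eqref{defass}--\eqref{defadd}, exploit the monotonicity and the factorizations \eqref{splitAsd}/\eqref{splitAds} established in Lemma~\ref{existence}, and arrive at the same explicit limit sets, e.g.\ $A_{\rm{sd}}^*=\{(a_1,a_2)\geq 0:\, a_1+a_2>0,\; a_2\leq d_2-d_1\}$. The only notable difference is that your endpoint analysis (showing the threshold $d_2-M^{\rm{d}}_{2,1}d_1$ stays strictly above $d_2-d_1$, so the limiting inequality is $\leq$ rather than $<$) spells out rigorously what the paper dismisses as ``straightforward to see,'' which is a welcome addition rather than a deviation.
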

\begin{proof}
 Let $(M^{\rm{d}}_{1,2,n_1})_{n_1 \in \mathbb{N}}$, $(M^{\rm{d}}_{2,1,n_2})_{n_2 \in \mathbb{N}}$, $A_{\rm{ss,n_1,n_2}}$, $A_{\rm{sd},n_1,n_2}$,  $A_{\rm{ds},n_1,n_2}$ and  $A_{\rm{dd},n_1,n_2}$ be defined as in the proof of Lemma \ref{existence}. \\
Since $(0,0)\in A_{\rm{dd},n_1,n_2}$ for all $n_1,n_2 \in \mathbb{N}$, i.e. $1_{A_{\rm{dd},n_1,n_2}}(0,0)=1$ for all $n_1,n_2 \in \mathbb{N}$, we also have $\lim_{n_1,n_2 \rightarrow \infty} 1_{A_{\rm{dd},n_1,n_2}}(0,0)=1$, i.e. $(0,0) \in A_{\rm{dd}}^*$. Let us now assume $(a_1^*,a_2^*) \in A_{\rm{dd}}^*$ with $a_1^*+a_2^*>0$, w.l.o.g. let $a_1^*>0$. 
Since $A_{\rm{dd},n_1,n_2}$ is strictly decreasing in $n_1,n_2$ (cf. \eqref{defadd}), we have for all $n_1,n_2 \in \mathbb{N}$,
\begin{align}
a_1^*+M^{\rm{d}}_{1,2,n_1}a_2^* &< (1- M^{\rm{d}}_{1,2,n_1}M^{\rm{d}}_{2,1,n_2})d_1, \label{addbed1}
\end{align}
and as the RHS of \eqref{addbed1} converges to 0 if $n_1$ and $n_2$ go to infinity,  \eqref{addempty} follows.  \\
Let us now assume $d_2 < d_1$ and $(a_1^*,a_2^*)\in A_{\rm{sd}}^*$. Then, by \eqref{splitAsd}, 
\begin{align}
a_2^*&  < d_2-M^{\rm{d}}_{2,1,n_2}d_1 \quad  \text{for all $n_2 \in \mathbb{N}$}. \label{asdbed2}
\end{align}
Since the limit of the RHS of \eqref{asdbed2} for $n_2 \rightarrow \infty$ is negative, such an $(a_1^*,a_2^*)\geq 0$ does  not exist. If $d_2\geq d_1$, it is straightforward to see that $A_{\rm{sd}}^*=\{(a_1,a_2)\geq 0: a_1+a_2>0, a_2 \leq d_2-d_1\}$,  and \eqref{asdempty} follows.  In particular, we have for $d_1=d_2$ that   $A_{\rm{sd}}^*=\{(a_1,a_2)\geq 0: a_1>0, a_2 = 0\}$. 
Analogously, one can show \eqref{adsempty} with the help of \eqref{splitAds}, and we obtain for $d_1 \geq d_1$ that $A_{\rm{ds}}^*=\{(a_1,a_2)\geq 0: a_1+a_2>0, a_1 \leq d_1-d_2\}$, and $A_{\rm{ds}}^*=\{(a_1,a_2)\geq 0:  a_1 =0, a_2>0\}$ if $d_1=d_2$.
\end{proof}

\bigskip

\begin{lemma} \label{inequality}
Let $\mu, \tilde{\mu} \in \mathbb{R}$, $\sigma, \tilde{\sigma}, d_2 \in \mathbb{R}^+$, $\sigma >\tilde{\sigma}$, be such that 
\begin{align}
\exp(\tilde{\mu}+0.5 \tilde{\sigma}^2) &= \exp(\mu+0.5 \sigma^2)+d_2, \label{bed1}\\
(\exp(\tilde{\sigma}^2)-1) \exp(2\tilde{\mu}+\tilde{\sigma}^2) &= (\exp(\sigma^2)-1) \exp(2\mu+\sigma^2) \label{bed2},
\end{align}
which exactly corresponds to the definition of $\tilde{\mu}$ and $\tilde{\sigma}$ in \eqref{defv1d} in Section \ref{d1>d2}. Then 
\begin{align} \label{zuzeigen}
\exp(\tilde{\sigma}\mu-\sigma \tilde{\mu}) <
\frac{\left(\frac{\tilde{\sigma}}{\sigma-\tilde{\sigma}} d_2 \right)^{\tilde{\sigma}}}{\left( \frac{\sigma}{\sigma- \tilde{\sigma}} d_2\right)^{\sigma}}.
\end{align}
\end{lemma}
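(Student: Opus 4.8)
The plan is to avoid the brute-force calculus invited by the explicit form of $\text{LHS}_{\max}$ and instead read \eqref{bed1}--\eqref{bed2} probabilistically. First I would reconstruct the underlying random variables: let $A_1 \sim \mathcal{LN}(\mu,\sigma^2)$, set $V^* := A_1 + d_2$, and let $\tilde V \sim \mathcal{LN}(\tilde\mu,\tilde\sigma^2)$. Then \eqref{bed1} says exactly $E(\tilde V) = E(A_1) + d_2 = E(V^*)$, while \eqref{bed2} says $\mathrm{Var}(\tilde V) = \mathrm{Var}(A_1) = \mathrm{Var}(V^*)$; in particular the two variables share a common finite mean $\nu := E(V^*) = E(\tilde V)$. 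This is precisely the moment-matching setup of Section \ref{d1>d2}, so the inequality \eqref{zuzeigen} is nothing but the assertion $\text{LHS}_{\max} > \exp(\tilde\sigma\mu - \sigma\tilde\mu)$ encountered there.

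Next I would translate \eqref{zuzeigen} into a statement about the distribution functions $F_{V^*}$ and $F_{\tilde V}$. Using $\sigma,\tilde\sigma > 0$ and the strict monotonicity of $\Phi$, the same cross-multiplication as in Section \ref{d1>d2} shows, for every $d_1 > d_2$, that $F_{V^*}(d_1) \leq F_{\tilde V}(d_1)$ holds if and only if $(d_1 - d_2)^{\tilde\sigma}/d_1^{\sigma} \leq \exp(\tilde\sigma\mu - \sigma\tilde\mu)$. Because the bell-shaped function $d_1 \mapsto (d_1-d_2)^{\tilde\sigma}/d_1^{\sigma}$ attains its supremum $\text{LHS}_{\max}$ at $d_{1,\max}$ (this is where the hypothesis $\sigma > \tilde\sigma$ enters, exactly as in the main text), proving \eqref{zuzeigen} is equivalent to exhibiting at least one $d_1 > d_2$ with $F_{V^*}(d_1) > F_{\tilde V}(d_1)$.

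I would establish this by contradiction. Suppose $F_{V^*}(d_1) \leq F_{\tilde V}(d_1)$ for all $d_1 > d_2$. For $0 < d_1 \leq d_2$ one has $F_{V^*}(d_1) = P(A_1 < d_1 - d_2) = 0 < F_{\tilde V}(d_1)$, since $A_1 > 0$ while the lognormal $\tilde V$ charges every interval $(0,d_1)$. Hence $F_{V^*} \leq F_{\tilde V}$ on all of $(0,\infty)$, with strict inequality throughout $(0,d_2]$. The survival-function representation of the mean for nonnegative variables then gives $E(V^*) - E(\tilde V) = \int_0^\infty \big(F_{\tilde V}(x) - F_{V^*}(x)\big)\,dx \geq \int_0^{d_2} F_{\tilde V}(x)\,dx > 0$, so $E(V^*) > E(\tilde V)$. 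This contradicts $E(V^*) = E(\tilde V) = \nu$, and the desired strict crossing --- hence \eqref{zuzeigen} --- follows.

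The only genuine subtlety is the careful handling of the boundary range $0 < d_1 \leq d_2$, where the shifted lognormal $V^*$ places no mass below $d_2$ while $\tilde V$ does: this is exactly what produces a set of positive Lebesgue measure on which the two distribution functions differ strictly, and therefore forces the strict inequality of the means. Everything else is the routine monotone passage between the quantile inequality and the distribution-function inequality, together with the identity $E(X) = \int_0^\infty (1 - F_X)\,dx$; crucially, no estimation of the transcendental quantities $\sqrt{e^{\sigma^2}-1}$ and $\sqrt{e^{\tilde\sigma^2}-1}$ is required, which is what makes this route preferable to a direct analytic attack on \eqref{zuzeigen}.
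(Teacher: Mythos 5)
Your proof is correct, and it takes a genuinely different route from the paper's. The paper's own proof is analytic: it solves \eqref{bed1}--\eqref{bed2} for $\tilde{\mu}$ and $\ln(d_2)$ in terms of $\mu$, $\sigma$, $\tilde{\sigma}$, reduces \eqref{zuzeigen} to the two-variable inequality \eqref{eq34}, shows via differentiation and a difference-quotient (concavity of $\ln$) argument that the left-hand side of \eqref{eq34} is strictly decreasing in $\sigma$, and concludes by computing the limit $\sigma \searrow \tilde{\sigma}$. You instead read the hypotheses probabilistically: \eqref{bed1} is exactly $E(\tilde{V}) = E(V^*)$ for $V^* = A_1 + d_2$, the cross-multiplication already carried out in Section \ref{d1>d2} turns \eqref{zuzeigen} into the assertion that $F_{V^*}(d_1) > F_{\tilde{V}}(d_1)$ for at least one $d_1 > d_2$, and this crossing is forced by equality of means: since $F_{V^*} \equiv 0 < F_{\tilde{V}}$ on $(0,d_2]$, the assumption $F_{V^*} \le F_{\tilde{V}}$ on all of $(0,\infty)$ would yield $E(V^*) - E(\tilde{V}) = \int_0^\infty \bigl(F_{\tilde{V}}(x) - F_{V^*}(x)\bigr)\,dx \ge \int_0^{d_2} F_{\tilde{V}}(x)\,dx > 0$, a contradiction. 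All supporting steps are sound: the pointwise equivalence uses only the strict monotonicity of $\Phi$ and $\sigma\tilde{\sigma}>0$, and the fact that $d_1 \mapsto (d_1-d_2)^{\tilde{\sigma}}/d_1^{\sigma}$ attains the maximum $\text{LHS}_{\max}$ of \eqref{lhsmax} at $d_{1,\max}$ (requiring only $\sigma > \tilde{\sigma}$) is established in the main text. Notably, your argument never uses the variance condition \eqref{bed2} --- equal means and the hypothesis $\sigma > \tilde{\sigma}$ suffice --- so you in fact prove a stronger statement, and your route explains conceptually why the inequality holds: a mean-matched distribution that places mass below $d_2$ must be overtaken somewhere above $d_2$ by the shifted lognormal, which places none. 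What the paper's computation buys in exchange is self-containedness: it verifies \eqref{zuzeigen} directly from the parameter equations without appeal to the probabilistic framework, at the cost of delicate estimates involving $\sqrt{\exp(\sigma^2)-1}$ and $\sqrt{\exp(\tilde{\sigma}^2)-1}$.
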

\begin{proof}
First, \eqref{bed1} and \eqref{bed2} imply that
\begin{align*}
\tilde{\mu}&=\mu+0.5 \sigma^2-0.5 \tilde{\sigma}^2+\ln \left(\frac{\sqrt{\exp(\sigma^2)-1}}{\sqrt{\exp(\tilde{\sigma}^2)-1}} \right),\\
\ln(d_2)&=\mu+0.5 \sigma^2+ \ln \left(\frac{\sqrt{\exp(\sigma^2)-1}}{\sqrt{\exp(\tilde{\sigma}^2)-1}}-1 \right).
\end{align*}
Hence,
\begin{alignat}{2} 
\eqref{zuzeigen}  \Leftrightarrow \quad 0>\,&\tilde{\sigma} \mu-\tilde{\mu} \sigma +(\sigma-\tilde{\sigma}) \ln(d_2)-(\sigma-\tilde{\sigma})\ln(\sigma-\tilde{\sigma})-\tilde{\sigma} \ln(\tilde{\sigma})+\sigma \ln(\sigma) \notag\\
\Leftrightarrow \quad 0>\,&-\sigma \ln \left(\frac{\sqrt{\exp(\sigma^2)-1}}{\sqrt{\exp(\tilde{\sigma}^2)-1}} \right)+0.5 \tilde{\sigma}^2 \sigma-0.5 \tilde{\sigma} \sigma^2 -\tilde{\sigma} \ln(\tilde{\sigma})+\sigma \ln(\sigma)  \notag\\
		&\quad+(\sigma-\tilde{\sigma})\ln\left( \frac{\sqrt{\exp(\sigma^2)-1}}{\sqrt{\exp(\tilde{\sigma}^2)-1}}-1\right) -(\sigma-\tilde{\sigma})\ln(\sigma-\tilde{\sigma}) \notag\\
\Leftrightarrow \quad 0>\,&(\sigma-\tilde{\sigma})\left[\ln \left(\frac{\sqrt{\exp(\sigma^2)-1}}{\sqrt{\exp(\tilde{\sigma}^2)-1}}-1 \right) -\ln(\sigma-\tilde{\sigma})-0.5 \sigma \tilde{\sigma} \right]  \notag\\
&\quad-\tilde{\sigma} \ln(\tilde{\sigma})+\sigma \ln \left(\frac{\sigma \sqrt{\exp(\tilde{\sigma}^2)-1}}{\sqrt{\exp(\sigma^2)-1}} \right). \notag
\end{alignat}
Due to $\sigma > \tilde{\sigma}$, it is sufficient for \eqref{zuzeigen} to show 
\begin{align}
 &(\sigma-\tilde{\sigma}) \left[\ln \left( \frac{\sqrt{\exp(\sigma^2)-1}}{\sqrt{\exp(\tilde{\sigma}^2)-1}}-1\right) -\ln(\sigma-\tilde{\sigma}) \right] -\tilde{\sigma} \ln(\tilde{\sigma})+\sigma \ln \left(\frac{\sigma \sqrt{\exp(\tilde{\sigma}^2)-1}}{\sqrt{\exp(\sigma^2)-1}} \right)<0 \notag 
\intertext{or equivalently}
&\begin{aligned}
(\sigma-\tilde{\sigma})\ln \left(\frac{\sqrt{\exp(\sigma^2)-1}-\sqrt{\exp(\tilde{\sigma}^2)-1}}{\sigma- \tilde{\sigma}} \right) -\sigma \ln &\left(\frac{\sqrt{\exp(\sigma^2)-1}}{\sigma} \right)\\
 & < - \tilde{\sigma} \ln \left(\frac{\sqrt{\exp(\tilde{\sigma}^2)-1}}{\tilde{\sigma}} \right). \label{eq34} \end{aligned}
\end{align}
For that, we consider the derivative of the LHS of \eqref{eq34} with respect to $\sigma$:
\begin{align*}
\frac{\partial }{\partial \sigma} & \left[(\sigma-\tilde{\sigma})\ln \left(\frac{\sqrt{\exp(\sigma^2)-1}-\sqrt{\exp(\tilde{\sigma}^2)-1}}{\sigma- \tilde{\sigma}} \right) -\sigma \ln \left(\frac{\sqrt{\exp(\sigma^2)-1}}{\sigma} \right) \right]\\
&=\ln \left(\frac{\sqrt{\exp(\sigma^2)-1}-\sqrt{\exp(\tilde{\sigma}^2)-1}}{\sigma-\tilde{\sigma}} \right) \\
&\quad +\frac{(\sigma-\tilde{\sigma})^2 \left[\frac{\exp(\sigma^2)\sigma}{\sqrt{\exp(\sigma^2)-1}} (\sigma-\tilde{\sigma})-\left(\sqrt{\exp(\sigma^2)-1}-\sqrt{\exp(\tilde{\sigma}^2)-1}\right) \right]}{\left(\sqrt{\exp(\sigma^2)-1}-\sqrt{\exp(\tilde{\sigma}^2)-1}\right)(\sigma-\tilde{\sigma})^2}\\
&\quad -\ln\left(\frac{\sqrt{\exp(\sigma^2)-1}}{\sigma}\right)- \frac{\sigma^2 \left[\frac{\exp(\sigma^2)\sigma}{\sqrt{\exp(\sigma^2)-1}}-\sqrt{\exp(\sigma^2)-1} \right]}{\sqrt{\exp(\sigma^2)-1}\, \sigma^2}\\
&=\ln \left(\frac{\sigma}{\sqrt{\exp(\sigma^2)-1}}\right)-\ln\left(\frac{\sigma-\tilde{\sigma}}{\sqrt{\exp(\sigma^2)-1}-\sqrt{\exp(\tilde{\sigma}^2)-1}}\right)\\
&\quad+\frac{\exp(\sigma^2)\sigma}{\sqrt{\exp(\sigma^2)-1}}\left[\frac{\sigma-\tilde{\sigma}}{\sqrt{\exp(\sigma^2)-1}-\sqrt{\exp(\tilde{\sigma}^2)-1}}-\frac{\sigma}{\sqrt{\exp(\sigma^2)-1}} \right].
\end{align*}
Because of $\frac{\sigma-\tilde{\sigma}}{\sqrt{\exp(\sigma^2)-1}-\sqrt{\exp(\tilde{\sigma}^2)-1}}-\frac{\sigma}{\sqrt{\exp(\sigma^2)-1}} <0$, this derivative is negative if and only if
\begin{align}
\frac{\ln \left(\frac{\sigma}{\sqrt{\exp(\sigma^2)-1}}\right)-\ln \left(\frac{\sigma-\tilde{\sigma}}{\sqrt{\exp(\sigma^2)-1}-\sqrt{\exp(\tilde{\sigma}^2)-1}}\right)}{\frac{\sigma}{\sqrt{\exp(\sigma^2)-1}}-\frac{\sigma-\tilde{\sigma}}{\sqrt{\exp(\sigma^2)-1}-\sqrt{\exp(\tilde{\sigma}^2)-1}}} < \frac{\exp(\sigma^2)\sigma}{\sqrt{\exp(\sigma^2)-1}}. \label{eq5}
\end{align}
Since the LHS of \eqref{eq5} can be interpreted as the difference quotient of the concave logarithmic function  in $x_0=\frac{\sigma}{\sqrt{\exp(\sigma^2)-1}}$ and $x=\frac{\sigma-\tilde{\sigma}}{\sqrt{\exp(\sigma^2)-1}-\sqrt{\exp(\tilde{\sigma}^2)-1}}$, the LHS of \eqref{eq5} is strictly decreasing in $x$ and thus strictly increasing in $\tilde{\sigma}$. From 
\begin{align*}
\lim_{\tilde{\sigma} \nearrow \sigma} \frac{\sigma-\tilde{\sigma}}{\sqrt{\exp(\sigma^2)-1}-\sqrt{\exp(\tilde{\sigma}^2)-1}} &=\left(\lim_{\tilde{\sigma} \nearrow \sigma} \frac{\sqrt{\exp(\sigma^2)-1}-\sqrt{\exp(\tilde{\sigma}^2)-1}}{\sigma-\tilde{\sigma}}\right)^{-1}\\
&=\left(\frac{\partial}{\partial \sigma} \sqrt{\exp(\sigma^2)-1}\right)^{-1}=\frac{\sqrt{\exp(\sigma^2)-1}}{\exp(\sigma^2)\sigma}
\end{align*}
it follows that the LHS of \eqref{eq5} is smaller than 
\[
\frac{\ln\left(\frac{\sigma}{\sqrt{\exp(\sigma^2)-1}}\right)-\ln\left(\frac{\sqrt{\exp(\sigma^2)-1}}{\exp(\sigma^2)\sigma}\right)}{\frac{\sigma}{\sqrt{\exp(\sigma^2)-1}}-\frac{\sqrt{\exp(\sigma^2)-1}}{\exp(\sigma^2)\sigma}} < \frac{\exp(\sigma^2)\sigma}{\sqrt{\exp(\sigma^2)-1}},
\]
where the last inequality follows from straightforward calculations and the fact that $\ln(x)<x-1 \;\forall \; x>0$. Thus, \eqref{eq5} is met for all $\tilde{\sigma}<\sigma$, 
i.e. the LHS of \eqref{eq34} is strictly decreasing in $\sigma$ for all $\tilde{\sigma}>0$. Hence,  for \eqref{eq34} it only remains to show that \eqref{eq34} holds in the limit of $\sigma \searrow \tilde{\sigma}$.
Because of 
\[
\left| \lim_{\sigma \searrow \tilde{\sigma}} \frac{\sqrt{\exp(\sigma^2)-1}-\sqrt{\exp(\tilde{\sigma}^2)-1}}{\sigma- \tilde{\sigma}} \right| = \left|\frac{\partial}{\partial \tilde{\sigma}} \sqrt{\exp(\tilde{\sigma}^2)-1} \right| = \left|\frac{\exp(\tilde{\sigma}^2)\tilde{\sigma}}{\sqrt{\exp(\tilde{\sigma}^2)-1}} \right|<\infty,
\]
 we  have
\begin{align*}
\lim_{\sigma \searrow \tilde{\sigma}} &\left[(\sigma-\tilde{\sigma})\ln \left(\frac{\sqrt{\exp(\sigma^2)-1}-\sqrt{\exp(\tilde{\sigma}^2)-1}}{\sigma- \tilde{\sigma}} \right) -\sigma \ln \left(\frac{\sqrt{\exp(\sigma^2)-1}}{\sigma} \right) \right] \\
&= - \tilde{\sigma} \ln \left(\frac{\sqrt{\exp(\tilde{\sigma}^2)-1}}{\tilde{\sigma}} \right),
\end{align*}
i.e. \eqref{eq34}  and \eqref{zuzeigen} follow.
\end{proof}

\end{document}